\documentclass[]{gradient} 
\usepackage{array}

\usepackage{microtype}
\usepackage{makecell}
\usepackage{textgreek}
\usepackage{graphicx}
\graphicspath{{./}}
\usepackage{multirow}

\usepackage{booktabs}
\usepackage{xcolor}
\usepackage{enumitem}
\usepackage{hyperref}

\usepackage{subcaption}

\usepackage{amsmath}
\usepackage{amssymb}
\usepackage{mathtools}
\usepackage{amsthm}
\usepackage{tabularx}
\usepackage{pgfplots}
\usepgfplotslibrary{groupplots}
\pgfplotsset{compat=1.18}
\definecolor{revblue}{RGB}{0,82,155}
\theoremstyle{plain}
\newtheorem{theorem}{Theorem}[section]

\newtheorem{lemma}[theorem]{Lemma}
\newtheorem{corollary}[theorem]{Corollary}
\theoremstyle{definition}

\newtheorem{assumption}[theorem]{Assumption}
\theoremstyle{remark}
\newtheorem{remark}[theorem]{Remark}

\usepackage{natbib}

\usepackage{colortbl}
\usepackage{wrapfig}

\usepackage{mathpazo}

\definecolor{bestbg}{RGB}{221,235,228}

\newcommand{\best}[1]{\cellcolor{bestbg}\textbf{#1}}
\newcommand{\second}[1]{\underline{#1}}
\newcommand{\sideann}[1]{%
  \makebox[0pt][l]{\hspace{0.25em}{\scriptsize\textcolor{black}{#1}}}%
}
\newcommand{\oursup}[2]{%
  \cellcolor{bestbg}\textbf{#1}\sideann{$\uparrow$#2}%
}
\newcommand{\oursdown}[2]{%
  #1\sideann{$\downarrow$#2}%
}

\usepackage[linesnumbered,ruled,vlined]{algorithm2e}
\usepackage{algorithmic}
\usepackage{listings}
\lstdefinelanguage{yaml}{
  keywords={true, false, null, yes, no},
  keywordstyle=\color{blue}\bfseries,
  sensitive=false,
  comment=[l]{\#},
  commentstyle=\color{gray}\ttfamily,
  stringstyle=\color{red}\ttfamily,
  morestring=[b]',
  morestring=[b]"
}
\lstdefinestyle{configstyle}{
  language=yaml,
  backgroundcolor=\color{gray!8},
  basicstyle=\ttfamily\footnotesize,
  breaklines=true,
  frame=single,
  rulecolor=\color{black!30},
  numbers=left,
  numberstyle=\tiny\color{gray},
  numbersep=5pt,
  xleftmargin=15pt,
  framexleftmargin=10pt,
  columns=flexible,
  keepspaces=true,
  showstringspaces=false,
}

\definecolor{hl1}{RGB}{255,240,200} 
\definecolor{hl2}{RGB}{220,245,255} 
\definecolor{hl3}{RGB}{230,255,230} 
\definecolor{hl4}{RGB}{255,225,235} 
\definecolor{hl5}{RGB}{235,230,255} 

\newcommand{\HLtxtA}[1]{\colorbox{hl1}{\ttfamily\strut\detokenize{#1}}}
\newcommand{\HLtxtB}[1]{\colorbox{hl2}{\ttfamily\strut\detokenize{#1}}}
\newcommand{\HLtxtC}[1]{\colorbox{hl3}{\ttfamily\strut\detokenize{#1}}}
\newcommand{\HLtxtD}[1]{\colorbox{hl4}{\ttfamily\strut\detokenize{#1}}}
\newcommand{\HLtxtE}[1]{\colorbox{hl5}{\ttfamily\strut\detokenize{#1}}}

\lstdefinestyle{configstyleHL}{
  style=configstyle,
  moredelim=**[is][\HLone]{@@}{@@},
  moredelim=**[is][\HLtwo]{||}{||},
  moredelim=**[is][\HLthree]{<<}{>>},
  moredelim=**[is][\HLfour]{==}{==},
  moredelim=**[is][\HLfive]{!!}{!!}
}

\newcommand{\Prb}{\mathbb{P}}
\newcommand{\R}{\mathbb{R}}
\newcommand{\E}{\mathbb{E}}
\DeclarePairedDelimiter{\norm}{\lVert}{\rVert}
\usepackage{pifont}
\newcommand{\cmark}{\textcolor{green}{\ding{51}}}
\newcommand{\xmark}{\textcolor{red}{\ding{55}}}

\usepackage[most]{tcolorbox}
\tcbuselibrary{breakable,skins}
\usepackage{fvextra}
\newtcolorbox{CodeBox}[1][]{%
  enhanced,
  breakable,
  colback=white,
  colframe=black!25,
  boxrule=0.4pt,
  arc=1.2mm,
  left=1.5mm,right=1.5mm,top=1mm,bottom=1mm,
  #1
}

\definecolor{symphonyRow}{RGB}{235,245,255}
\definecolor{symphonyRow2}{RGB}{220,235,255}
\title{Symphony-Coord: Adaptive Routing for Multi-Agent LLM Systems}
\author{
    Zhaoyang Guan\textsuperscript{1,*}
    \quad Huixi Cao\textsuperscript{2,*}
    \quad Ming Zhong\textsuperscript{3}
    \quad Yin Wang\textsuperscript{2}
    \quad Guanyu Liu\textsuperscript{6}
    \quad Eric Yang\textsuperscript{5}
    \\
    Lynn Ai\textsuperscript{5}
    \quad Yongxin Ni\textsuperscript{4,$\dagger$}
    \quad Bill Shi\textsuperscript{5,$\dagger$}
}

\affiliation{
    \textsuperscript{1}Engineering Sciences and Applied Mathematics, Northwestern University\\
    \textsuperscript{2}New York University
    \textsuperscript{3}Independent Researcher
    \quad \textsuperscript{4}National University of Singapore
    \quad \textsuperscript{5}Gradient\\
    \textsuperscript{6}University of Macau
    \textsuperscript{*}Equal contribution.
}

\date{Jan 30, 2026}

\correspondence{
    Yongxin Ni: niyongxin@u.nus.edu
    \quad Bill Shi: tianyu.s@gradient.network
}

\project{Bill Shi}
\sourcecode{https://github.com/GradientHQ/symphony-coord}

\begin{document}

\abstract{Multi-agent large language model systems can tackle complex multi-step tasks
by decomposing work and coordinating specialized behaviors. However, current
coordination mechanisms typically rely on statically assigned roles and
centralized controllers. As agent pools and task distributions evolve, these
design choices can lead to inefficient routing, poor adaptability, and fragile
fault recovery. We introduce Symphony-Coord, a task-local coordination framework with
decentralized execution that transforms agent selection into an online
multi-armed bandit problem. Instead of relying on a fixed task-to-role map,
Symphony-Coord allows routing specializations to emerge from interaction and
feedback. The framework employs a two-stage dynamic beacon protocol:(i) a lightweight candidate screening mechanism to limit communication and
computation overhead; and (ii) an adaptive LinUCB selector that routes subtasks
using context features derived from task requirements and agent states, updated
through delayed post-execution feedback. Under candidate-conditional linear
bandit assumptions, we prove sublinear regret bounds for the immediate-feedback
selector and explicitly separate the deferred-update effects introduced by
post-vote rewards.
Validation through simulation experiments and real-world large language model
benchmarks shows that Symphony-Coord improves task routing efficiency and
recovery behavior under distribution shifts and agent failures.}
\maketitle
\raggedbottom

\section{Introduction}
\label{sec:intro}
The rapid progress of large language models (LLMs) has fueled growing interest in multi-agent systems, where multiple LLM agents decompose and collaborate to solve complex, multi-step reasoning tasks~\citep{guo2024llmsurvey}. Early multi-agent frameworks such as AutoGen and CrewAI enable building LLM applications via multiple agents that converse and collaborate, typically instantiated through an explicit orchestration structure (e.g., manager/controller patterns) to coordinate interaction and execution~\citep{wu2023autogen,crewai2024}. Related LLM-agent software engineering systems similarly employ structured multi-agent communication and coordination pipelines~\citep{qian2024chatdev,hong2023metagpt}. However, as the agent pool grows and agent heterogeneity increases, orchestration can face engineering and systems challenges (e.g., communication overhead, coordination complexity, and robustness concerns), which are widely discussed as open challenges in LLM-based multi-agent systems~\citep{guo2024llmsurvey}.

A promising alternative is \emph{decentralized execution with task-local coordination}: worker agents remain independent executors, while a lightweight router coordinates the candidate set, dispatch, and feedback update for each task instance. This is distinct from a fully centralized controller that prescribes a global workflow, but it is also not free-form peer negotiation. Across both centralized and partially decentralized designs, many systems still rely on \emph{static, pre-defined roles}. For example, role-specialized agents are a common design choice in multi-agent software development and collaboration frameworks~\citep{qian2024chatdev,hong2023metagpt}. This discrete setting is convenient to implement but often does not match real-world capabilities: the effectiveness of the same agent varies with the task context, input distribution, and system operating conditions~\citep{guo2024llmsurvey}. Therefore, static roles typically lead to three types of problems. \textbf{(1)} Routing efficiency deteriorates because these role labels are often too general and fail to distinguish subtle differences in the specific capabilities of the models. \textbf{(2)} When task distribution changes, the system struggles to switch calls to a more suitable agent in a timely manner because the mapping from roles to tasks is fixed during the design phase. \textbf{(3)} When a highly capable agent experiences performance degradation or becomes temporarily unavailable, static allocation lacks a fast and reliable replacement mechanism, easily leading to continuous performance decline~\citep{guo2024llmsurvey}.

\begin{figure}[!ht]
\centering
\includegraphics[width=1\linewidth]{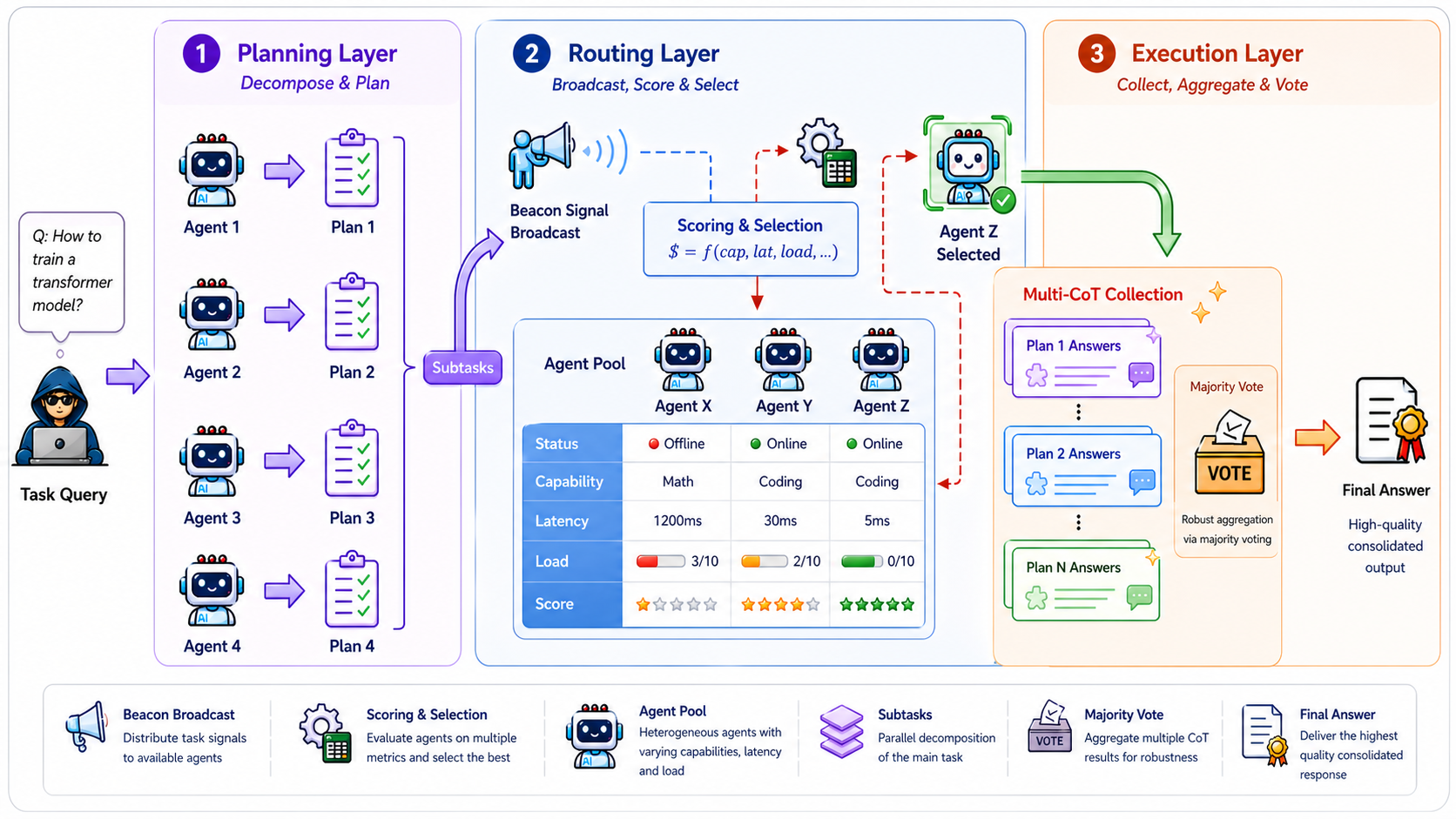}
\caption{Symphony-Coord Overview}
\label{fig:overview}
\end{figure}
Some dialogue-based frameworks (e.g., CAMEL) attempt to soften the rigidity of static roles via prompt-based ``role playing'' mechanisms (e.g., inception prompting) for autonomous cooperation among communicative agents~\citep{li2023camel}. However, externally imposed personas can be unstable and do not necessarily translate into sustained execution quality. More importantly, such approaches often lack a principled decision mechanism: the system does not learn from experience \emph{which agent is most suitable for a subtask under the current context and system state}, which remains a central challenge for scalable LLM multi-agent design~\citep{guo2024llmsurvey}.

In this paper, \emph{emergent coordination} refers to runtime-induced routing specialization rather than to manually assigned planner/solver/checker identities or unconstrained agent negotiation. Candidate agents may differ ex ante through backbones, capability tags, or prompt-conditioned variants, but Symphony-Coord does not bind these candidates to a fixed task-to-role map. The division of labor is instead expressed by the router's allocation pattern as it adapts to task features, runtime state, and feedback.

Our key perspective is that \emph{agent selection should be treated as an online decision-making problem}. When a task arrives, the system selects an executor based on observable context, then updates its routing policy after receiving feedback, aligning with contextual bandit formulations~\citep{li2010contextual,lattimore2020bandit}. This naturally involves an \emph{exploration--exploitation} trade-off that is fundamental in bandit learning: exploitation favors agents with higher estimated reward, while exploration tests uncertain candidates to reduce uncertainty and detect non-stationarity~\citep{bubeck2012regret,lattimore2020bandit}. This paradigm reduces the risk of repeatedly assigning tasks to degraded agents due to inertia, while also avoiding long-term budget waste on persistently suboptimal candidates~\citep{bubeck2012regret,lattimore2020bandit}.

Based on this idea, we propose Symphony-Coord, a decentralized multi-agent framework that models agent selection as an \emph{online contextual bandit} problem~\citep{li2010contextual,lattimore2020bandit}. Symphony-Coord introduces a two-stage \emph{dynamic Beacon} routing protocol to achieve both scalability and adaptivity. In Stage~1, the system issues lightweight Beacon queries and performs \textbf{Top-$L$ candidate filtering} using prior capability signals and constraints, sharply reducing the candidate set to control communication and compute costs (a concern emphasized in multi-agent surveys)~\citep{guo2024llmsurvey}. In Stage~2, Symphony-Coord applies a \textbf{LinUCB}-style online selector over the shortlisted candidates~\citep{li2010contextual}. The selection score combines: (a) the current estimated reward (\emph{exploitation}) and (b) an uncertainty-driven upper-confidence bonus (\emph{exploration}), and is updated online after execution feedback~\citep{li2010contextual,lattimore2020bandit}. Concretely, Symphony-Coord constructs a context vector for each candidate by fusing task requirement features, the agent's static capability representation, and dynamic runtime signals (i.e., latency, load, reliability, availability, cost, throughput, and quota / rate limit). After task completion, corresponding feedback of these signals is used to update the LinUCB statistics~\citep{li2010contextual}, enabling adaptive routing that remains robust under task distribution shifts and agent failures/degradation (motivated by multi-agent challenges and robustness considerations)~\citep{guo2024llmsurvey}. Accordingly, Symphony-Coord should be viewed as a task-local online routing formulation and system architecture for heterogeneous LLM-agent pools, rather than a standalone new bandit algorithm. Our contributions are as follows:
\begin{enumerate}[leftmargin=*, itemsep=2pt, topsep=2pt]
\item \textbf{Problem formulation.} We formalize subtask–agent routing as an online contextual bandit problem, treating each agent as an arm and constructing context features that integrate task requirements and agent states.
\item \textbf{Two-stage dynamic Beacon routing.} We propose a practical routing pipeline: Top-$L$ candidate filtering reduces communication and inference overhead (a key systems challenge in LLM multi-agent settings), while LinUCB performs adaptive selection within the retained candidate set with an explicit exploration bonus.
\item \textbf{Online learning closed loop.} We design a feedback-to-update mechanism that continuously improves routing decisions end-to-end, aligning with the sequential decision-making view of contextual bandits.
\item \textbf{Theory scope.} Under standard candidate-conditional linear bandit assumptions, we prove a high-probability sublinear regret bound for the selector and make explicit how Top-$L$ filtering and delayed post-vote feedback change the interpretation of this bound through filtering and deferred-update terms.

\item \textbf{Empirical validation.} We evaluate Symphony-Coord in simulation, real LLM benchmarks, matched-budget routing controls, component ablations, and routing diagnostics, following the common evaluation emphasis in LLM-based multi-agent research.
\end{enumerate}

\section{Related Work}
\label{sec:related}

Our work builds upon two primary streams of research: multi-agent systems for LLMs and online learning for resource allocation.

\subsection{Multi-Agent Systems for LLMs}

The landscape of LLM-based multi-agent systems has evolved rapidly, with recent surveys providing comprehensive overviews \citep{guo2024llmsurvey}. Centralized systems like AutoGen \citep{wu2023autogen} and MetaGPT \citep{hong2023metagpt} utilize a central controller to manage workflows, which simplifies coordination but introduces scalability bottlenecks and single points of failure. CrewAI \citep{crewai2024} similarly adopts centralized task management with predefined roles. ChatDev \citep{qian2024chatdev} introduced a chat-powered framework for software engineering with specialized agents communicating through structured pipelines. Nonetheless, it relies on static role assignments, such as designer and tester.

Recent work has explored more flexible agent collaboration. The CAMEL framework \citep{li2023camel} introduced inception prompting to enable autonomous cooperation between communicative agents, highlighting the potential for agents to complete complex tasks with minimal human intervention. GPTSwarm \citep{zhuge2024gptswarm} proposed graph-based agent composition with dynamic routing, while AFLOW \citep{ma2024aflow} explored automated workflow generation for multi-agent systems. Similarly, MorphAgent \citep{lu2024morphagent} introduced a decentralized framework that empowers agents to continuously refine their roles through self-evolving profiles and runtime feedback. Symphony-Coord is closest in spirit to this line of adaptive coordination work, but differs in its focus on subtask-level executor selection as an online routing problem with an explicit Top-$L$ prescreen, runtime-state features, and post-execution feedback. Thus, the novelty is not a new base bandit algorithm; it is the integration of capability-aware filtering, contextual selection, and feedback-driven routing specialization for heterogeneous LLM-agent pools. For a qualitative, framework-level comparison of representative systems and routing designs, see Appendix~\ref{Comparsion}.

\subsection{Contextual Bandits for Resource Allocation}

The multi-armed bandit framework provides a principled approach to sequential decision-making under uncertainty \citep{lattimore2020bandit, bubeck2012regret}. Contextual bandits extend this paradigm by conditioning arm selection on observable features, with LinUCB \citep{li2010contextual} being widely adopted due to its computational efficiency and strong theoretical properties. Upper Confidence Bound (UCB) algorithms \citep{auer2002ucb} provide frequentist guarantees with explicit exploration bonuses, forming the theoretical foundation for our approach.

Recent work has applied bandit algorithms to resource allocation in distributed systems \citep{ding2013resource}, model selection in contextual stochastic bandits \citep{pacchiano2020model}, and adaptive computing \citep{lattimore2020bandit}. However, the application to LLM-based multi-agent orchestration presents unique challenges: heterogeneous agents with varying capabilities, high-dimensional context spaces combining task semantics with runtime states, delayed post-execution feedback, and multi-objective rewards balancing correctness, latency, and cost. We address this setting with a two-stage protocol that integrates capability-aware filtering with LinUCB-based online selection, and we state the resulting regret analysis as a candidate-conditional guarantee for the selector rather than as a complete model of all nonlinear LLM-agent interactions.

\section{Methodology}
\label{sec:method}

The core innovation of Symphony-Coord is the transformation of agent selection from a static matching problem into an adaptive online
learning process. We achieve this through a two-stage protocol that balances efficiency and adaptivity, allowing routing specialization
to form dynamically from interactions between agents and tasks.

\subsection{Problem Formulation: Agent Selection as a Two-Stage Contextual Bandit}
\label{sec:formulation}

We formulate executor agent selection in Symphony-Coord as a two-stage contextual bandit process aligned with our
implementation. Upon receiving a user task, the system broadcasts it to $k$ planning agents to obtain a decomposition plan
that yields a sequence of subtasks. The bandit decision happens at the subtask execution level: at each time step $t$, a task arrives and the system selects one executor agent from the currently available pool
$\mathcal{A}_t \subseteq \{1,\dots,N\}$ to execute it.

\paragraph{Feedback signal and objective.}
The online selector is updated with a shaped reward signal. In our implementation, the reward
combines: (i) a base success indicator for valid, non-error completion, (ii) a winner bonus when an execution agrees with the
voted final output for the current task, (iii) an optional correctness term when gold labels are available, and (iv) an efficiency penalty based on latency.
This feedback can be delayed: under multi-path execution and voting in Appendix~\ref{sec:exec-vote}, we record step-level context features during execution and construct rewards only after the task-level vote is resolved.
The winner bonus is therefore an agreement-based proxy rather than a correctness oracle: it is capped, combined with validity and latency signals, and further examined in the reward-sensitivity study in Section~\ref{sec:post_vote_reward}.
This distinction is important because majority agreement may reinforce common wrong answers when agents share the same bias.
Symphony-Coord mitigates this risk by freezing the selector during held-out evaluation, using gold correctness only in adaptation regimes where labels are available, and supporting deployment-time verifier gates, diversity monitoring, and down-weighting of the agreement bonus when consensus conflicts with external checks.
Our objective is to learn a selection policy that maximizes cumulative expected reward:
\begin{equation}
\max \sum_{t=1}^{T} \mathbb{E}[r_t].
\label{eq:objective}
\end{equation}

During held-out test evaluation, the selector is frozen: no bandit update is performed on test samples, and gold labels are used only offline to compute reported accuracy. Thus, Eq.~\eqref{eq:objective} and the shaped reward in Appendix~\ref{sec:post_vote_reward} describe the training and online-adaptation mechanism, not a test-time use of ground-truth labels.

\paragraph{Stage 1: Top-$L$ candidate filtering.}
Querying and comparing all agents at every step is expensive. We therefore first compute a lightweight \emph{composite} score for each
available agent $j \in \mathcal{A}_t$:

\begin{equation}
\begin{aligned}
s_{j,t}
&= w_1\,\mathrm{match\_score}(j,t)
 + w_2\,\mathrm{prior\_success}(j) \\
&\quad + w_3\,\mathrm{reliability}(j,t).
\end{aligned}
\label{eq:stage1_score}
\end{equation}

where $\mathrm{match\_score}(j,t)$ measures task--agent matching via two paths.
By default, we form an agent capability text from the agent's declared capabilities extracted
from its system prompt when available, or from capability tags returned by a capability manager, and form a subtask text by
concatenating the subtask requirement with its input. We then compute cosine similarity between their embeddings and rescale it to $[0,1]$.
When embeddings are disabled or unavailable, $\mathrm{match\_score}(j,t)$ falls back to a lexical similarity computed between the subtask requirement and the agent capability text.

$\mathrm{prior\_success}(j)$ is a smoothed prior estimate of historical success, while
$\mathrm{reliability}(j,t)$ captures short-horizon operational stability that is not implied by semantic matching or historical success alone. Concretely, it is computed from recent runtime signals such as successful completion, timeout/error status, output-contract validity, availability, and latency stability; this makes an agent with strong average accuracy but recent service degradation less likely to dominate the prescreen. In practice, the routing
module implements these factors as a configurable composite scoring function; depending on configuration, some reliability penalties may be folded into the matching/prior terms rather than appearing as a strictly separate component. The weights $(w_1,w_2,w_3)$ are
configurable and fixed for the reported runs. We then construct a small candidate set by selecting the Top-$L$ agents:

\begin{equation}
C_t = \mathrm{Top}\text{-}L\big(\{s_{j,t}\}_{j\in\mathcal{A}_t}\big).
\label{eq:topL}
\end{equation}

This step prunes the decision space so that subsequent online learning focuses only on the most promising candidates. The theoretical guarantee in Appendix~\ref{Algorithm} is therefore candidate-conditional: it controls learning regret within $C_t$ when the best feasible executor is retained. When Stage~1 excludes that executor, the excess loss is captured by the filtering-regret decomposition in Remark~\ref{rem:filter}; empirically, Table~\ref{tab:topl_poolsize} reports oracle recall and exclusion gap for this prescreen.

\paragraph{Stage 2: Online contextual bandit selection within candidates.}
Given the filtered candidate set $C_t$, we construct a context vector $x_{j,t}$ for each candidate $j\in C_t$, and apply an online learner to select the
executor:
\begin{equation}
a_t \in \arg\max_{j\in C_t} \Big( x_{j,t}^\top \hat{\theta}_{t-1}
+ \beta_{t-1}\sqrt{x_{j,t}^\top A_{t-1}^{-1} x_{j,t}} \Big).
\label{eq:stage2_ucb}
\end{equation}
In Symphony-Coord, we record step-level context features for the chosen action during execution, and update the learner once the
corresponding reward becomes available; in planner mode a base per-step update may be applied before voting, otherwise updates
are post-hoc after voting in Appendix~\ref{sec:exec-vote}.
Delayed feedback changes the time at which the statistics are updated, but not the information available when the action is selected: routing uses only the current subtask context and runtime state, and the post-vote signal is applied to logged step records after the corresponding task outcome is resolved. Equivalently, the selector uses an observed-reward history that excludes unresolved decisions; Appendix~\ref{Algorithm} states the immediate-feedback theorem and then identifies the extra staleness term induced by these pending rewards.

\subsection{LinUCB-based Dynamic Selection}
\label{sec:linucb}

Within the candidate set $C_t$, we apply LinUCB to make the final choice. For each candidate agent $j\in C_t$, we construct a context
vector:
\begin{equation}
x_{j,t} = \big[1,\ \mathrm{sim\_emb}(j,t),\ \ell_{j,t},\ \tau_{j,t},\ \rho_{j,t},\ u_{j,t}\big]^{\top},
\label{eq:context_vec}
\end{equation}
where $\mathrm{sim\_emb}(j,t)$ is the same similarity signal used in Stage~1 matching. It is the cosine similarity between embeddings of the agent capability text and the subtask text
formed by concatenating \texttt{requirement} with its prompt. $\ell_{j,t}$ is the current load, $\tau_{j,t}$ is the normalized latency, $\rho_{j,t}$ is the historical
reputation, and $u_{j,t}$ is an availability flag.
The selector uses the standard linear reward model:

\begin{equation}
\mathbb{E}[r_t \mid a_t=j] = x_{j,t}^{\top}\theta^{*},
\label{eq:linear_reward}
\end{equation}
and maintains a ridge-regression estimator with
\begin{equation}
\begin{aligned}
A_t &= \lambda I + \sum_{s=1}^{t-1} x_{a_s,s}x_{a_s,s}^{\top}, \\
b_t &= \sum_{s=1}^{t-1} r_s x_{a_s,s}, \\
\hat{\theta}_t &= A_t^{-1} b_t .
\end{aligned}
\label{eq:ridge_state}
\end{equation}
where $\lambda>0$ is the regularization parameter. For each $j\in C_t$, the UCB score is computed as
\begin{equation}
\mathrm{UCB}_{j,t}
= x_{j,t}^{\top}\hat{\theta}_t
+ \beta_t \sqrt{x_{j,t}^{\top}A_t^{-1}x_{j,t}},
\label{eq:ucb}
\end{equation}
where $\beta_t$ controls exploration. The selected agent is
\begin{equation}
a_t = \arg\max_{j\in C_t} \mathrm{UCB}_{j,t}.
\label{eq:select}
\end{equation}
In the online adaptation phase, after the feedback reward becomes available, LinUCB updates

\begin{equation}
\begin{aligned}
A_{t+1} &\leftarrow A_t + x_{a_t,t} x_{a_t,t}^{\top}, \\
b_{t+1} &\leftarrow b_t + r_t x_{a_t,t}, \\
\hat{\theta}_{t+1} &\leftarrow A_{t+1}^{-1} b_{t+1}.
\end{aligned}
\label{eq:update}
\end{equation}

Together, the Top-$L$ filtering and LinUCB selection form an efficient-yet-adaptive routing mechanism: Stage~1 reduces overhead, while Stage~2 improves the policy from feedback, balancing exploitation and exploration under non-stationarity. The full selection-and-update procedure is summarized in Algorithm~\ref{alg:linucb_beacon} in Appendix~\ref{app:algorithm}. The theory in Appendix~\ref{app:algorithm} should be read as a selector-level analysis under linear realizability and candidate retention assumptions, complemented by empirical checks of filtering recall, reward sensitivity, and domain-shift behavior.

\paragraph{System instantiation.}
To make the system boundary explicit, Symphony-Coord uses decentralized worker execution with a task-local routing plane. The router creates Beacon queries, forms $C_t$, dispatches the selected executor, records step-level features, and applies post-vote updates for the current task instance; workers execute assigned subtasks independently and do not run open-ended negotiation to choose the next executor. In our implementation, the two-stage router is embedded into an end-to-end workflow with optional task decomposition, multi-path execution, final-key normalization, voting, and post-vote reward shaping for delayed credit assignment. Full component roles, message flow, and the exact weighting definitions shared by match\_score and sim\_emb are provided in Appendix~\ref{app:system}.

\section{Experiment}
\subsection{Main Results}
Table~\ref{tab:main-comp-noacc} compares Symphony-Coord with single-agent and multi-agent baselines on GSM8K, BBH, and MedicalQA. We report accuracy for all tasks and provide evaluation details in Appendix~\ref{app:exp}. Green cells mark the best method within each backbone--benchmark column.Overall, Symphony-Coord consistently improves over single-agent baselines across all backbones, with gains of 8.5--22.0 on GSM8K, 16.5--23.5 on BBH, and 27.0--33.0 on MedicalQA. Against multi-agent baselines, it achieves the best average accuracy under every backbone, outperforming the strongest competitor by 1.0--4.7 points on average. While some baselines win on individual benchmarks, Symphony-Coord is more balanced across domains. The weak cold-start results further show the need for adaptive routing rather than fixed or naive multi-agent allocation.

\begin{table}[!ht]
  \caption{
  Performance comparison of different methods across various tasks.}
  \label{tab:main-comp-noacc}
  \centering
  \footnotesize
  \setlength{\tabcolsep}{4pt}
  \renewcommand{\arraystretch}{1.05}
  \setlength{\aboverulesep}{0pt}
  \setlength{\belowrulesep}{0pt}
  \setlength{\cmidrulesep}{0.3pt}

  \begin{tabular}{l l l | c | c | c}
    \toprule
    \textbf{Backbone} & \textbf{Category} & \textbf{Method} &
    \textbf{GSM8K (ACC\%)} & \textbf{BBH (ACC\%)} & \textbf{MedicalQA (ACC\%)} \\
    \midrule

    \multirow{11}{*}{DeepSeek-V3} &
      \multirow{5}{*}{Single Agent} &
      Direct            & 58.00 & 52.00 & 45.00 \\
    & & React             & 58.00 & 56.00 & 49.00 \\
    & & Synapse           & 60.00 & 56.50 & 53.00 \\
    & & Self-Consistency  & 67.00 & 67.50 & 48.00 \\
    & & Self-Refinement   & 68.50 & 69.50 & 50.00 \\
    \cmidrule(lr){2-6}
    & \multirow{6}{*}{Multi-Agent} &
      Morphagent         & 47.00 & 58.00 & 47.00 \\
    & & Cold Start        & 43.00 & 55.00 & 50.00 \\
    & & MetaGPT           & \best{81.00} & 65.00 & \textbf{84.00} \\
    & & AFLOW             & 73.00 & 84.00 & 80.00 \\
    & & GPTSwarm          & 75.00 & \best{88.00} & 83.00 \\
    & & \textbf{Symphony-Coord} & \oursdown{\textbf{77.00}}{4.00} & \oursdown{\textbf{86.00}}{2.00} & \oursup{86.00}{2.00} \\
    \midrule

    \multirow{11}{*}{DeepSeek-V3-0324} &
      \multirow{5}{*}{Single Agent} &
      Direct            & 62.00 & 55.50 & 54.00 \\
    & & React             & 62.00 & 51.00 & 53.00 \\
    & & Synapse           & 60.00 & 57.50 & 51.00 \\
    & & Self-Consistency  & 61.00 & 55.00 & 53.00 \\
    & & Self-Refinement   & 62.00 & 53.50 & 53.00 \\
    \cmidrule(lr){2-6}
    & \multirow{6}{*}{Multi-Agent} &
      Morphagent         & 41.00 & 54.00 & 64.00 \\
    & & Cold Start        & 46.00 & 53.00 & 48.00 \\
    & & MetaGPT           & \textbf{78.00} & 72.00 & \textbf{80.00} \\
    & & AFLOW             & 69.00 & \best{83.00} & \textbf{80.00} \\
    & & GPTSwarm          & 73.00 & 79.00 & 76.00 \\
    & & \textbf{Symphony-Coord} & \oursup{84.00}{6.00} & \oursdown{\textbf{81.00}}{2.00} & \oursup{81.00}{1.00} \\
    \midrule

    \multirow{11}{*}{GPT-5-nano} &
      \multirow{5}{*}{Single Agent} &
      Direct            & 36.00 & 56.50 & 51.00 \\
    & & React             & 37.00 & 61.00 & 44.00 \\
    & & Synapse           & 51.00 & 57.50 & 47.00 \\
    & & Self-Consistency  & 62.00 & 65.00 & 43.00 \\
    & & Self-Refinement   & 68.00 & 59.50 & 47.00 \\
    \cmidrule(lr){2-6}
    & \multirow{6}{*}{Multi-Agent} &
      Morphagent         & 47.00 & 46.50 & 42.00 \\
    & & Cold Start        & 43.00 & 36.50 & 17.00 \\
    & & MetaGPT           & \textbf{77.00} & 69.50 & \best{82.00} \\
    & & AFLOW             & 69.00 & 76.50 & \textbf{77.00} \\
    & & GPTSwarm          & 76.00 & \textbf{81.00} & 73.00 \\
    & & \textbf{Symphony-Coord} & \oursup{78.00}{1.00} & \oursup{83.50}{2.50} & \oursup{82.00}{5.00} \\
    \bottomrule
  \end{tabular}
\end{table}

\subsection{Experiments on Heterogeneous Agents}
\label{sec:hetero_medicalqa}
\begin{wrapfigure}{r}{0.50\textwidth}
\vspace{-8pt}
\centering
\footnotesize
\begin{tikzpicture}
\begin{axis}[
    ybar,
    bar width=6pt,
    width=\linewidth,
    height=3.65cm,
    ymin=75, ymax=88,
    enlarge x limits=0.20,
    ylabel={Acc. (\%)},
    ylabel style={font=\scriptsize},
    symbolic x coords={FH,SH,LH,BH},
    xtick=data,
    xticklabels={
        {Fully\\Homo.},
        {Skill\\Hetero.},
        {LLM\\Hetero.},
        {Both\\Hetero.}
    },
    x tick label style={
        font=\scriptsize\rmfamily,
        rotate=0,
        align=center
    },
    y tick label style={font=\scriptsize},
    ymajorgrids=true,
    grid style={dashed, gray!45},
    axis x line*=bottom,
    axis y line*=left,
    tick align=outside,
    legend style={
        at={(0.5,1.04)},
        anchor=south,
        legend columns=2,
        draw=none,
        font=\scriptsize,
        /tikz/every even column/.append style={column sep=3pt}
    },
    nodes near coords,
    every node near coord/.append style={font=\tiny\rmfamily},
]

\addplot[fill=revblue!35, draw=revblue] coordinates {
    (FH,79.00)
    (SH,80.00)
    (LH,82.00)
    (BH,83.00)
};

\addplot[fill=revblue!75, draw=revblue] coordinates {
    (FH,80.00)
    (SH,86.00)
    (LH,81.00)
    (BH,84.00)
};

\legend{3 Agents, 5 Agents}
\end{axis}
\end{tikzpicture}
\vspace{-6pt}
\caption{MedicalQA accuracy under different heterogeneity settings.}
\label{fig:medicalqa_hetero_bar}
\vspace{-10pt}
\end{wrapfigure}
We evaluate Symphony-Coord on MedicalQA with three- and five-agent teams under four pool settings: fully homogeneous, skill-heterogeneous, LLM-heterogeneous, and both-heterogeneous.
Figure~\ref{fig:medicalqa_hetero_bar} shows that heterogeneity helps more as the team grows. With three agents, gains are small, while LLM and combined heterogeneity are slightly stronger. With five agents, skill heterogeneity reaches the highest accuracy, suggesting that capability specialization is the key source of complementarity.

\subsection{Ablation Study}
\paragraph{Component-wise Ablation of Symphony-Coord.}
Table~\ref{tab:ablation} reports MedicalQA ablations under the 200 / 300 / 100 cold-start, pre-train, and test protocol.
We fix CoT = 3 for all variants and use Plan-$K$ = 3 only when subtask decomposition is enabled. Variants without UCB
have no learnable routing update, so pre-train is marked as not applicable.
\begin{table}[!htbp]
  \centering
  \footnotesize
  \setlength{\tabcolsep}{2.2pt}
  \renewcommand{\arraystretch}{1.05}

  \caption{MedicalQA ablation study of Symphony-Coord under our three-phase protocol.}
  \label{tab:ablation}

  \begin{tabular*}{\textwidth}{@{\extracolsep{\fill}} l c c c c c c c c c c}
    \toprule
    \multirow{2}{*}{Variant} &
    \multicolumn{7}{c}{Components} &
    \multicolumn{3}{c}{MedicalQA (\%)} \\
    \cmidrule(lr){2-8}\cmidrule(lr){9-11}
    & Multi-Agent & Static & Random & Vote & Top-$L$ & UCB & Subtask
    & Cold start & Pre-train & Test \\
    \midrule

    A0 (Single agent)                     & \xmark & \xmark & \xmark & \xmark & \xmark & \xmark & \xmark & 50.00 & -- & 51.00 \\
    A1 (Random pick)                      & \cmark & \xmark & \cmark & \xmark & \xmark & \xmark & \xmark & 50.00 & -- & 52.00 \\
    A2 (Static default)                   & \cmark & \cmark & \xmark & \xmark & \xmark & \xmark & \xmark & 52.00 & -- & 46.00 \\
    A3 (Naive vote)                       & \cmark & \xmark & \xmark & \cmark & \xmark & \xmark & \xmark & 48.00 & -- & 55.00 \\
    A4 (Round-robin)                      & \cmark & \xmark & \xmark & \xmark & \xmark & \xmark & \xmark & 53.00 & -- & 48.00 \\
    \midrule

    A5 (UCB)                              & \cmark & \xmark & \xmark & \xmark & \xmark & \cmark & \xmark & 50.00 & 69.00 & 73.00 \\
    A6 (Subtask; static top-1)            & \cmark & \cmark & \xmark & \xmark & \xmark & \xmark & \cmark & 50.00 & -- & 69.00 \\
    A7 (Top-$L$; static top-1)            & \cmark & \cmark & \xmark & \xmark & \cmark & \xmark & \xmark & 57.00 & -- & 74.00 \\
    \midrule

    A8 (Top-$L$ + UCB)                    & \cmark & \xmark & \xmark & \xmark & \cmark & \cmark & \xmark & 48.50 & 77.00 & 79.00 \\
    A9 (Top-$L$ + Subtask; static)        & \cmark & \cmark & \xmark & \xmark & \cmark & \xmark & \cmark & 50.50 & -- & 77.00 \\
    A10 (UCB + Subtask)                   & \cmark & \xmark & \xmark & \xmark & \xmark & \cmark & \cmark & 53.00 & 74.67 & 79.00 \\
    \rowcolor{symphonyRow}
    \textbf{A11 (Symphony-Coord)}         & \cmark & \xmark & \xmark & \xmark & \cmark & \cmark & \cmark & 55.00 & 83.00 & 86.00 \\
    \bottomrule
  \end{tabular*}
\end{table}
Table~\ref{tab:ablation} shows that naive multi-agent routing is unreliable: random pick and voting give only minor gains,
whereas static default and round-robin can reduce accuracy. UCB is the main driver, raising test accuracy from 51.00 to
73.00. Top-$L$ filtering and subtask decomposition become most effective when paired with adaptive routing, with the full
system achieving the best cold-start, pre-train, and test results (55.00 / 83.00 / 86.00).

This improvement is not due to extra test-time compute alone. Under the same Plan-$K{=}3$ and CoT 3 budget, static
Top-$L$ + Subtask reaches 77.00 (A9), and Top-$L$ + UCB without subtasks reaches 79.00 (A8), both below the full
86.00 result. This confirms the complementarity of filtering, learning-based routing, and subtask-aware execution.
\subsection{Further Analysis}

\paragraph{Scalability Stress Test.}
\label{sec:scalability-stress}
We evaluate scalability by increasing the candidate pool from $N=5$ to $N=100$ while fixing Symphony-Coord's shortlist size to Top-$L=3$, keeping its online routing budget constant.
\begin{figure*}[!htbp]
    \centering
    \includegraphics[width=0.98\textwidth]{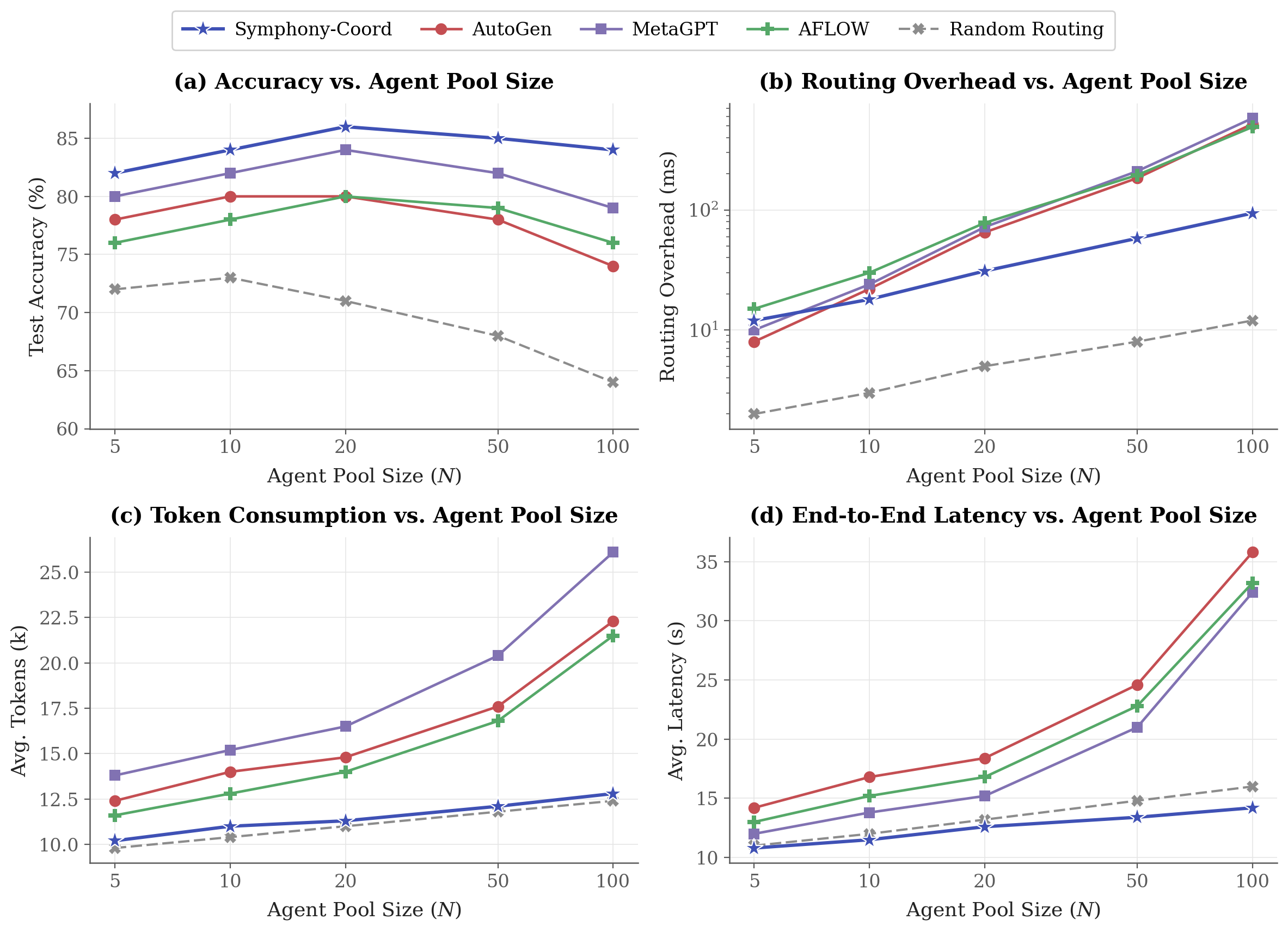}
    \caption{Scalability stress test under increasing agent pool size. Symphony-Coord fixes Top-$L=3$ and maintains a constant online routing budget as $N$ grows.}
    \label{fig:scalability_stress}
\end{figure*}
As shown in Figure~\ref{fig:scalability_stress}, Symphony-Coord remains stable as the pool grows and reaches its best accuracy at $N=20$.
In contrast, AutoGen, MetaGPT, and AFLOW suffer larger drops under large pools, suggesting that weakly filtered coordination introduces additional routing noise.
Efficiency metrics show the same pattern: baseline routing overhead increases rapidly with $N$, reaching hundreds of milliseconds at $N=100$, whereas Symphony-Coord stays below $100$ ms by selecting only from three shortlisted candidates.
Its token usage and end-to-end latency also grow more slowly, demonstrating a more favorable accuracy--cost trade-off under large heterogeneous agent pools.The detailed candidate agent composition and configuration used in this scalability stress test are provided in Appendix~\ref{app:exp}.

\paragraph{Cost--Accuracy Pareto Frontier.}
We further study cost--accuracy trade-offs on MedQA using three cost axes: average token consumption, end-to-end latency, and estimated dollar cost.
The comparison covers five LLM backbones: DeepSeek, GPT-OSS-120B, Qwen-2.5-7B, Gemini-2.5-flash-lite, and Grok-4.1-fast.
Figure~\ref{fig:cost_accuracy_pareto} shows that Symphony-Coord lies in a favorable Pareto region.
Compared with single-agent baselines, it achieves higher accuracy with a moderate inference budget; compared with multi-agent baselines, it reaches comparable or higher accuracy with lower tokens, latency, or estimated cost in most settings.
Together with the scalability stress test, these results show that Symphony-Coord remains efficient under both larger candidate pools and constrained inference budgets.
\begin{figure*}[!htbp]
    \centering
    \makebox[\textwidth][c]{%
        \includegraphics[width=1\textwidth]{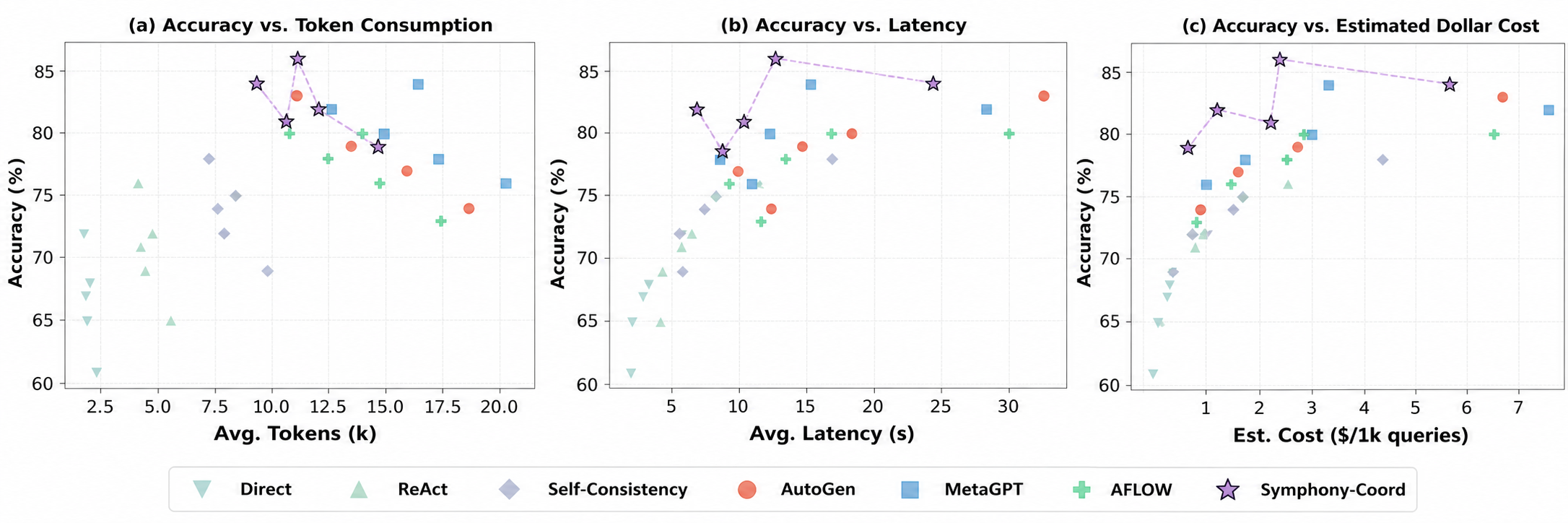}
    }
    \caption{Cost--accuracy Pareto frontier on MedQA across five LLM backbones.}
    \label{fig:cost_accuracy_pareto}
\end{figure*}

\section{Conclusion}
We presented Symphony-Coord, a task-local coordination framework for heterogeneous LLM-based multi-agent systems. Unlike conventional methods that rely on fixed roles or centralized controllers, Symphony-Coord formulates executor agent selection as an online contextual bandit problem. By combining capability-aware Top-$L$ candidate filtering, LinUCB-based contextual selection, and post-execution feedback updates, Symphony-Coord allows task-to-agent allocation to adapt dynamically to runtime states and task distributions. As a result, the division of labor is not imposed through predefined role labels, but emerges gradually as routing specialization through interaction and feedback.

On the theoretical side, we provide a candidate-conditional sublinear regret guarantee under standard linear realizability and candidate-retention assumptions, and clarify how Top-$L$ filtering and delayed post-vote feedback affect the interpretation of this guarantee. Empirically, Symphony-Coord consistently improves over strong single-agent baselines across multiple LLM backbones and tasks, while achieving more balanced overall performance compared with representative multi-agent baselines. Further heterogeneity studies, ablations, scalability stress tests, and cost--accuracy analyses show that capability filtering, online selection, and subtask-aware execution provide complementary benefits. Even as the candidate pool grows or the inference budget becomes constrained, Symphony-Coord maintains a favorable accuracy--cost trade-off.

Overall, Symphony-Coord demonstrates that scalable multi-agent LLM coordination does not need to rely on static role assignment. Instead, it can be realized through adaptive routing driven by runtime context and feedback. Future work will extend Symphony-Coord to larger and more open agent pools, richer tool-use workflows, and longer-horizon task streams, while systematically evaluating deployment-oriented factors such as latency, cost, service instability, and distribution shift.

\appendix
\onecolumn
\section{Algorithm}
\label{Algorithm}
\subsection{Setup and Notation}

We model the dynamic Beacon-based agent selection as an online contextual bandit.
At each round $t=1,2,\dots,T$, a subtask arrives with requirements summarized in a feature vector
$r_t\in\R^{d_r}$. Each agent $j\in\{1,\dots,N\}$ has a capability embedding $c_j\in\R^{d_c}$
and a dynamic state vector $z_{j,t}\in\R^{d_z}$.
The orchestrator constructs a context vector
\begin{equation}
x_{j,t} \;\triangleq\; g(r_t,c_j,z_{j,t}) \in \R^d,
\end{equation}
where $g(\cdot)$ is a fixed feature map.

\subsubsection{Two-stage candidate filtering}

To keep communication and computation efficient, Symphony-Coord first builds a candidate set using a
capability prior score $\phi(c_j,r_t)\in[0,1]$. Let
\begin{equation}
\mathcal{C}_t \;\triangleq\; \text{Top-}L\big(\phi(c_j,r_t)\big),
\end{equation}
and optionally enforce feasibility constraints like:deadline, availability:
\begin{equation}
\mathcal{C}_t \leftarrow \{j\in\mathcal{C}_t:\ d_{j,t}\le D_t,\ u_{j,t}=1\}.
\end{equation}
The dynamic selector then chooses an executor $a_t\in\mathcal{C}_t$ using a learning rule.

\subsubsection{Linear reward model}

We assume a standard linear contextual bandit model on the candidate set.

\begin{assumption}[Bounded contexts]
For all $t$ and all $j\in\mathcal{C}_t$, $\norm{x_{j,t}}_2 \le 1$.
\end{assumption}

\begin{assumption}[Linear realizability and sub-Gaussian noise]
There exists an unknown parameter $\theta^\star\in\R^d$ with $\norm{\theta^\star}_2 \le S$ such that the
\emph{expected} reward of choosing agent $j$ at time $t$ is
\begin{equation}
\mu_{j,t} \;\triangleq\; \E[R_t \mid a_t=j, \mathcal{F}_{t-1}] \;=\; x_{j,t}^\top \theta^\star.
\end{equation}
The observed reward is
\begin{equation}
R_t = x_{a_t,t}^\top \theta^\star + \varepsilon_t,
\end{equation}
where $\varepsilon_t$ is conditionally $ \sigma^2$-sub-Gaussian given $\mathcal{F}_{t-1}$:
$\E[\exp(\lambda \varepsilon_t)\mid \mathcal{F}_{t-1}] \le \exp(\lambda^2\sigma^2/2)$ for all $\lambda\in\R$.
\end{assumption}

\begin{assumption}Candidate contains the instantaneous optimal arm. Let $j_t^\star \in \arg\max_{j\in\{1,\dots,N\}} \mu_{j,t}$ be an optimal agent at time $t$.
Assume $j_t^\star \in \mathcal{C}_t$ for all $t$.
\end{assumption}

\subsubsection{LinUCB-based Dynamic Beacon Selection}

Define (ridge) covariance and response accumulators:
\begin{equation}
A_0 = \lambda I_d,\qquad b_0 = 0\in\R^d,
\end{equation}
and for $t\ge 1$,
\begin{equation}
A_t = A_{t-1} + x_{a_t,t}x_{a_t,t}^\top,\qquad b_t=b_{t-1}+R_t x_{a_t,t}.
\end{equation}
The ridge estimator is $\hat{\theta}_t = A_t^{-1} b_t$.

At round $t$, for each $j\in\mathcal{C}_t$, compute
\begin{equation}
\mathrm{UCB}_{j,t} \;=\; x_{j,t}^\top \hat{\theta}_{t-1} \;+\; \beta_{t-1}\sqrt{x_{j,t}^\top A_{t-1}^{-1}x_{j,t}},
\end{equation}
and select
\begin{equation}
a_t \in \arg\max_{j\in\mathcal{C}_t} \mathrm{UCB}_{j,t}.
\end{equation}

\subsection{Regret Bound for LinUCB Dynamic Beacon Selection}

We first state a standard self-normalized confidence bound used in linear bandit analysis.

\begin{lemma}[High-probability confidence ellipsoid]\label{lem:conf}
Fix $\delta\in(0,1)$. Under Assumptions 1--2, with probability at least $1-\delta$,
for all $t\ge 0$,
\begin{equation}\label{eq:ellipsoid}
\norm{\hat{\theta}_t - \theta^\star}_{A_t} \;\le\;
\beta_t \;\triangleq\;
\sigma \sqrt{2\log\!\frac{1}{\delta} + d\log\!\Big(1+\frac{t}{\lambda}\Big)}
\;+\; \sqrt{\lambda}\,S.
\end{equation}
\end{lemma}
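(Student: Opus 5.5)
This is the Abbasi-Yadkori, Pál, Szepesvári (2011) self-normalized confidence ellipsoid, specialized to $\norm{x}_2\le 1$. The plan is to decompose the estimation error into a noise part and a regularization bias, and to bound each in the $A_t$-norm.

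\textbf{Step 1: decomposition.} Write $S_t \triangleq \sum_{s=1}^{t}\varepsilon_s x_{a_s,s}$. Since $b_t=\sum_{s\le t}(x_{a_s,s}^\top\theta^\star+\varepsilon_s)x_{a_s,s}=(A_t-\lambda I)\theta^\star+S_t$, we get
\[
\hat\theta_t-\theta^\star = A_t^{-1}S_t-\lambda A_t^{-1}\theta^\star .
\]
Taking the $A_t$-norm gives
\[
\norm{\hat\theta_t-\theta^\star}_{A_t}\le \norm{S_t}_{A_t^{-1}}+\lambda\norm{\theta^\star}_{A_t^{-1}} .
\]
The bias term is handled directly: $A_t\succeq\lambda I$ gives $\lambda\norm{\theta^\star}_{A_t^{-1}}\le\sqrt{\lambda}\,\norm{\theta^\star}_2\le\sqrt\lambda S$. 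This is the second summand of $\beta_t$.

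\textbf{Step 2: noise term (main obstacle).} The difficulty is that the bound must hold simultaneously for all $t$, while $x_{a_s,s}$ is chosen adaptively.

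First I note that $a_s$ and $x_{a_s,s}$ are $\mathcal F_{s-1}$-measurable, since the UCB rule uses only past data and current contexts, and $\varepsilon_s$ is conditionally $\sigma^2$-sub-Gaussian. For any fixed $\eta\in\R^d$, the process
\[
M_t^\eta=\exp\Big(\eta^\top S_t-\tfrac{\sigma^2}{2}\eta^\top(A_t-\lambda I)\eta\Big)
\]
is then a nonnegative supermartingale with $M_0^\eta=1$.

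I would apply the method of mixtures: integrate $\eta$ against a Gaussian $N(0,(\sigma^2\lambda)^{-1}I)$. This yields a supermartingale
\[
\bar M_t=\Big(\tfrac{\det(\lambda I)}{\det A_t}\Big)^{1/2}\exp\Big(\tfrac{1}{2\sigma^2}\norm{S_t}^2_{A_t^{-1}}\Big).
\]
Ville's maximal inequality (equivalently, a stopping-time argument with $\tau=\min\{t:\bar M_t>1/\delta\}$ and Fatou's lemma) then gives, with probability at least $1-\delta$ and uniformly over all $t\ge0$,
\[
\norm{S_t}_{A_t^{-1}}^2\le 2\sigma^2\log\Big(\frac{\det(A_t)^{1/2}}{\delta\,\lambda^{d/2}}\Big).
\]
I expect the care here to be in verifying measurability and the supermartingale property under the filtration, given that actions are chosen from the filtered set $\mathcal C_t$. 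Assumption 3 is not needed for this lemma.

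\textbf{Step 3: determinant bound.} By AM--GM on the eigenvalues, $\det A_t\le(\mathrm{tr}A_t/d)^d\le(\lambda+t/d)^d$, using $\norm{x}_2\le1$ from Assumption 1. Hence
\[
\log\frac{\det A_t}{\lambda^d}\le d\log\Big(1+\frac{t}{d\lambda}\Big)\le d\log\Big(1+\frac{t}{\lambda}\Big).
\]
Substituting into Step 2 gives
\[
\norm{S_t}_{A_t^{-1}}\le\sigma\sqrt{2\log(1/\delta)+d\log(1+t/\lambda)} .
\]
Combining this with the bias bound from Step 1 yields $\norm{\hat\theta_t-\theta^\star}_{A_t}\le\beta_t$ for all $t\ge0$ on a single event of probability at least $1-\delta$. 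At $t=0$ the bound holds trivially, since $\hat\theta_0=0$ and $\norm{\theta^\star}_{A_0}\le\sqrt\lambda S$.
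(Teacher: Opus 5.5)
Your proposal is correct and follows the same route the paper relies on. The paper does not prove this lemma itself but defers to the self-normalized (method-of-mixtures) bound of Abbasi-Yadkori, P\'al and Szepesv\'ari (2011), which is exactly what you reconstruct. Your details check out: the mixing covariance $(\sigma^2\lambda)^{-1}I$ turns the Gaussian integral's quadratic form into $\sigma^2 A_t$, giving the stated $\bar M_t$, and the trace/AM--GM bound on $\det A_t$ reproduces $\beta_t$ exactly, including the $2\log(1/\delta)$ term.

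The one point to make explicit is the filtration convention you already flag. $\mathcal F_{t-1}$ must contain the round-$t$ contexts and the candidate set $\mathcal C_t$, so that $x_{a_t,t}$ is predictable; this is the standard reading of the paper's Assumption 2.
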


\begin{remark}
Lemma~\ref{lem:conf} is a standard result (self-normalized martingale concentration) in linear bandits.
A complete proof can be found in classical references such as \cite{abbasi2011improved}.
Here we use it as a known lemma to keep the appendix focused on the selection mechanism.
\end{remark}

\begin{lemma}[One-step regret bound]\label{lem:onestep}
On the event of Lemma~\ref{lem:conf}, at any round $t\ge 1$,
\begin{equation}\label{eq:onestep}
\mu_{j_t^\star,t} - \mu_{a_t,t}
\;\le\;
2\beta_{t-1}\sqrt{x_{a_t,t}^\top A_{t-1}^{-1} x_{a_t,t}}.
\end{equation}
\end{lemma}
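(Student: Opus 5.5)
The plan is the standard optimism argument for LinUCB, combining the confidence ellipsoid of Lemma~\ref{lem:conf} at index $t-1$ with the candidate-retention assumption (Assumption~3). Throughout we work on the event of Lemma~\ref{lem:conf}, so $\norm{\hat\theta_{t-1}-\theta^\star}_{A_{t-1}}\le\beta_{t-1}$. The key tool is Cauchy--Schwarz in the $A_{t-1}$-geometry. For any vector $x$,
\begin{equation*}
|x^\top(\hat\theta_{t-1}-\theta^\star)| \le \norm{x}_{A_{t-1}^{-1}}\,\norm{\hat\theta_{t-1}-\theta^\star}_{A_{t-1}} \le \beta_{t-1}\norm{x}_{A_{t-1}^{-1}},
\end{equation*}
where $\norm{x}_{A_{t-1}^{-1}}=\sqrt{x^\top A_{t-1}^{-1}x}$. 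This gives two-sided bounds $x^\top\hat\theta_{t-1}-\beta_{t-1}\norm{x}_{A_{t-1}^{-1}}\le x^\top\theta^\star\le x^\top\hat\theta_{t-1}+\beta_{t-1}\norm{x}_{A_{t-1}^{-1}}$.

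\textbf{Step 1 (optimism).} Apply the upper bound to $x=x_{j_t^\star,t}$ to get $\mu_{j_t^\star,t}\le \mathrm{UCB}_{j_t^\star,t}$. By Assumption~3, $j_t^\star\in\mathcal{C}_t$. Since $a_t$ maximizes the UCB score over $\mathcal{C}_t$, we obtain $\mu_{j_t^\star,t}\le\mathrm{UCB}_{j_t^\star,t}\le \mathrm{UCB}_{a_t,t}$.

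\textbf{Step 2 (pessimism at the chosen arm).} Apply the lower bound to $x=x_{a_t,t}$, which gives $x_{a_t,t}^\top\hat\theta_{t-1}\le \mu_{a_t,t}+\beta_{t-1}\norm{x_{a_t,t}}_{A_{t-1}^{-1}}$. Then
\begin{equation*}
\mathrm{UCB}_{a_t,t}=x_{a_t,t}^\top\hat\theta_{t-1}+\beta_{t-1}\norm{x_{a_t,t}}_{A_{t-1}^{-1}}\le \mu_{a_t,t}+2\beta_{t-1}\norm{x_{a_t,t}}_{A_{t-1}^{-1}}.
\end{equation*}
Chaining this with Step~1 yields \eqref{eq:onestep}.

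\textbf{Main obstacle: indexing and measurability.} The argument itself is short, so the work is in the bookkeeping. Lemma~\ref{lem:conf} must be invoked at time $t-1$, consistent with the UCB defined through $\hat\theta_{t-1}$ and $A_{t-1}$; this is legitimate because the lemma holds uniformly over all $t\ge0$. One must also use the fact that $\mu_{j,t}=x_{j,t}^\top\theta^\star$ holds for every $j$, including $j_t^\star$, whose optimality is over the full pool $\{1,\dots,N\}$. Assumption~3 is exactly what makes the comparison with $\mathrm{UCB}_{a_t,t}$ valid. Without it, the argument only bounds regret relative to the best arm in $\mathcal{C}_t$, which is the filtering term deferred to Remark~\ref{rem:filter}. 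Finally, ties in the argmax are harmless, since any maximizer satisfies the inequality used in Step~1.
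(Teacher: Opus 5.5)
Your proposal is correct and follows the paper's proof essentially step for step. Both use the $A_{t-1}$-weighted Cauchy--Schwarz bound on the event of Lemma~\ref{lem:conf} for optimism, invoke Assumption~3 to place $j_t^\star$ in $\mathcal{C}_t$ so that $\mathrm{UCB}_{a_t,t}\ge\mathrm{UCB}_{j_t^\star,t}$, and apply the matching lower confidence bound at $a_t$ to obtain the factor $2\beta_{t-1}$; your remarks on indexing at $t-1$ and on ties are accurate bookkeeping that the paper leaves implicit.
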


\begin{proof}
Fix $t$ and let $\mathcal{E}$ be the event in Lemma~\ref{lem:conf}.
On $\mathcal{E}$, for any $j\in\mathcal{C}_t$,
\begin{align}
\mu_{j,t}
&= x_{j,t}^\top \theta^\star
= x_{j,t}^\top \hat{\theta}_{t-1} + x_{j,t}^\top(\theta^\star-\hat{\theta}_{t-1}) \nonumber\\
&\le x_{j,t}^\top \hat{\theta}_{t-1} + \norm{x_{j,t}}_{A_{t-1}^{-1}}\norm{\theta^\star-\hat{\theta}_{t-1}}_{A_{t-1}}
\le x_{j,t}^\top \hat{\theta}_{t-1} + \beta_{t-1}\norm{x_{j,t}}_{A_{t-1}^{-1}} \nonumber\\
&= \mathrm{UCB}_{j,t}. \label{eq:optimistic}
\end{align}
In particular, using Assumption 3, $j_t^\star\in\mathcal{C}_t$ so $\mu_{j_t^\star,t}\le \mathrm{UCB}_{j_t^\star,t}$.
By definition of $a_t$,
$\mathrm{UCB}_{a_t,t}\ge \mathrm{UCB}_{j_t^\star,t}$, hence
\begin{equation}
\mu_{j_t^\star,t} - \mu_{a_t,t}
\le
\mathrm{UCB}_{a_t,t} - \mu_{a_t,t}.
\end{equation}
Finally, on $\mathcal{E}$ we also have a lower confidence bound:
\[
\mu_{a_t,t}
= x_{a_t,t}^\top \theta^\star
\ge x_{a_t,t}^\top \hat{\theta}_{t-1} - \beta_{t-1}\norm{x_{a_t,t}}_{A_{t-1}^{-1}},
\]
so
\[
\mathrm{UCB}_{a_t,t} - \mu_{a_t,t}
\le
\big(x_{a_t,t}^\top \hat{\theta}_{t-1} + \beta_{t-1}\norm{x_{a_t,t}}_{A_{t-1}^{-1}}\big)
-
\big(x_{a_t,t}^\top \hat{\theta}_{t-1} - \beta_{t-1}\norm{x_{a_t,t}}_{A_{t-1}^{-1}}\big)
=2\beta_{t-1}\norm{x_{a_t,t}}_{A_{t-1}^{-1}}.
\]
This proves \eqref{eq:onestep}.
\end{proof}

\begin{lemma}[Elliptical potential]\label{lem:elliptic}
Let $A_0=\lambda I_d$ and $A_t=A_{t-1}+x_tx_t^\top$ with $\norm{x_t}_2\le 1$.
Then
\begin{equation}\label{eq:elliptic}
\sum_{t=1}^T \min\Big\{1,\, x_t^\top A_{t-1}^{-1}x_t \Big\}
\;\le\;
2\log\frac{\det(A_T)}{\det(A_0)}
\;\le\;
2d\log\Big(1+\frac{T}{\lambda}\Big).
\end{equation}
\end{lemma}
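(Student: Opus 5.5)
The plan is the standard determinant-telescoping argument: first obtain an exact identity for $\log\det A_T$, then compare each per-round term with $\min\{1,\cdot\}$ via an elementary scalar inequality. The last step bounds the determinant ratio with AM--GM on the eigenvalues.

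\textbf{Step 1 (matrix determinant lemma).} Since $A_{t-1}\succeq\lambda I_d\succ 0$, we can factor
\[
A_t = A_{t-1}^{1/2}\bigl(I_d + A_{t-1}^{-1/2}x_tx_t^\top A_{t-1}^{-1/2}\bigr)A_{t-1}^{1/2}.
\]
The middle matrix is a rank-one perturbation of the identity, so it has eigenvalues $1$ (with multiplicity $d-1$) and $1+\norm{x_t}_{A_{t-1}^{-1}}^2$. Hence $\det A_t=\det A_{t-1}\,(1+x_t^\top A_{t-1}^{-1}x_t)$. Telescoping over $t=1,\dots,T$ gives
\[
\log\frac{\det A_T}{\det A_0}=\sum_{t=1}^T\log\bigl(1+x_t^\top A_{t-1}^{-1}x_t\bigr).
\]

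\textbf{Step 2 (scalar inequality).} For $u\ge 0$ we claim $\min\{1,u\}\le 2\log(1+u)$. The case $u\ge 1$ holds because $2\log 2\approx1.386\ge 1$ and the right side is increasing in $u$. The case $u\in[0,1]$ holds because $\log(1+u)$ is concave, so on $[0,1]$ it lies above its chord, i.e.\ $\log(1+u)\ge u\log 2\ge u/2$. Applying the claim with $u=x_t^\top A_{t-1}^{-1}x_t$ and summing over $t$, Step~1 yields the first inequality in \eqref{eq:elliptic}.

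\textbf{Step 3 (determinant bound).} First, $\det A_0=\lambda^d$. Next, $\operatorname{tr}A_T=d\lambda+\sum_{t}\norm{x_t}_2^2\le d\lambda+T$, using $\norm{x_t}_2\le 1$. By AM--GM applied to the (positive) eigenvalues of $A_T$,
\[
\det A_T\le\bigl(\operatorname{tr}A_T/d\bigr)^d\le(\lambda+T/d)^d .
\]
Therefore
\[
\log\frac{\det A_T}{\det A_0}\le d\log\Bigl(1+\frac{T}{d\lambda}\Bigr)\le d\log\Bigl(1+\frac{T}{\lambda}\Bigr),
\]
where the last step uses $d\ge 1$. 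Multiplying by $2$ gives the second inequality.

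None of these steps is a real obstacle. The only delicate point is choosing the constant in Step~2 so that it works on both ranges of $u$; the concavity/chord argument settles this cleanly. Positive definiteness of $A_{t-1}$, needed in Step~1, is immediate from $\lambda>0$.
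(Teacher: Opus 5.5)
Your proof is correct and complete. The determinant-telescoping identity, the scalar bound $\min\{1,u\}\le 2\log(1+u)$, and the trace/AM--GM bound on $\det A_T$ together make up the standard proof of the elliptical potential lemma (as in Abbasi-Yadkori et al., 2011), which the paper states without writing out a proof of its own; your Step 3 even gives the slightly sharper intermediate bound $2d\log\bigl(1+T/(d\lambda)\bigr)$ before you relax it to the stated one.
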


\begin{theorem}[High-probability regret bound for dynamic Beacon (LinUCB)]\label{thm:regret}
Under Assumptions 1--3, fix $\delta\in(0,1)$. With probability at least $1-\delta$,
the cumulative regret
\[
\mathrm{Reg}(T)\triangleq \sum_{t=1}^T \big(\mu_{j_t^\star,t} - \mu_{a_t,t}\big)
\]
satisfies
\begin{equation}\label{eq:regretbound}
\mathrm{Reg}(T)
\;\le\;
2\beta_{T}\sqrt{2T\,d\,\log\Big(1+\frac{T}{\lambda}\Big)}.
\end{equation}
Consequently, $\E[\mathrm{Reg}(T)] = \tilde{O}(d\sqrt{T})$ (up to logarithmic factors).
\end{theorem}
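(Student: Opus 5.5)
The plan is the standard optimism-plus-elliptical-potential argument, built from Lemmas~\ref{lem:conf}, \ref{lem:onestep} and \ref{lem:elliptic}. Work on the event $\mathcal{E}$ of Lemma~\ref{lem:conf}, which has probability at least $1-\delta$. On $\mathcal{E}$, Lemma~\ref{lem:onestep} (which uses Assumption~3, so that $j_t^\star\in\mathcal{C}_t$) gives for every $t$ the per-round bound
\[
\mu_{j_t^\star,t}-\mu_{a_t,t}\le 2\beta_{t-1}w_t,\qquad w_t\triangleq\sqrt{x_{a_t,t}^\top A_{t-1}^{-1}x_{a_t,t}}.
\]
Next, $\beta_t$ is nondecreasing in $t$ by its explicit formula, so $\beta_{t-1}\le\beta_T$. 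Summing over $t$ then gives $\mathrm{Reg}(T)\le 2\beta_T\sum_{t=1}^T w_t$.

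The second step is Cauchy--Schwarz: $\sum_t w_t\le\sqrt{T\sum_t w_t^2}$. I then want to invoke Lemma~\ref{lem:elliptic} with $x_t=x_{a_t,t}$, which satisfies $\norm{x_t}_2\le 1$ by Assumption~1. The matrices $A_t$ in Lemma~\ref{lem:elliptic} coincide with the algorithm's accumulators, since $A_t=A_{t-1}+x_{a_t,t}x_{a_t,t}^\top$ starting from $A_0=\lambda I_d$.

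The only delicate point, which I expect to be the main obstacle, is that Lemma~\ref{lem:elliptic} controls $\sum_t\min\{1,w_t^2\}$ rather than $\sum_t w_t^2$. I would handle this in one of two ways.

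\begin{itemize}
\item \textbf{Regime $\lambda\ge 1$ (first choice).} Since $A_{t-1}\succeq\lambda I_d$, we get $w_t^2\le\norm{x_{a_t,t}}_2^2/\lambda\le 1$. So the minimum is inactive, and
\[
\sum_t w_t^2\le 2d\log\!\Big(1+\frac{T}{\lambda}\Big).
\]
Plugging this in yields exactly $\mathrm{Reg}(T)\le 2\beta_T\sqrt{2Td\log(1+T/\lambda)}$.
\item \textbf{General $\lambda$ (fallback).} Use the trivial bound $\mu_{j_t^\star,t}-\mu_{a_t,t}\le 2S$, which follows from $|\mu_{j,t}|\le\norm{x_{j,t}}_2\norm{\theta^\star}_2\le S$. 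Combined with the per-round bound, this gives
\[
\mu_{j_t^\star,t}-\mu_{a_t,t}\le\min\{2S,\,2\beta_{t-1}w_t\}\le 2\max\{\beta_T,S\}\min\{1,w_t\}.
\]
The same Cauchy--Schwarz and potential argument then applies with $\beta_T$ replaced by $\max\{\beta_T,S\}$. This equals $\beta_T$ whenever $\lambda\ge1$, because $\beta_T\ge\sqrt{\lambda}S\ge S$.
\end{itemize}
I would state the $\lambda\ge1$ convention explicitly, since it is what makes the displayed constant exact.

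Finally, for the expectation claim I would choose $\delta=1/T$. On $\mathcal{E}$ the bound is $O\big(\beta_T\sqrt{dT\log T}\big)$ with $\beta_T=O\big(\sigma\sqrt{d\log T+\log T}+\sqrt{\lambda}S\big)$, which gives $\tilde O(d\sqrt T)$. On $\mathcal{E}^c$, which has probability at most $1/T$, regret is at most $2ST$, so this event contributes at most $2S$. Adding the two parts gives $\E[\mathrm{Reg}(T)]=\tilde O(d\sqrt T)$.
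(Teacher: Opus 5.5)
Your proposal is correct and follows essentially the same route as the paper: work on $\mathcal{E}$, apply Lemma~\ref{lem:onestep}, use $\beta_{t-1}\le\beta_T$, apply Cauchy--Schwarz, and invoke Lemma~\ref{lem:elliptic} with $\lambda\ge 1$ so that $x_{a_t,t}^\top A_{t-1}^{-1}x_{a_t,t}\le 1$ and the minimum is inactive. The remaining differences are side details: the paper handles $\lambda<1$ by a $\max\{1,1/\lambda\}$ factor rather than your $2S$ clipping, and it asserts the expectation bound without the $\delta=1/T$ argument you supply (since $\delta$ enters the selector through $\beta_{t-1}$, that argument means running the selector with a horizon-tuned confidence level).
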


\begin{proof}
Work on the event $\mathcal{E}$ of Lemma~\ref{lem:conf} (probability $\ge 1-\delta$), and note $\beta_{t-1}\le \beta_T$.
From Lemma~\ref{lem:onestep},
\[
\mu_{j_t^\star,t} - \mu_{a_t,t}
\le 2\beta_T\sqrt{x_{a_t,t}^\top A_{t-1}^{-1} x_{a_t,t}}.
\]
Summing over $t$ and applying Cauchy--Schwarz:
\begin{align}
\mathrm{Reg}(T)
&\le 2\beta_T \sum_{t=1}^T \sqrt{x_{a_t,t}^\top A_{t-1}^{-1} x_{a_t,t}} \nonumber\\
&\le 2\beta_T \sqrt{T \sum_{t=1}^T x_{a_t,t}^\top A_{t-1}^{-1} x_{a_t,t}}. \label{eq:cs}
\end{align}
Since $x_{a_t,t}^\top A_{t-1}^{-1} x_{a_t,t}\le \norm{x_{a_t,t}}_2^2/\lambda \le 1/\lambda$, we can bound
$\sum x^\top A^{-1}x$ by $\sum \min\{1, x^\top A^{-1}x\}$ up to constants. Using Lemma~\ref{lem:elliptic} with $x_t=x_{a_t,t}$,
\[
\sum_{t=1}^T x_{a_t,t}^\top A_{t-1}^{-1} x_{a_t,t}
\le
\sum_{t=1}^T \min\Big\{1,\, x_{a_t,t}^\top A_{t-1}^{-1} x_{a_t,t}\Big\}\cdot \max\Big\{1, \frac{1}{\lambda}\Big\}.
\]
For simplicity (and standard in the literature), take $\lambda\ge 1$ so the max factor is $1$; otherwise it contributes only a constant.
Then Lemma~\ref{lem:elliptic} yields
\[
\sum_{t=1}^T x_{a_t,t}^\top A_{t-1}^{-1} x_{a_t,t}
\le 2d\log\Big(1+\frac{T}{\lambda}\Big).
\]
Plugging into \eqref{eq:cs} proves \eqref{eq:regretbound}.
\end{proof}

\begin{remark}[Delayed feedback]
Theorem~A.8 analyzes the standard \emph{immediate-feedback} LinUCB setting.
In the full Symphony-Coord system, however, rewards may become available only
after post-vote aggregation across multiple execution paths. In this case,
the context of each routing decision is logged at decision time, while the
parameter update is applied only when the corresponding reward is resolved.

Delayed feedback does not change the information available at action-selection
time and does not introduce later information into the routing decision. Instead,
it postpones estimator updates, so its main effect is stale parameters and
consequently slower adaptation. For this reason, Theorem~A.8 should be interpreted
as a clean baseline guarantee for the selection module under immediate feedback,
while the full system corresponds to a deferred-update variant of the same
contextual bandit selector.

To make this gap explicit, let $d_t$ denote the number of unresolved rewards at
round $t$, and let $A^{\mathrm{obs}}_{t-1}$ be the covariance matrix formed from
resolved rewards only. The implemented selector replaces $A_{t-1}$ in
Eq.~\eqref{eq:ridge_state} with $A^{\mathrm{obs}}_{t-1}$, and therefore acts as
LinUCB on a delayed history. Its regret can be decomposed into the
immediate-feedback selector regret plus an additional staleness term caused by
decisions made before earlier rewards are incorporated:
\[
\mathrm{Reg}_{\mathrm{delayed}}(T)
\le
\mathrm{Reg}_{\mathrm{immediate}}(T)
+
\sum_{t=1}^{T}
\bigl[\mu_{\tilde a_t,t}-\mu_{a_t,t}\bigr]_+ ,
\]
where $\tilde a_t$ is the action that the same selector would take if all rewards
up to $t-1$ had already resolved, and $[\cdot]_+$ denotes the positive part.
This term is small when post-vote latency is bounded or when the number of
pending updates is kept small, and it grows with stale-decision exposure rather
than with access to additional label information. In implementation, this is
controlled by logging every routing decision, applying rewards in resolution
order, and optionally batching or throttling new updates when the pending queue
becomes large.

A formal regret analysis that explicitly incorporates delayed post-vote feedback
is beyond the scope of the current paper and is left for future work.
\end{remark}

\begin{remark}\label{rem:filter}
If Assumption 3 does not always hold, define the \emph{exclusion loss}
\[
\Delta^{\mathrm{filter}}_t \triangleq \mu_{j_t^\star,t} - \max_{j\in\mathcal{C}_t} \mu_{j,t} \;\ge 0.
\]
Then the total regret decomposes as
\[
\mathrm{Reg}(T) \le \underbrace{\sum_{t=1}^T \Delta^{\mathrm{filter}}_t}_{\text{filtering regret}}
+
\underbrace{\sum_{t=1}^T \Big(\max_{j\in\mathcal{C}_t}\mu_{j,t}-\mu_{a_t,t}\Big)}_{\text{learning regret within candidates}}.
\]

Theorem~\ref{thm:regret} controls the second term; the first term is small when the Top-$L$ filter
includes high-quality agents with high probability.
\end{remark}

\subsection{One-shot Top-1 Mis-selection Probability (Sub-Gaussian Scores)}

This section proves a classical exponential bound on selecting the wrong top agent when each score is
a noisy observation of the true utility.

\begin{assumption}[One-shot noisy score model]\label{ass:oneshot}
There are $K$ candidate agents with deterministic utilities $u_1,\dots,u_K$.
We observe $s_j = u_j + \xi_j$ where $\xi_j$ are independent, mean-zero, $\sigma^2$-sub-Gaussian.
Let $j^\star = \arg\max_j u_j$ be the unique best agent and $\Delta_j = u_{j^\star}-u_j>0$ for $j\ne j^\star$.
\end{assumption}

\begin{theorem}[Exponential mis-selection bound]\label{thm:misselect}
Under Assumption~\ref{ass:oneshot},
\begin{equation}\label{eq:misselect}
\Prb\!\left(\arg\max_{j} s_j \neq j^\star\right)
\;\le\;
\sum_{j\ne j^\star}\exp\!\left(-\frac{\Delta_j^2}{4\sigma^2}\right).
\end{equation}
\end{theorem}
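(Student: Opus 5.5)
The plan is a union bound followed by a Chernoff bound on pairwise differences. If the empirical argmax is not $j^\star$ (ties broken adversarially), then some $j\neq j^\star$ satisfies $s_j \ge s_{j^\star}$. This gives the event inclusion
\[
\Big\{\arg\max_{j} s_j \neq j^\star\Big\} \subseteq \bigcup_{j\neq j^\star}\{s_j - s_{j^\star}\ge 0\},
\]
and by the union bound it suffices to show $\Prb(s_j - s_{j^\star}\ge 0)\le \exp(-\Delta_j^2/(4\sigma^2))$ for each fixed $j\ne j^\star$.

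Fix $j\neq j^\star$ and write $s_j - s_{j^\star} = -\Delta_j + (\xi_j - \xi_{j^\star})$. The bad event is then $\{\xi_j-\xi_{j^\star}\ge \Delta_j\}$. By independence and the sub-Gaussian moment generating function bound from Assumption~\ref{ass:oneshot}, for every $\lambda\in\R$,
\[
\E\big[e^{\lambda(\xi_j-\xi_{j^\star})}\big]=\E\big[e^{\lambda\xi_j}\big]\,\E\big[e^{-\lambda\xi_{j^\star}}\big]\le e^{\lambda^2\sigma^2},
\]
so the difference is $2\sigma^2$-sub-Gaussian. Markov's inequality applied to $e^{\lambda(\xi_j-\xi_{j^\star})}$ gives
\[
\Prb(\xi_j-\xi_{j^\star}\ge\Delta_j)\le \exp\big(-\lambda\Delta_j+\lambda^2\sigma^2\big).
\]
The exponent is minimized at $\lambda=\Delta_j/(2\sigma^2)$, which yields exactly $\exp(-\Delta_j^2/(4\sigma^2))$. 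Summing over $j\neq j^\star$ gives \eqref{eq:misselect}.

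The argument is essentially routine, so the only real obstacle is bookkeeping. The key point is that the variance proxy doubles because two independent sub-Gaussians are subtracted, and this doubling is what produces the constant $4$ rather than $2$ in the exponent. Independence is needed only to factor the moment generating function. Ties are handled by using the non-strict inequality $s_j\ge s_{j^\star}$ in the event inclusion, which covers any tie-breaking rule.
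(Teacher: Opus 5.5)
Your proof is correct and follows the same route as the paper. Both use a union bound over $j\neq j^\star$, rewrite $\{s_j\ge s_{j^\star}\}$ as $\{\xi_j-\xi_{j^\star}\ge\Delta_j\}$, and apply the tail bound for the $2\sigma^2$-sub-Gaussian difference; you just spell out the MGF factorization and the Chernoff optimization at $\lambda=\Delta_j/(2\sigma^2)$, which the paper cites as a standard tail bound.
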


\begin{proof}
By a union bound,
\[
\Prb\!\left(\arg\max_j s_j \neq j^\star\right)
\le
\sum_{j\ne j^\star} \Prb(s_j \ge s_{j^\star}).
\]
Fix $j\ne j^\star$. Note
\[
s_j \ge s_{j^\star}
\iff
(u_j+\xi_j) - (u_{j^\star}+\xi_{j^\star}) \ge 0
\iff
\xi_j - \xi_{j^\star} \ge \Delta_j.
\]
Since $\xi_j$ and $\xi_{j^\star}$ are independent $\sigma^2$-sub-Gaussian, the difference
$\xi_j-\xi_{j^\star}$ is mean-zero and $(2\sigma^2)$-sub-Gaussian. Hence, for any $z>0$,
\[
\Prb(\xi_j-\xi_{j^\star} \ge z)\le \exp\!\left(-\frac{z^2}{4\sigma^2}\right).
\]
Setting $z=\Delta_j$ gives $\Prb(s_j \ge s_{j^\star}) \le \exp(-\Delta_j^2/(4\sigma^2))$ and summing over $j\ne j^\star$
yields \eqref{eq:misselect}.
\end{proof}

\subsection{Overall Task Quality Guarantee from Regret}

We now translate the regret bound into a bound on total expected task quality (cumulative reward).

Define the \emph{oracle cumulative expected reward}:
\[
\mathrm{OPT}(T)\triangleq \sum_{t=1}^T \mu_{j_t^\star,t},
\]
and the algorithm's cumulative expected reward:
\[
\mathrm{ALG}(T)\triangleq \sum_{t=1}^T \mu_{a_t,t}.
\]
By definition, $\mathrm{Reg}(T)=\mathrm{OPT}(T)-\mathrm{ALG}(T)$.

\begin{corollary}[Near-oracle cumulative task quality]\label{cor:quality}
Under the assumptions of Theorem~\ref{thm:regret}, with probability at least $1-\delta$,
\[
\mathrm{ALG}(T) \ge \mathrm{OPT}(T) - 2\beta_{T}\sqrt{2T\,d\,\log\Big(1+\frac{T}{\lambda}\Big)}.
\]
In particular, dividing by $T$ yields an average per-round suboptimality of order
$\tilde{O}\big(d/\sqrt{T}\big)$.
\end{corollary}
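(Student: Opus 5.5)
The corollary follows from Theorem~\ref{thm:regret} by rearranging the identity $\mathrm{Reg}(T)=\mathrm{OPT}(T)-\mathrm{ALG}(T)$. I would not reprove any concentration here. The plan is to fix $\delta\in(0,1)$ and work on the event $\mathcal{E}$ of Lemma~\ref{lem:conf}, which has probability at least $1-\delta$. The bound \eqref{eq:regretbound} holds on this same event, since the proof of Theorem~\ref{thm:regret} only uses Lemma~\ref{lem:onestep} and Lemma~\ref{lem:elliptic} on $\mathcal{E}$. I also keep the convention $\lambda\ge 1$ adopted in that proof. For $\lambda<1$ the bound changes only by the constant factor $\max\{1,1/\lambda\}$ under the square root, and I would note this in a sentence.

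The first step is to substitute the definitions. Since $\mathrm{OPT}(T)-\mathrm{ALG}(T)=\sum_{t=1}^T(\mu_{j_t^\star,t}-\mu_{a_t,t})=\mathrm{Reg}(T)$, the inequality \eqref{eq:regretbound} on $\mathcal{E}$ rearranges directly to $\mathrm{ALG}(T)\ge \mathrm{OPT}(T)-2\beta_T\sqrt{2Td\log(1+T/\lambda)}$. Assumption 3 matters here: it guarantees that $j_t^\star$, the maximizer over all $N$ agents, lies in $\mathcal{C}_t$. That is exactly what Lemma~\ref{lem:onestep} used, so $\mathrm{OPT}(T)$ is the oracle over the full pool and not merely over the candidates.

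The second step is the rate statement, and the only real work is bookkeeping the order of $\beta_T$. From \eqref{eq:ellipsoid}, with $\sigma$, $S$, $\lambda$ and $\delta$ treated as constants, $\beta_T=O\big(\sqrt{d\log(1+T/\lambda)}+\sqrt{\log(1/\delta)}\big)=\tilde{O}(\sqrt d)$. Hence the regret term is $\tilde{O}(\sqrt d\cdot\sqrt{Td})=\tilde{O}(d\sqrt T)$, and dividing by $T$ gives $\frac{1}{T}\big(\mathrm{OPT}(T)-\mathrm{ALG}(T)\big)=\tilde{O}(d/\sqrt T)$. The point to flag is that the $d$ (rather than $\sqrt d$) dependence comes from the $\sqrt d$ inside $\beta_T$ multiplying the $\sqrt d$ from the elliptical potential.

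No step is a genuine obstacle. The only care needed is to state that the average-suboptimality claim holds on the same probability-$(1-\delta)$ event, with $\log(1/\delta)$ absorbed into $\tilde{O}$, and not to claim it in expectation. If one wants an expectation version, as in the "Consequently" clause of Theorem~\ref{thm:regret}, one can add $\delta\cdot\sum_t(\mu_{j_t^\star,t}-\mu_{a_t,t})\le 2T\delta S$ on $\mathcal{E}^c$, using $|\mu_{j,t}|\le\norm{x_{j,t}}_2\norm{\theta^\star}_2\le S$. Choosing $\delta=1/T$ then gives the same $\tilde{O}(d/\sqrt T)$ rate.
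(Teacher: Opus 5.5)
Your proposal is correct and matches the paper's proof, which is exactly the rearrangement $\mathrm{ALG}(T)=\mathrm{OPT}(T)-\mathrm{Reg}(T)$ combined with Theorem~\ref{thm:regret} on the same probability-$(1-\delta)$ event. You also do three things the paper leaves implicit, and all three are accurate: you track $\beta_T=\tilde{O}(\sqrt{d})$ to justify the $\tilde{O}(d/\sqrt{T})$ rate, you flag that the literal constant needs $\lambda\ge 1$, and you sketch the $\delta=1/T$ expectation argument.
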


\begin{proof}
Immediate from $\mathrm{ALG}(T)=\mathrm{OPT}(T)-\mathrm{Reg}(T)$ and Theorem~\ref{thm:regret}.
\end{proof}

\begin{remark}
If the reward encodes task success probability, Corollary~\ref{cor:quality} states that
the long-run average task quality approaches the oracle that always selects the best agent each round.
\end{remark}

\subsection{Robustness to Capability Drift}

We provide two commonly used non-stationarity models and show how the dynamic selector can be adapted.

\subsection{Piecewise-stationary model (change points)}

Assume $\theta^\star$ is \emph{piecewise constant} over time: there exist change points
$1=\tau_0 < \tau_1 < \cdots < \tau_M < \tau_{M+1}=T+1$ such that for all $t\in[\tau_m,\tau_{m+1})$,
\[
\mu_{j,t} = x_{j,t}^\top \theta^{(m)}
\]
for some fixed $\theta^{(m)}$ in segment $m$.
Suppose (for analysis) that the algorithm can \emph{reset} LinUCB at each change point.

\begin{theorem}[Regret under $M$ change points with oracle resets]\label{thm:changepoints}
Under Assumptions 1--2 within each stationary segment, and assuming the optimal agent is always in the candidate set,
the regret of ``Reset-LinUCB'' satisfies, with probability at least $1-\delta$,
\[
\mathrm{Reg}(T) \;\le\;
2\beta_T \sqrt{2d\log\!\Big(1+\frac{T}{\lambda}\Big)}\cdot \sqrt{(M+1)T}.
\]
Equivalently, $\mathrm{Reg}(T)=\tilde{O}\big(d\sqrt{(M+1)T}\big)$.
\end{theorem}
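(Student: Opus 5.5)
The plan is to treat Reset-LinUCB as $M+1$ independent runs of the stationary LinUCB selector, one per segment $[\tau_m,\tau_{m+1})$, and to apply the machinery of Theorem~\ref{thm:regret} inside each segment. Let $T_m \triangleq \tau_{m+1}-\tau_m$ denote the segment lengths, so $\sum_{m=0}^{M} T_m = T$. Because of the oracle reset, within segment $m$ the accumulators restart at $A=\lambda I_d$ and $b=0$. The data seen there is therefore generated by the fixed parameter $\theta^{(m)}$, and Assumptions 1--3 hold for that segment.

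First, obtain a confidence event that holds simultaneously for all segments. Apply Lemma~\ref{lem:conf} to segment $m$ with confidence $\delta/(M+1)$ and take a union bound. On the resulting event, of probability at least $1-\delta$, every segment's estimator satisfies $\norm{\hat\theta - \theta^{(m)}}_{A}\le \beta^{(m)}_s$ at every local time $s$. Here $\beta^{(m)}_s$ carries $\log\frac{M+1}{\delta}$ in place of $\log\frac1\delta$, and uses local time $s\le T_m\le T$. I would take $\beta_T$ in the statement to absorb this (at worst an extra $\log(M+1)$, which is harmless at the $\tilde O$ level). I would flag this explicitly as the one place where the constant needs care. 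The key observation is that $\beta$ is monotone in time, so $\beta^{(m)}_{s}\le\beta_T$ once the $\log(M+1)$ term is absorbed.

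Next, bound the regret segment by segment. On this event, Lemma~\ref{lem:onestep} applies verbatim within each segment, since Assumption 3 keeps $j_t^\star$ in $\mathcal C_t$. Following the proof of Theorem~\ref{thm:regret}, with Cauchy--Schwarz and Lemma~\ref{lem:elliptic} (taking $\lambda\ge1$), the regret of segment $m$ is at most
\[
2\beta_T\sqrt{2T_m d\log\bigl(1+T_m/\lambda\bigr)}\le 2\beta_T\sqrt{2d\log\bigl(1+T/\lambda\bigr)}\,\sqrt{T_m}.
\]

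Finally, sum over the segments and apply Cauchy--Schwarz once more:
\[
\sum_{m=0}^{M}\sqrt{T_m}\le\sqrt{(M+1)\textstyle\sum_m T_m}=\sqrt{(M+1)T}.
\]
This yields the claimed bound and the $\tilde O(d\sqrt{(M+1)T})$ rate.

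The main obstacle is not the algebra but the probabilistic bookkeeping. Lemma~\ref{lem:conf} must be valid for each restarted segment, which requires the noise to be conditionally sub-Gaussian with respect to the filtration restricted to that segment. This holds because the change points are deterministic (the reset is an oracle). The union bound then costs only the logarithmic factor noted above.
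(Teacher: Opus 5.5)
Your proposal is correct and matches the paper's proof step for step: apply Theorem~\ref{thm:regret} inside each reset segment with a union bound over the $M+1$ segments, bound each segment's regret by a multiple of $\beta_T\sqrt{T_m d\log(1+T/\lambda)}$, and sum using $\sum_m\sqrt{T_m}\le\sqrt{(M+1)T}$. You are more careful than the paper on one point: the union bound replaces $\log\frac{1}{\delta}$ by $\log\frac{M+1}{\delta}$ in $\beta_T$, so the literal bound holds only with this enlarged $\beta_T$, whereas the paper absorbs this inflation (together with a generic constant $C$) into the leading factor without comment.
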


\begin{proof}
Let segment $m$ have length $T_m=\tau_{m+1}-\tau_m$, with $\sum_{m=0}^M T_m = T$.
Applying Theorem~\ref{thm:regret} to each segment (with a union bound over segments for probability),
we have $\mathrm{Reg}_m \le C \beta_T \sqrt{T_m d\log(1+T/\lambda)}$ for a universal constant $C$.
Summing and applying Cauchy--Schwarz:
\[
\mathrm{Reg}(T) = \sum_{m=0}^M \mathrm{Reg}_m
\le C\beta_T \sqrt{d\log(1+T/\lambda)}\sum_{m=0}^M \sqrt{T_m}
\le C\beta_T \sqrt{d\log(1+T/\lambda)}\sqrt{(M+1)\sum_{m=0}^M T_m},
\]
which equals the stated bound (absorbing constants into the leading factor).
\end{proof}

\begin{remark}
When change points are unknown, practical systems use change detection, sliding windows, or discounting.
The theorem above provides a clean baseline showing graceful degradation with the number of regime shifts $M$.
\end{remark}
\begin{figure}[!ht]
    \centering
    \includegraphics[width=1\linewidth]{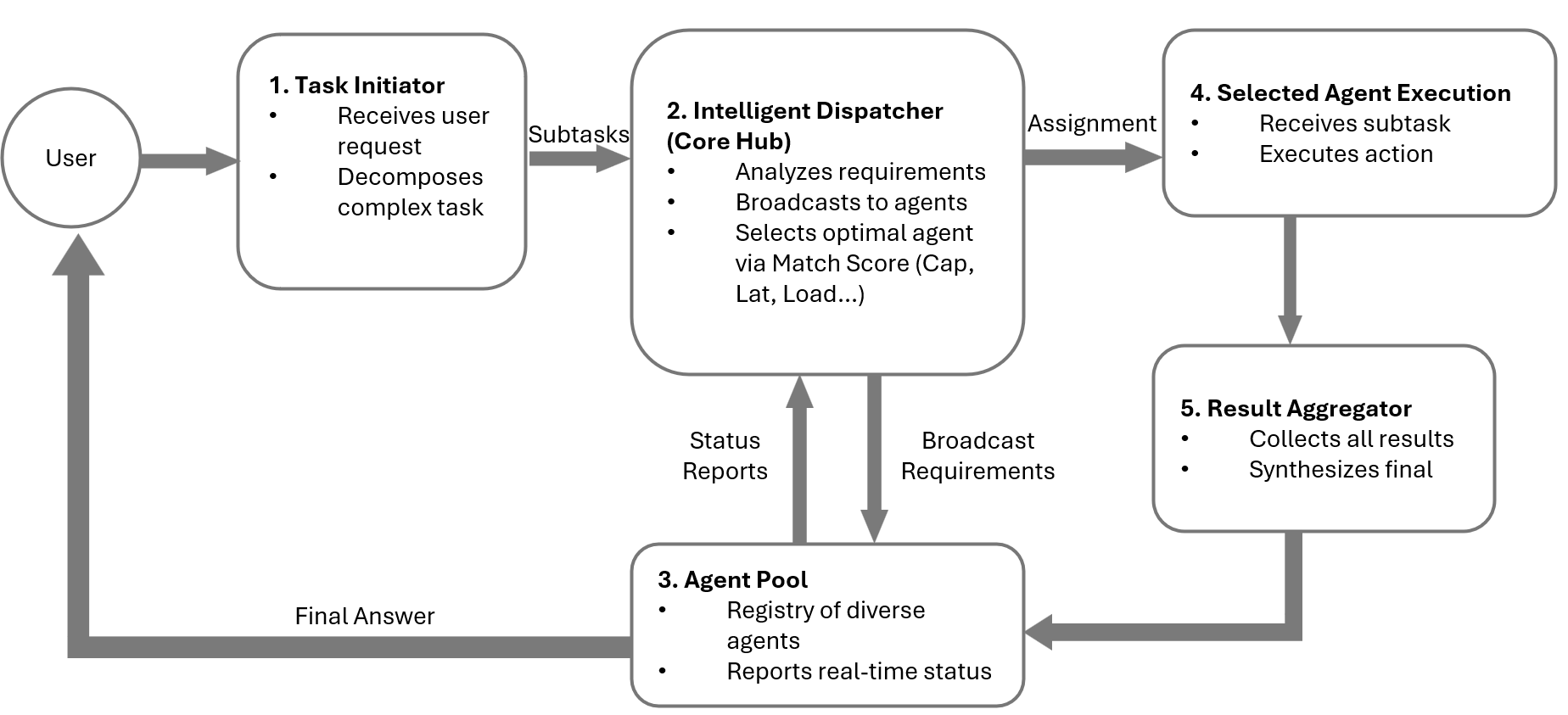}
    \caption{System Architecture}
    \label{fig:systemarch}
\end{figure}
\subsubsection{Gradual drift model}

Now assume a time-varying parameter $\theta_t^\star$ and define the total variation budget
\[
V_T \triangleq \sum_{t=2}^T \norm{\theta_t^\star - \theta_{t-1}^\star}_2.
\]
A standard approach is \emph{Sliding-Window LinUCB}: estimate parameters using only the most recent $W$ samples.
Intuitively, smaller $W$ tracks drift better but increases estimation noise; larger $W$ does the opposite.

\begin{theorem}[Dynamic regret decomposition for sliding-window LinUCB (informal but standard)]\label{thm:drift}
Assume Assumptions 1--2 hold with $\theta^\star$ replaced by $\theta_t^\star$ (bounded by $S$) and
candidate sets contain the instantaneous optimal arm.
Then sliding-window LinUCB with window size $W$ achieves a dynamic regret bound of the form
\[
\mathrm{Reg}_{\mathrm{dyn}}(T)
\;\le\;
\tilde{O}\big(d\sqrt{T W}\big)
\;+\;
O\!\Big(\frac{T}{W}\, V_T\Big),
\]
where the first term is an estimation/exploration term and the second term is a tracking (drift) term.
Choosing $W \asymp (T/V_T)^{2/3}$ yields the typical rate
$\mathrm{Reg}_{\mathrm{dyn}}(T)=\tilde{O}\big(d\,T^{2/3}V_T^{1/3}\big)$.
\end{theorem}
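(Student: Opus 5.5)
We follow the template of Theorem~\ref{thm:regret}: bound the per-round regret by an estimation term plus a bias term, then sum over rounds. The difference is that the ridge statistics now use only the last $W$ rounds,
\[
A_{t}^{W}=\lambda I+\sum_{s=t-W}^{t-1}x_{a_s,s}x_{a_s,s}^\top,\qquad \hat\theta_t^{W}=(A_t^{W})^{-1}\sum_{s=t-W}^{t-1}R_s x_{a_s,s}.
\]
Because the target $\theta_t^\star$ moves, we cannot compare $\hat\theta^W_t$ to it directly. Instead we introduce the \emph{window-averaged} parameter
\[
\bar\theta_t=(A_t^W)^{-1}\Big(\lambda\theta_t^\star+\sum_{s=t-W}^{t-1}x_{a_s,s}x_{a_s,s}^\top\theta_s^\star\Big),
\]
and split the error as $\hat\theta^W_t-\theta^\star_t=(\hat\theta^W_t-\bar\theta_t)+(\bar\theta_t-\theta^\star_t)$.

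\textbf{Step 1 (estimation).} The first piece, $\hat\theta^W_t-\bar\theta_t$ (up to the usual $\sqrt{\lambda}S$ regularization bias), is a self-normalized noise sum over the window. A windowed analogue of Lemma~\ref{lem:conf} should then give $\norm{\hat\theta^W_t-\bar\theta_t}_{A^W_t}\le\beta^W$, where $\beta^W$ is $\beta$ with $t$ replaced by $W$. The sum is no longer a single martingale indexed from $1$, so I would apply Lemma~\ref{lem:conf} for each start time of the window and take a union bound over the $T$ start times. This costs only a factor $\log T$ inside $\beta$.

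\textbf{Step 2 (drift bias).} For the second piece, write $\theta_s^\star-\theta_t^\star=\sum_{p=s}^{t-1}(\theta_p^\star-\theta_{p+1}^\star)$. Using $\norm{x}\le1$ and the fact that $(A^W_t)^{-1}\sum x x^\top$ has operator norm at most $1$, this gives
\[
|x^\top(\bar\theta_t-\theta^\star_t)|\le\sum_{p=t-W}^{t-1}\norm{\theta^\star_p-\theta^\star_{p+1}}_2 .
\]

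\textbf{Step 3 (optimism and one-step regret).} Repeat the argument of Lemma~\ref{lem:onestep} with the inflated bonus $\beta^W\norm{x}_{(A^W_{t-1})^{-1}}$, treating the drift bias as an additive slack. Optimism then holds up to that slack, and the one-step regret is at most
\[
2\beta^W\norm{x_{a_t,t}}_{(A_{t-1}^W)^{-1}}+2\sum_{p=t-W}^{t-1}\norm{\theta^\star_p-\theta^\star_{p+1}}_2 .
\]

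\textbf{Step 4 (summing the drift).} Sum the second term over $t$. Each increment $\norm{\theta^\star_p-\theta^\star_{p+1}}$ appears in at most $W$ windows, so this term totals at most $2WV_T$.

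\textbf{Step 5 (summing the estimation term; the main obstacle).} Lemma~\ref{lem:elliptic} cannot be applied to the windowed matrices directly, because $A^W_t$ is not monotone in $t$. I would partition $[T]$ into $\lceil T/W\rceil$ blocks of length $W$. Within a block, $A^W_{t-1}\succeq\lambda I+\sum_{s\in\text{block},\,s<t}x_sx_s^\top$, since every earlier point of the block lies inside the window. Lemma~\ref{lem:elliptic} on each block, combined with Cauchy--Schwarz as in the proof of Theorem~\ref{thm:regret}, gives $\tilde O(d\sqrt W)$ per block and therefore $\tilde O(d\,T/\sqrt W)$ in total. I would double-check this step most carefully.

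\textbf{Reconciling with the stated form.} My natural tradeoff is $\tilde O(dT/\sqrt W)+O(WV_T)$, whose optimum is at $W\asymp (dT/V_T)^{2/3}$ with value $\tilde O(d^{2/3}T^{2/3}V_T^{1/3})$. This matches the claimed rate in $T$ and $V_T$, and the dimension factor can be loosened to $d$. The statement's form $d\sqrt{TW}+TV_T/W$ arises from the reparametrization $W\mapsto T/W$, with $W$ read as the number of windows (blocks). Since the theorem is explicitly informal, I would state the bound in my derived parametrization and note that it is equivalent up to this renaming. With that choice of $W$ it yields $\tilde O(dT^{2/3}V_T^{1/3})$.
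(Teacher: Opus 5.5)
The paper states this theorem without proof (it is labelled ``informal but standard''), so there is no argument to compare yours against. Your outline is the standard sliding-window template. Steps 1, 3, 4 and 5 are sound: the union bound over window start times works, and so does the block-wise elliptical potential. You flagged Step 5 as the risky step, but the genuine gap is Step 2.

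To pass from the telescoped sum to $\sum_p\norm{\theta_p^\star-\theta_{p+1}^\star}_2$ you must exchange sums. That requires $\norm{(A_t^W)^{-1}B_p}_{\mathrm{op}}\le 1$ for the \emph{partial} sums $B_p=\sum_{s=t-W}^{p}x_{a_s,s}x_{a_s,s}^\top$, and this is false. The relation $0\preceq B_p\preceq A_t^W$ only says that $(A_t^W)^{-1}B_p$, being similar to $(A_t^W)^{-1/2}B_p(A_t^W)^{-1/2}$, has eigenvalues in $[0,1]$. The matrix is not symmetric, so its operator norm can be much larger; only the full window sum gives the symmetric matrix $I-\lambda(A_t^W)^{-1}$.

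The inequality you derive also fails, not just its proof. Take $d=2$ and $\lambda=1$. Play $e_1$ for $N$ rounds and then $v=(\sqrt{1-1/N},\,1/\sqrt N)$ for $N$ rounds, letting $\theta^\star$ jump by $\epsilon e_1$ between the two halves. At $t=2N+1$ with $W=2N$, a direct computation gives
\[
e_2^\top(\bar\theta_t-\theta_t^\star)=\frac{\epsilon N^{3/2}\sqrt{1-1/N}}{1+3N}\approx\frac{\epsilon\sqrt N}{3}.
\]
The variation inside the window is only $\epsilon$, so the bias exceeds your bound by a factor of order $\sqrt W$.

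What survives is the Cauchy--Schwarz bound
\[
|x^\top(\bar\theta_t-\theta_t^\star)|\le\norm{x}_{(A_t^W)^{-1}}\Big(\sum_s\big(x_{a_s,s}^\top(\theta_s^\star-\theta_t^\star)\big)^2\Big)^{1/2}\le\sqrt{W/\lambda}\sum_{p=t-W}^{t-1}\norm{\theta_p^\star-\theta_{p+1}^\star}_2 .
\]
With this, Step 4 becomes $O(W^{3/2}V_T)$. Balancing against $\tilde O(dT/\sqrt W)$ then yields only a $\tilde O(T^{3/4}V_T^{1/4})$-type rate. To my knowledge this is the same flaw found in the original $T^{2/3}$ analyses of sliding-window and discounted LinUCB, whose repaired versions give $T^{3/4}$ rates. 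Your argument therefore does not deliver $\tilde O(dT^{2/3}V_T^{1/3})$.

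Your reconciliation paragraph is also too generous to the statement, which is inconsistent under either reading of $W$:
\begin{itemize}
\item If $W$ is the window length, the first display cannot hold. Take $V_T=0$ and $W=1$: a one-sample window generally incurs linear regret even in a stationary problem.
\item If $W$ is the number of blocks, the prescribed tuning $W\asymp(T/V_T)^{2/3}$ gives $dT^{5/6}V_T^{-1/3}+T^{1/3}V_T^{5/3}$, not the claimed rate. The balancing block count would be $\asymp d^{-2/3}T^{1/3}V_T^{2/3}$.
\end{itemize}
So the statement needs to be corrected, not just renamed.
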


\section{System Architecture and Execution Pipeline}
\label{app:system}

This appendix describes how the core mechanisms in the main text are instantiated in our system implementation. Figure~\ref{fig:systemarch} provides an end-to-end overview of the execution pipeline. The system boundary is task-local: a routing layer constructs Beacons, maintains the candidate statistics, dispatches selected executors, and aggregates feedback for the current task instance, while executor agents independently perform the assigned subtasks.

\subsection{Execution and Voting: Multi-Path CoT with Post-Vote Updates}
\label{sec:exec-vote}
We expose two orthogonal controls for robustness and modularity:
(i) whether to decompose an incoming task into a sequence of subtasks using Plan-$K$, and
(ii) how many independent executions to run for each subtask using CoT.
When Plan-$K{=}1$, the input is treated as a single subtask and executed directly.
When Plan-$K{>}1$, the system generates multiple decomposition plans; each plan yields a subtask chain
$\mathrm{chain}_k=\{t_{k,1},\dots,t_{k,L_k}\}$.

For each subtask $t$, we run $P=\mathrm{CoT}$ independent executions, and record step-level context features during execution for post-hoc learning updates.
Across subtasks, the system passes a running memory of previous results to preserve dependencies.
LLM outputs may vary in formatting, so we first extract a canonical final answer.
If a structured final key is present, we extract the marked segment; otherwise we
fallback to the normalized raw output. For run $i$, this yields
\begin{equation}
y_i = \mathrm{Norm}(\mathrm{Extract}(\hat{y}_i)).
\end{equation}
By default, we apply majority voting across the $P$ runs for the current task:
\begin{equation}
S(y)=\sum_{i=1}^{P}\mathbb{I}[y_i=y],\qquad
y^\star=\arg\max_{y} S(y).
\label{eq:vote_majority}
\end{equation}
When multiple plans are generated, we optionally enable a weighted variant of voting across plans.
Each plan $k$ is assigned a plan weight computed as the average Stage-1 matching score along its subtask chain:
\begin{equation}
w_k=\frac{1}{|\mathrm{chain}_k|}\sum_{m\in \mathrm{chain}_k} s_{k,m},
\label{eq:plan_weight}
\end{equation}
where $s_{k,m}$ denotes the match score of the selected executor at step $m$ when executing plan $k$. The final plan-level output is selected by weighted voting:
\begin{equation}
S(y)=\sum_{k=1}^{K} w_k\,\mathbb{I}[y_k=y],\qquad
y^\star=\arg\max_{y} S(y).
\label{eq:vote_weighted}
\end{equation}
When CoT is used together with Plan-$K$, we first aggregate multiple runs within each plan to obtain a plan-level
output $y_k$, and then apply weighted voting across plans using $w_k$.
Note that $w_k$ is a heuristic aggregation weight derived from Stage-1 matching.

\subsection{Post-Vote Rewards and Delayed Credit Assignment}
\label{sec:post_vote_reward}

After selecting the voted final output $y^\star$ for the current task, we construct a run-level shaped reward $r_i$ and apply it
to the step records from run $i$ to perform online updates. The shaped reward combines:
(i) a base validity term, (ii) a winner bonus if the run agrees with the voted final, (iii) an optional correctness term when gold
labels are available, and (iv) a latency-based efficiency penalty:
\begin{equation}
r_i
=
\mathbb{I}[\mathrm{valid}(\hat{y}_i)]
+
\mathbb{I}[y_i=y^\star]\cdot b_{\mathrm{win}}
+
\mathbb{I}[\mathrm{gold}]\cdot
\Big(\mathbb{I}[\mathrm{correct}(y^\star)]\cdot b_{\mathrm{corr}}
-\mathbb{I}[\neg\mathrm{correct}(y^\star)]\cdot p_{\mathrm{inc}}\Big)
-
\lambda_{\mathrm{lat}}\sqrt{\mathrm{lat\_norm}(\hat{y}_i)}.
\label{eq:shaped_reward_app}
\end{equation}

Importantly, when gold labels are available during the adaptation phase, correctness is evaluated on the voted final output $y^\star$, and this correctness signal is broadcast to all steps via the post-vote update. In the held-out benchmark test phase reported in the main tables, the selector is frozen and this update rule is not applied to test samples; gold labels are used only for offline accuracy reporting.

The agreement term $\mathbb{I}[y_i=y^\star]\cdot b_{\mathrm{win}}$ is intentionally treated as a bounded proxy rather than a standalone reward. Majority agreement can be misleading when several agents share a correlated error pattern. Therefore, the reported configuration uses agreement together with other reward components: the base validity term filters malformed or contract-violating outputs, the latency penalty discourages slow consensus-seeking behavior, and the optional correctness term can override pure agreement when labels are available during adaptation. In deployments without labels, the same reward interface can incorporate rule-based verifiers, retrieval checks, unit tests, calibrated confidence thresholds, or human/audit review before committing the update. The router can also monitor answer diversity and reduce $b_{\mathrm{win}}$ when consensus becomes too concentrated or conflicts with an external verifier. These mechanisms reduce the risk that a majority-vote signal could otherwise train the router toward common but wrong answers.

\subsection{Online Update Triggering}
\label{sec:update_trigger}

We update the online selector using the step records collected during execution and the shaped reward constructed post-vote.

In the default CoT path, updates are applied after the subtask-level vote is resolved. This deferred update implements delayed credit assignment: the selected action is recorded at routing time, and the corresponding reward is attached only after output normalization and voting complete. Unresolved actions remain in a pending queue and are not used in the ridge estimator until their rewards become available. This prevents post-vote information from changing the historical action record, while making the theoretical gap a matter of stale updates rather than information leakage.
\subsection{Implementation Detail: Matching Similarity Used by Routing and Voting Weights}
\label{sec:match_score_impl}
The Stage-1 matching signal used throughout the system follows a two-path implementation:
\begin{itemize}
  \item \textbf{Embedding path.}
  We form an agent capability text from the agent's declared capabilities, and form a subtask text by concatenating the requirement with the subtask input description. We then compute cosine similarity between their embeddings and rescale it to $[0,1]$.
  \item \textbf{Lexical fallback.}
  We fall back to a normalized lexical similarity computed between the subtask requirement and the agent capability text.
  The prompt portion is not used in this fallback path.
\end{itemize}

This definition is shared by (i) $\mathrm{match\_score}(\cdot)$ in Stage-1 filtering, (ii) the plan-weight average in
Eq.~\eqref{eq:plan_weight}, and (iii) the similarity feature $\mathrm{sim\_emb}(\cdot)$ inside the LinUCB context vector.

The reliability term in Eq.~\eqref{eq:stage1_score} is computed from runtime status logs rather than from semantic similarity. It aggregates recent successful completion, timeout or service errors, contract-valid output, availability, and latency stability into a normalized score. This separates short-term service health from $\mathrm{prior\_success}(j)$, which summarizes longer-horizon historical task success, and from $\mathrm{match\_score}(j,t)$, which measures task--capability alignment.

\section{Comparison of Multi-Agent Frameworks}
\label{Comparsion}
\begin{figure}[!htbp]
  \centering
  \includegraphics[width=0.65\linewidth]{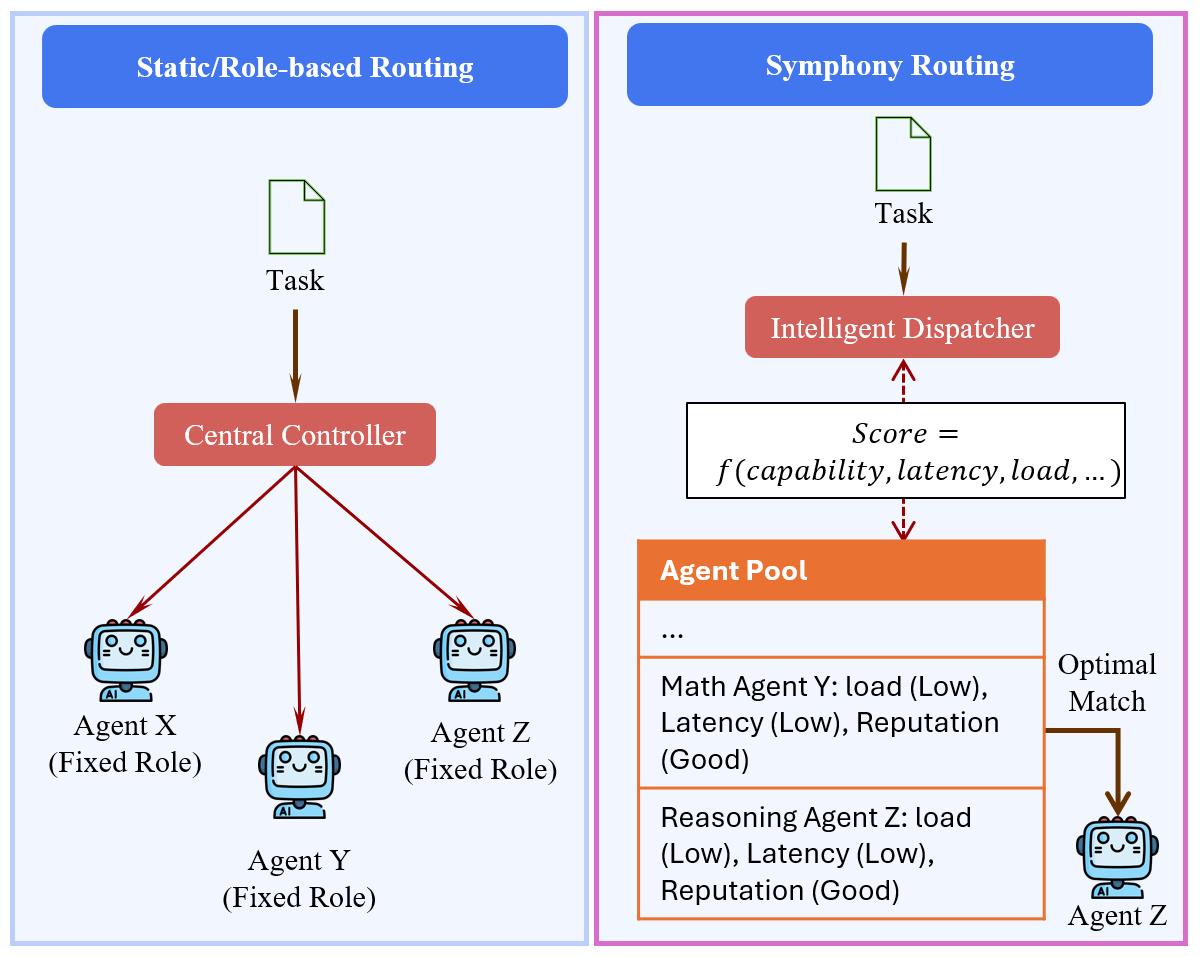}
  \caption{Comparison between static role routing and Symphony-Coord's task-local adaptive routing.}
  \label{fig:routing_compare}
\end{figure}

Figure~\ref{fig:routing_compare} contrasts Symphony-Coord with static and role-based routing approaches. Rather than binding agents to fixed task roles, Symphony-Coord uses a task-local dispatcher that dynamically adapts routing decisions to both the task context and current agent states. This design improves robustness against agent churn and reduces wasted calls by prioritizing the most promising candidates.

\paragraph{Framework-level positioning.}
\begin{table*}[!htbp]
\centering
\caption{Qualitative comparison of multi-agent orchestration frameworks.}
\label{tab:framework_comparison}
\scriptsize
\setlength{\tabcolsep}{3.5pt}
\renewcommand{\arraystretch}{1.08}

\newcolumntype{L}[1]{>{\raggedright\arraybackslash}p{#1}}

\begin{tabularx}{\textwidth}{
L{0.13\textwidth}
L{0.18\textwidth}
L{0.16\textwidth}
L{0.19\textwidth}
X
}
\toprule
\textbf{Framework}
& \textbf{Coordination style}
& \textbf{Routing unit}
& \textbf{Capability matching}
& \textbf{Adaptation / failure behavior} \\
\midrule

AutoGen
& Controller-style orchestration
& Conversation / tool workflow
& Not a primary objective
& Simple coordination, but the controller can become a bottleneck. \\

MetaGPT
& Hierarchical workflow
& Predefined roles
& Role-assignment based
& Stable role structure, but less adaptive to changing executor quality. \\

CrewAI
& Controller-style task management
& Predefined agents and tasks
& Mostly design-time
& Easy to compose, but role/task routing is largely fixed after configuration. \\

GPTSwarm
& Optimizable agent graph
& Graph nodes and edges
& Indirect via graph optimization
& Adapts graph structure, but not focused on runtime subtask executor selection. \\

AFLOW
& Automated workflow generation
& Workflow operators
& Indirect via workflow search
& Optimizes workflow structure; runtime service health is not the central signal. \\

MorphAgent
& Decentralized profile evolution
& Self-evolving agent profiles
& Profile-driven
& Refines roles/profiles through decentralized collaboration. \\

\midrule
\rowcolor{revblue!8}
\textbf{Symphony-Coord}
& \textbf{Decentralized execution with task-local routing}
& \textbf{Subtask executor selection}
& \textbf{Explicit Top-$L$ prescreen + runtime features}
& \textbf{Reallocates using feedback while keeping bounded task-local coordination.} \\

\bottomrule
\end{tabularx}
\end{table*}
Table~\ref{tab:framework_comparison} summarizes the architectural distinction between Symphony-Coord and representative multi-agent orchestration frameworks. The comparison is qualitative: it is intended to locate the source of the contribution, not to claim that one framework dominates all others across tasks. Symphony-Coord is closest to adaptive coordination systems such as GPTSwarm, AFLOW, and MorphAgent, but differs in treating subtask-level executor choice as an online routing problem with capability-aware prescreening, runtime-state features, and post-execution feedback. Thus, the contribution is not a new LinUCB algorithm or decentralization alone; it is the combination of decentralized worker execution with bounded, feedback-driven task-local routing.

\paragraph{Decentralization boundary.}
The most precise characterization of Symphony-Coord is \emph{decentralized execution with task-local coordination}. Executor agents run independently and may differ in capability, cost, latency, and availability, but the current task still has a bounded routing loop that constructs Beacons, forms the candidate set, selects executors, aggregates outputs, and applies delayed feedback updates. Table~\ref{tab:decentralization_boundary} breaks down this boundary at the component level.

\begin{table}[!htbp]
\centering
\caption{Component-level coordination boundary in Symphony-Coord.}
\label{tab:decentralization_boundary}
\footnotesize
\setlength{\tabcolsep}{3.2pt}
\renewcommand{\arraystretch}{1.03}

\newcolumntype{L}[1]{>{\raggedright\arraybackslash}p{#1}}

\begin{tabularx}{\textwidth}{
L{0.17\textwidth}
L{0.15\textwidth}
L{0.21\textwidth}
L{0.22\textwidth}
X
}
\toprule
\textbf{Component}
& \textbf{Control scope}
& \textbf{Main function}
& \textbf{Scalability effect}
& \textbf{Failure effect} \\
\midrule

Worker / executor agents
& Decentralized
& Execute assigned subtasks independently
& Enables parallel execution over heterogeneous agents
& Failures are mostly localized to assigned subtasks. \\

Capability / runtime state
& Shared task-level state
& Stores capability tags, priors, reliability, latency, load, and availability
& Cached signals reduce repeated full-pool probing
& Stale or noisy signals may weaken Stage-1 filtering. \\

Beacon prescreen
& Task-local coordination
& Builds the Top-$L$ candidate set from lightweight signals
& Bounds communication and compute before online selection
& Over-pruning may exclude a strong executor. \\

LinUCB selector / updater
& Task-local coordination
& Selects within $C_t$ and updates from delayed feedback
& Keeps online learning local to the routed task stream
& Corrupted feedback can affect current routing adaptation. \\

Voting / aggregation
& Task-local coordination
& Normalizes outputs, votes, and constructs post-vote rewards
& Avoids open-ended negotiation among workers
& Consensus errors may bias rewards; verifier gates and diversity checks mitigate this risk. \\

\bottomrule
\end{tabularx}
\end{table}
This boundary clarifies the scalability and failure claims. Symphony-Coord does not remove coordination; it makes coordination bounded and task-local. The main scalability benefit comes from not executing or comparing the full pool at every decision, while the remaining task-local router provides a controlled place to combine capability signals, runtime state, and feedback. The main failure benefit comes from separating executor failures from routing updates: worker degradation can be routed around after feedback, whereas stale candidate information or noisy consensus can still affect a task instance. This is why the Top-$L$ analysis in Appendix~\ref{sec:Supplementary Experiment} and the reward discussion in Section~\ref{sec:post_vote_reward} are part of the system interpretation rather than separate implementation details. For consensus-based reward shaping, deployment should treat agreement as one bounded signal among validity, latency, verifier, and audit signals; this prevents the router from learning a preference for agents that merely agree with common errors.

\section{Experimental Setup}
\label{app:exp}

This appendix describes the datasets, splits, routing budgets, and evaluation protocol used in the main benchmark results and supplementary system studies.

\subsection{Benchmarks}

Our main benchmark results are reported on three task families: GSM8K, BBH, and MedicalQA.
These benchmarks cover mathematical reasoning, multi-step logical reasoning, and domain-specific medical question answering, respectively.

\begin{itemize}
    \item \textbf{GSM8K} consists of grade-school math word problems that require multi-step arithmetic reasoning~\cite{cobbe2021gsm8k}.
    \item \textbf{BBH} contains challenging BIG-Bench Hard tasks that emphasize multi-step symbolic, logical, and compositional reasoning~\cite{suzgun2023bbh}.
    \item \textbf{MedicalQA} is based on MedQA-USMLE-style clinical reasoning questions and evaluates domain-specific medical knowledge and decision making~\cite{jin2021medqa}.
\end{itemize}

Additional auxiliary task sources, such as AMC and HumanEval, are used only in supplementary mixed-task or profiling studies when explicitly stated.
They are not part of the main benchmark table.

\subsection{Agent Pools and Routing Budgets}
\label{app:agent_pools}

We use pre-specified backbone and candidate-agent configurations for each experiment.
In the main benchmark results, comparisons are made under matched backbone settings: within each backbone block, all baselines and Symphony-Coord are evaluated on the same task stream, answer-normalization pipeline, and generation budget.
For system-level studies that require an explicit candidate pool, such as the MedQA cost-efficiency analysis, we use a fixed candidate set and keep it unchanged across all compared methods.
When an experiment changes the candidate-pool size, such as the scalability stress test from $N=5$ to $N=100$, this change is stated explicitly and is used only to study scaling behavior.

For Symphony-Coord, we use the same two-stage routing pipeline throughout: Stage 1 applies Top-$L$ candidate filtering, and Stage 2 performs LinUCB-based online selection within the shortlisted candidates.
Unless an experiment explicitly varies the candidate-pool size or routing budget, we fix Top-$L=3$ so that the online selector operates under a controlled and constant selection budget.

\subsection{Data Splits and Online Evaluation Protocol}

We evaluate Symphony-Coord using a three-phase protocol: cold start, pre-train, and held-out test.
The cold-start phase provides an initial stream for bootstrapping routing statistics.
During the pre-train phase, the online selector updates from execution feedback.
During the held-out test phase, the selector is frozen: no bandit update is performed on test samples, and gold labels are used only for offline accuracy computation.

For MedicalQA ablations, we use the same $200/300/100$ cold-start, pre-train, and test protocol as in the main ablation study.
For BBH, we use a fixed 200-problem test set, with 10 examples per task, and use the remaining sampled online examples for cold-start and pre-train streams.
For GSM8K and MedicalQA main evaluations, we follow the same cold-start/pre-train/test construction unless otherwise specified.

All methods are evaluated with the same answer extraction and normalization pipeline.
For multiple-choice tasks, the final answer is normalized to the option token.
For mathematical tasks, numeric answers are normalized before exact-match evaluation.
For code or structured-output auxiliary tasks, outputs are parsed according to the required schema.

\subsection{Routing and Generation Budgets}

Unless otherwise stated, Symphony-Coord uses the same two-stage routing pipeline as described in the main method.
Stage 1 applies Top-$L$ candidate filtering, and Stage 2 performs LinUCB-based online selection within the shortlisted candidates.
We use Top-$L=3$ as the default routing budget in the main and supplementary experiments, so that the online selector compares only a fixed number of shortlisted candidates rather than the full agent pool.

For decomposition and reasoning budgets, we use Plan-$K=3$ and CoT $=3$ by default.
Plan-$K$ controls the number of decomposition plans, while CoT controls the number of independent reasoning runs per subtask before normalization and voting.
Sensitivity to these two budgets is reported separately in Appendix~\ref{app:budget_sensitivity}.

\subsection{Mixed Task Pool}

For supplementary system studies, we construct a mixed task pool by combining examples from the available benchmark families into a single stream.
Each example retains its dataset identity, which is used only for analysis and diagnostic reporting.
The router does not receive the dataset label as an explicit supervision signal; it relies on task--agent matching scores, runtime state, and feedback-driven updates.

\subsection{Difficulty Tags}

We use lightweight difficulty tags only for stratified sampling and diagnostic analysis in supplementary studies.
Difficulty is not a main benchmark metric and is not used as a direct supervision signal for held-out test evaluation.
Within each dataset, we compute a simple proxy score from task-specific observable features, such as input length, reference-solution step count when available, option length, or structured-output complexity.
Scores are normalized within each dataset and then binned by percentiles:
\[
\text{bin}(d) =
\begin{cases}
\text{easy}, & d \leq P_{20}(D), \\
\text{hard}, & d \geq P_{80}(D), \\
\text{medium}, & \text{otherwise},
\end{cases}
\]
where $P_k(D)$ denotes the $k$-th percentile of the difficulty distribution within the corresponding dataset.
When a supplementary experiment requires clear separation between simple and challenging tasks, we sample from the easy and hard bins; otherwise, all difficulty bins are retained.

\section{Pseudocode}
\label{app:algorithm}
Algorithm~\ref{alg:linucb_beacon} summarizes the LinUCB-based beacon selection procedure used in Symphony-Coord.
At each round, Stage 1 first produces a Top-$L$ candidate set, and the online selector then constructs a context vector for each shortlisted agent and computes its UCB score.
The agent with the highest score is selected for execution.
Because feedback may be delayed until post-vote aggregation is completed, the algorithm updates the LinUCB parameters only when resolved rewards become available.
This procedure keeps the routing budget fixed while allowing the selector to adapt continuously from execution feedback.

\begin{algorithm}[!ht]
\caption{Pseudocode of LinUCB Beacon Selection}
\label{alg:linucb_beacon}
\begin{algorithmic}[1]
\STATE \textbf{Initialize:} $A_0 = \lambda I$, $b_0 = 0$, $\hat{\theta}_0 = 0$
\FOR{round $t = 1, \ldots, T$}
    \STATE Receive candidate set $C_t$ from Stage 1 (Top-$L$ filtering)
    \FOR{each agent $j \in C_t$}
        \STATE Build context $x_{j,t}$  \COMMENT{Eq.~\eqref{eq:context_vec}}
        \STATE Compute $\mathrm{UCB}_{j,t}$ \COMMENT{Eq.~\eqref{eq:ucb}}
    \ENDFOR
    \STATE Select agent $a_t = \arg\max_{j \in C_t} \mathrm{UCB}_{j,t}$ and log $(x_{a_t,t},a_t)$
    \STATE Route task to agent $a_t$; reward may resolve after post-vote aggregation
    \STATE For each resolved reward $r_s$, update $A \leftarrow A + x_{a_s,s} x_{a_s,s}^\top$, $b \leftarrow b + r_s x_{a_s,s}$
    \STATE Update $\hat{\theta} \leftarrow A^{-1} b$
\ENDFOR
\end{algorithmic}
\end{algorithm}

\section{Supplementary Experiment}
\label{sec:Supplementary Experiment}
\subsection{Raw Cost and Efficiency Analysis on MedQA}
\label{app:raw_cost_efficiency}

This appendix reports the raw efficiency statistics behind the cost--accuracy Pareto frontier in the main text.
We compare AutoGen, MetaGPT, AFLOW, and Symphony-Coord on MedQA across five LLM backbones: DeepSeek, GPT-OSS-120B, Qwen-2.5-7B, Gemini-2.5-flash-lite, and Grok-4.1-fast.
We report two deployment-oriented metrics: average end-to-end latency per query and average token consumption per query.
For Symphony-Coord, we use the same two-stage routing pipeline as in the main experiments and fix Top-$L=3$, so that the online selection budget remains controlled and does not require exhaustive full-pool comparison.

\begin{figure*}[!htbp]
    \centering
    \includegraphics[width=1\textwidth]{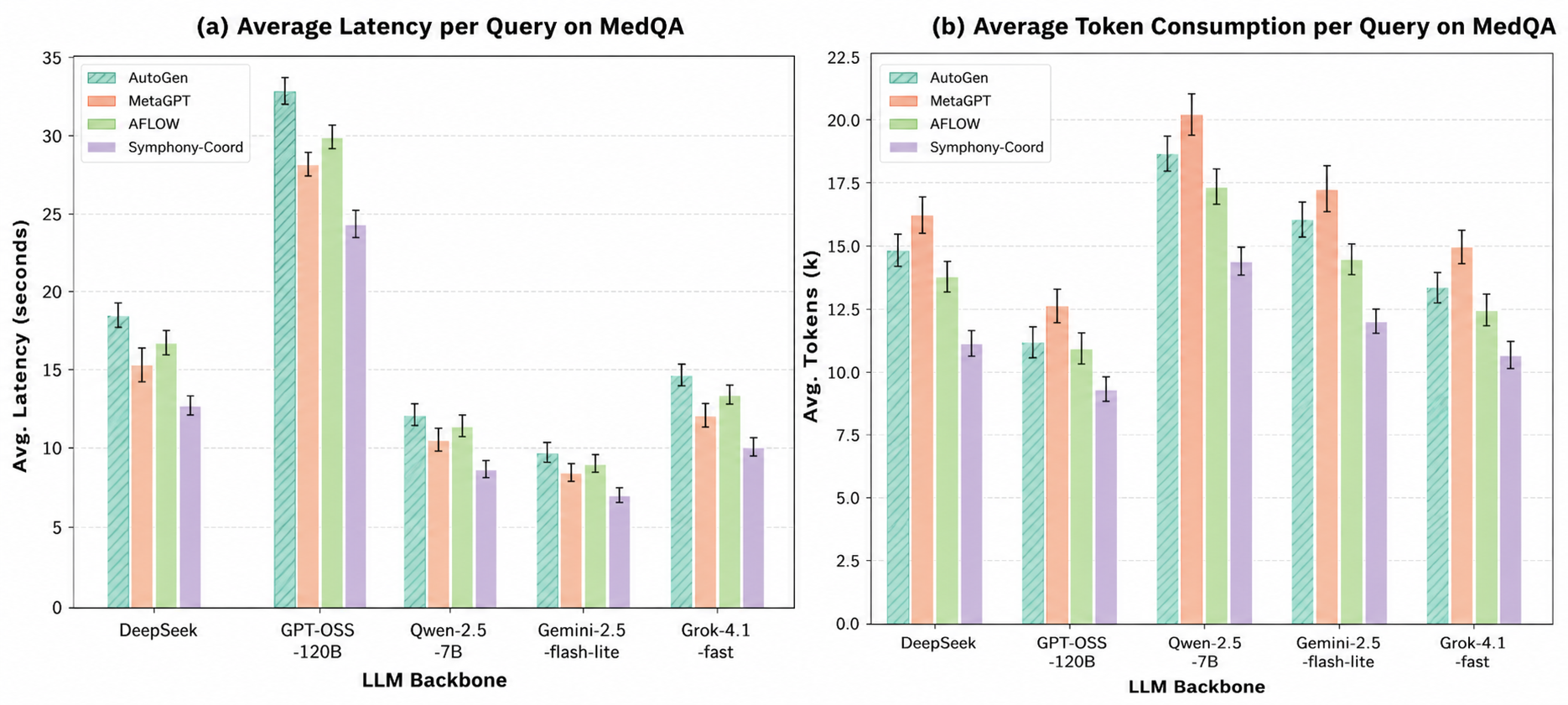}
    \caption{Raw cost and efficiency statistics on MedQA across five LLM backbones.
    Panel (a) reports average latency per query, and panel (b) reports average token consumption per query.
    Symphony-Coord fixes Top-$L=3$ and consistently achieves lower latency and token consumption than AutoGen, MetaGPT, and AFLOW.
    Error bars indicate variation across repeated runs.}
    \label{fig:avg_latency_token_medqa}
\end{figure*}

Figure~\ref{fig:avg_latency_token_medqa} shows that Symphony-Coord achieves the lowest latency and token consumption across all evaluated backbones.
For example, on DeepSeek, it reduces latency from $18.4$s to $12.6$s compared with AutoGen, corresponding to a $31.5\%$ relative reduction.
On GPT-OSS-120B, it reduces token consumption from $12.7$k tokens per query for MetaGPT to $9.4$k, a $26.0\%$ reduction.
These results support the main-text Pareto analysis: Symphony-Coord reduces practical inference cost by using lightweight Top-$L$ filtering and task-local online selection, rather than exhaustive agent comparison or heavier workflow-level coordination.
\subsection{Long-Horizon Stability and Delayed Feedback}
\label{app:delayed_feedback}

We further evaluate whether Symphony-Coord remains stable when feedback is delayed during long-horizon execution.
We use the same two-stage routing pipeline as in the main experiments and fix Top-$L=3$ for all runs, so that the online selection budget is unchanged.
The candidate pool, task stream, and generation settings are kept fixed, and we vary only the feedback delay $K$.
A delay of $K$ means that LinUCB updates are applied after every $K$ completed tasks rather than after each task.
We evaluate $K \in \{1,5,10,20,50\}$, where $K=1$ corresponds to immediate feedback.

\begin{figure*}[!htbp]
    \centering
    \includegraphics[width=0.98\textwidth]{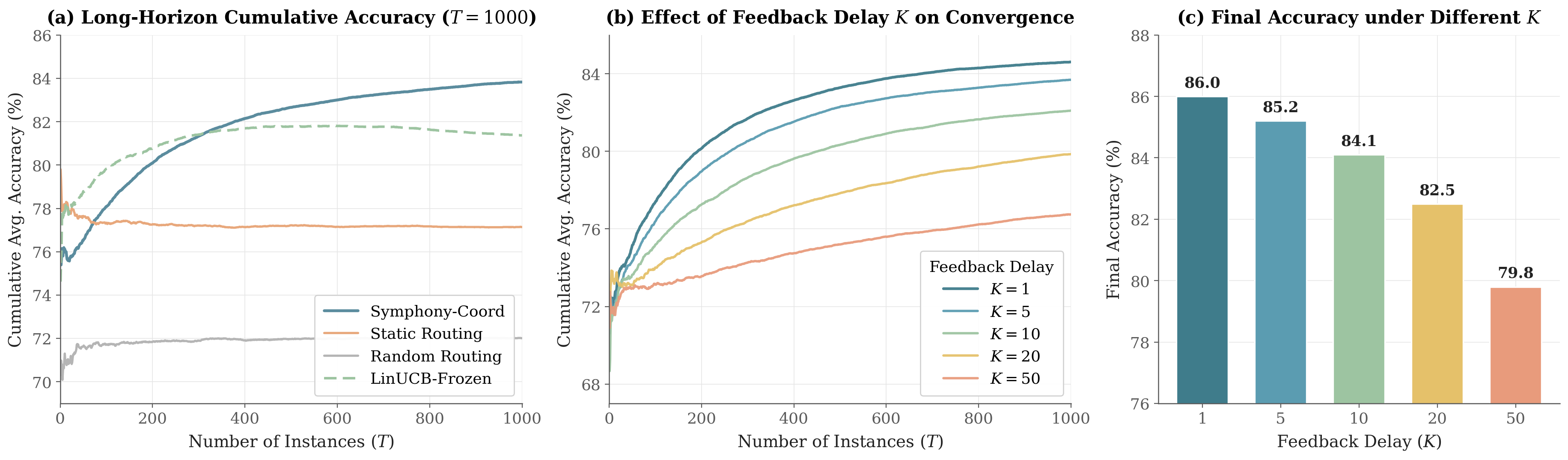}
    \caption{Long-horizon stability and delayed-feedback robustness.
    Symphony-Coord fixes Top-$L=3$ in all runs.
    Panel (a) compares cumulative accuracy over $T=1000$ instances.
    Panel (b) shows convergence under different feedback delays $K$.
    Panel (c) reports final accuracy under each delay setting.}
    \label{fig:delayed_feedback}
\end{figure*}

Figure~\ref{fig:delayed_feedback} shows that Symphony-Coord remains stable under long-horizon execution and delayed feedback.
Its cumulative accuracy improves steadily and stays above static and random routing.
The frozen LinUCB variant degrades because it cannot incorporate new feedback, confirming the importance of continual online updates.
Moderate delays, such as $K=10$, yield final accuracy close to immediate feedback, while larger delays lead to gradual performance drops due to less frequent learning signals.
Overall, these results indicate that the fixed-budget Top-$L$ filtering and LinUCB selection pipeline can tolerate realistic post-vote feedback latency without relying on a larger candidate-selection budget.

\subsection{Ablation}
Table~\ref{tab:stage1_component_ablation} isolates the Stage-1 composite score while keeping Stage-2 LinUCB, subtask decomposition, reward design, and the agent pool fixed.
Here, match\_score measures task--agent compatibility, prior\_success captures historical success, and reliability reflects recent stability; the reported columns correspond to cold-start accuracy, pre-train accuracy, and final test accuracy over 100 test examples.
\begin{table*}[!htbp]
\centering
\small
\setlength{\tabcolsep}{5pt}
\renewcommand{\arraystretch}{1.05}
\caption{Component-level ablation of the Stage-1 composite score.}
\label{tab:stage1_component_ablation}
\begin{tabular*}{\textwidth}{@{\extracolsep{\fill}} l c c c c}
\toprule
Variant & Test Acc. & Test N & Pretrain Acc. & Cold Acc. \\
\midrule
\rowcolor{symphonyRow}
\textbf{Full score}          & \textbf{86.00} & \textbf{100} & \textbf{83.00} & \textbf{55.00} \\
w/o prior\_success  & 83.00 & 100 & 76.00 & 49.00 \\
w/o reliability     & 81.00 & 100 & 77.70 & 47.50 \\
w/o match\_score    & 80.00 & 100 & 77.30 & 50.50 \\
match\_score only   & 79.00 & 100 & 76.70 & 50.00 \\
prior\_success only & 75.00 & 100 & 77.00 & 46.50 \\
reliability only    & 73.00 & 100 & 75.30 & 49.50 \\
\bottomrule
\end{tabular*}
\end{table*}
\subsection{Routing Weight Distribution on the Mixed Dataset}
\label{sec:weight-distribution}

This experiment examines whether Symphony-Coord learns a non-uniform routing pattern after online adaptation. We compare the routing-share distribution of a uniform baseline against Symphony-Coord at two levels: the plan level, which captures which agent contributes the selected decomposition plan, and the subtask level, which captures which agent is invoked during step-wise execution.

Table~\ref{tab:weight_dist} shows that the baseline assigns equal routing mass to all agents, with 20.00\% share per agent and entropy 1.61 at both levels. In contrast, Symphony-Coord concentrates plan-level routing on Agent A with an 82.00\% Top-1 share, while subtask-level routing is more distributed across Agents A, B, and C. This pattern indicates that the learned policy does not merely collapse to uniform or random assignment: it forms a stronger preference for planning while preserving more diversity during subtask execution.
\begin{table*}[!htbp]
\centering
\small
\setlength{\tabcolsep}{5pt}
\renewcommand{\arraystretch}{1.12}
\caption{Plan-level and subtask-level routing weight distribution on the mixed dataset. Values are routing shares (\%). Top-1 share denotes the share of the most selected agent, and entropy measures concentration of the routing distribution.}
\label{tab:weight_dist}

\begin{tabular*}{\textwidth}{@{\extracolsep{\fill}} llccccccc}
\toprule
Method & Level & Agent A & Agent B & Agent C & Agent D & Agent E & Top-1 share & Entropy \\
\midrule
Baseline & Plan     & 20.00 & 20.00 & 20.00 & 20.00 & 20.00 & 20.00 & 1.61 \\
Baseline & Subtask  & 20.00 & 20.00 & 20.00 & 20.00 & 20.00 & 20.00 & 1.61 \\
\midrule
Symphony-Coord & Plan    & \textbf{82.00} & 10.00 & 8.00  & 0.00 & 0.00 & 82.00 & 0.60 \\
Symphony-Coord & Subtask & \textbf{56.00} & 24.00 & 20.00 & 0.00 & 0.00 & 56.00 & 0.99 \\
\bottomrule
\end{tabular*}
\end{table*}

\subsection{Sensitivity to Planning and Reasoning Budgets}
\label{app:budget_sensitivity}
We study two budget knobs: Plan-$K$, which controls planning diversity, and CoT, which specifies the number of reasoning runs per subtask, followed by normalized extraction and voting.
Table~\ref{tab:phasewise-cot-k} shows diminishing returns for both.
For planning, Plan-$K{=}3$ performs best, improving over $K{=}1$, while $K{=}5$ introduces additional coordination noise and reduces accuracy. For reasoning, CoT 3 also gives the strongest results, outperforming both CoT 1 and CoT 5.
We therefore use Plan-$K{=}3$ and CoT 3 by default, which provide the best accuracy--cost trade-off.

{
\color{black}
\begin{table}[!ht]
  \centering
  \footnotesize
  \setlength{\tabcolsep}{4.2pt}
  \renewcommand{\arraystretch}{0.78}

  \caption{Phase-wise accuracy on MedicalQA under different CoT and K settings.}
  \label{tab:phasewise-cot-k}

  \begin{tabular*}{\textwidth}{@{\extracolsep{\fill}} l c c c c}
    \toprule
    \textbf{Setting} & \textbf{Configuration} & \textbf{Cold start (\%)} & \textbf{Pre-train (\%)} & \textbf{Test (\%)} \\
    \midrule
    \multirow{3}{*}{CoT}
      & 1 & 52.50 & 76.00 & 79.00 \\
      & 3 & 55.00 & 83.00 & 86.00 \\
      & 5 & 49.00 & 74.00 & 80.00 \\
    \midrule
    \multirow{3}{*}{K}
      & 1 & 48.50 & 77.00 & 79.00 \\
      & 3 & 55.00 & 83.00 & 86.00 \\
      & 5 & 52.50 & 75.67 & 81.00 \\
    \bottomrule
  \end{tabular*}
\end{table}
}

\subsection{Adaptation to Changing Tasks Under Ordered Domain Shift.}
Table~\ref{tab:recovery_compare} evaluates adaptation under an ordered domain-shift sequence, Med-A $\rightarrow$ GSM8K $\rightarrow$ Med-B, by comparing performance on the first 100 and last 100 examples in each stage.

\begin{table*}[!htbp]
\centering
\small
\renewcommand{\arraystretch}{1.15}
\caption{Performance comparison on Med-A, GSM8K, and Med-B under first100 and last100 settings.}
\label{tab:recovery_compare}
\begin{tabular*}{\textwidth}{@{\extracolsep{\fill}} lcccccc}
\toprule
Method
& \makecell{Med-A\\first100}
& \makecell{Med-A\\last100}
& \makecell{GSM8K\\first100}
& \makecell{GSM8K\\last100}
& \makecell{Med-B\\first100}
& \makecell{Med-B\\last100} \\
\midrule
Symphony-Coord & 76.00 & 84.00 & 69.00 & 77.00 & 74.00 & 85.00 \\
Static Top-1   & 77.00 & 75.00 & 68.00 & 67.00 & 74.00 & 75.00 \\
LinUCB-Frozen  & 76.00 & 84.00 & 69.00 & 65.00 & 70.00 & 71.00 \\
\bottomrule
\end{tabular*}
\end{table*}

Symphony-Coord exhibits a short drop immediately after each shift but recovers within the same stage, improving from 69 to 77 on GSM8K and from 74 to 85 on Med-B. In contrast, Static Top-1 adapts much more slowly, and LinUCB-Frozen fails to recover once online updating is disabled, especially after the shift to GSM8K and Med-B. These results support the narrower claim that continued online updating helps after observed domain shifts; they do not imply zero degradation at the shift point or unrestricted scalability to arbitrary new workloads.
The remaining experiments are presented in Appendix \ref{sec:Supplementary Experiment}.
\subsection{Trade-off Between Top-$L$ Filtering and End-to-End Accuracy}
\label{sec:topl-tradeoff}

Table~\ref{tab:topl_poolsize} examines how Top-$L$ and pool size affect oracle recall, exclusion gap, and final test accuracy.
Overall, larger Top-$L$ values preserve the oracle-best executor more often and reduce the exclusion gap.
However, this improvement in oracle coverage does not translate monotonically into end-to-end accuracy.
The best result is achieved by the intermediate setting Top-$L{=}3$ with pool size 20, which obtains 81\% Oracle Recall@L, a 0.063 exclusion gap, and 86\% test accuracy.
In contrast, very large candidate budgets achieve near-perfect oracle recall, e.g., 98--99\%, and very small exclusion gaps, but their test accuracy drops to 76--78\%.
This suggests that once the oracle-best executor is usually retained, further enlarging the candidate pool mainly adds routing and coordination noise.
Therefore, moderate filtering budgets provide the best accuracy--efficiency trade-off.

\begin{table}[!htbp]
\centering
\small
\setlength{\tabcolsep}{6pt}
\renewcommand{\arraystretch}{1.12}
\caption{Effect of Top-$L$ and pool size on oracle recall, exclusion gap, and test accuracy.}
\label{tab:topl_poolsize}
\begin{tabular}{ccccc}
\toprule
Top-$L$ & Pool size & Oracle Recall@L (\%) & Avg.\ Exclusion Gap & Test Acc.\ (\%) \\
\midrule
3  & 15  & 85 & 0.054 & 83 \\
\textbf{3 } & \textbf{20 } & \textbf{81} & \textbf{0.063} & \textbf{86} \\
5  & 25  & 92 & 0.026 & 82 \\
5  & 30  & 89 & 0.033 & 80 \\
7  & 35  & 94 & 0.018 & 77 \\
9  & 40  & 97 & 0.010 & 79 \\
15 & 40  & 98 & 0.007 & 78 \\
15 & 50  & 98 & 0.008 & 77 \\
20 & 70  & 99 & 0.005 & 77 \\
30 & 100 & 99 & 0.003 & 76 \\
\bottomrule
\end{tabular}
\end{table}

\paragraph{Metric definitions.}
To avoid ambiguity, we define the additional diagnostic metrics used in Table~\ref{tab:topl_poolsize} as follows.
For each held-out routing instance $t$, let $\mathcal{A}_t$ denote the currently available agent set and let
$r^{\mathrm{oracle}}_{t,j}$ denote the offline oracle reward of agent $j$ on instance $t$.
In our analysis, this oracle reward is used only for offline diagnosis and does not participate in test-time online updates.
A correctness-dominant definition is
\[
r^{\mathrm{oracle}}_{t,j}
=
\mathbf{1}[\mathrm{correct}_{t,j}]
-
\lambda_{\mathrm{lat}} \cdot \mathrm{latency\_norm}_{t,j},
\]
where $\lambda_{\mathrm{lat}}$ is a small coefficient used only to slightly distinguish execution efficiency when correctness is comparable.
If the analysis aims to focus purely on accuracy, we can also use
\[
r^{\mathrm{oracle}}_{t,j}
=
\mathbf{1}[\mathrm{correct}_{t,j}].
\]

\textbf{Oracle-best executor.}
For each held-out routing instance $t$, we offline enumerate all agents in $\mathcal{A}_t$ and define the oracle-best executor as the agent with the highest offline oracle reward:
\[
j_t^*
=
\arg\max_{j \in \mathcal{A}_t}
r^{\mathrm{oracle}}_{t,j}.
\]

\textbf{Oracle Recall@L.}
Let $C_t^{(L)}$ denote the candidate set retained by the Stage-1 Top-$L$ filter.
Oracle Recall@L measures the proportion of held-out routing instances for which the oracle-best executor is preserved after filtering:
\[
\mathrm{Oracle\ Recall@L}
=
\frac{1}{T}
\sum_{t=1}^{T}
\mathbf{1}\!\left[j_t^* \in C_t^{(L)}\right]
\times 100\%.
\]
This metric directly measures how often Stage-1 filtering keeps the truly best executor instead of filtering it out.

\textbf{Avg.\ Exclusion Gap.}
For the $t$-th routing instance, we define the filtering-induced exclusion gap as the difference between the oracle reward of the oracle-best executor and the best oracle reward among the retained candidates:
\[
\Delta_t^{\mathrm{filter}}
=
r^{\mathrm{oracle}}_{t,j_t^*}
-
\max_{j \in C_t^{(L)}}
r^{\mathrm{oracle}}_{t,j}.
\]
The average exclusion gap is then
\[
\mathrm{Avg.\ Exclusion\ Gap}
=
\frac{1}{T}
\sum_{t=1}^{T}
\Delta_t^{\mathrm{filter}}.
\]
This metric captures the practical loss caused by Stage-1 filtering when the oracle-best executor is excluded.
It corresponds to an empirical proxy for the filtering regret, or exclusion loss, analyzed in the appendix.

\subsection{Routing Diagnostics}
\label{app:routing_diagnostics}

Before analyzing routing distributions, we perform sanity-check diagnostics to verify that the routing logs and evaluation streams are reliable.
Table~\ref{app:routing_diagnostics} reports four normalized scores in $[0,1]$, where higher is better: weight normalization checks whether routing weights are parseable and consistently normalized; task-type coverage balance checks whether the evaluation stream is not dominated by a single task type; appropriate match measures whether the router selects the strongest empirical agent for each task type; and trajectory smoothness measures whether routing distributions change smoothly across adjacent windows.

\begin{table}[!ht]
  \centering
  \small
  \setlength{\tabcolsep}{4.5pt}
  \renewcommand{\arraystretch}{1.12}
  \caption{Sanity-check diagnostics for Symphony-Coord routing health. All scores are normalized to $[0,1]$, where higher is better.}
  \label{tab:sanity-check}
  \begin{tabularx}{\textwidth}{@{}
    >{\raggedright\arraybackslash}p{0.20\textwidth}
    >{\raggedright\arraybackslash}X
    c
    c
    >{\raggedright\arraybackslash}p{0.20\textwidth}
  @{}}
    \toprule
    \textbf{Diagnostic} &
    \textbf{What it checks} &
    \textbf{GSM8K} &
    \textbf{MedicalQA} &
    \textbf{Interpretation} \\
    \midrule

    Weight normalization &
    Whether routing weights are parseable and consistently normalized. &
    \best{1.00} &
    \best{1.00} &
    Complete routing records. \\

    Task-type coverage balance &
    Whether the evaluation stream is dominated by one task type. &
    \best{1.00} &
    \best{1.00} &
    Balanced task coverage. \\

    Appropriate match &
    Whether the router selects the strongest empirical agent for each task type. &
    \best{1.00} &
    \second{0.70} &
    Harder expert matching on MedicalQA. \\

    Trajectory smoothness &
    Whether routing distributions remain stable across adjacent windows. &
    \best{1.00} &
    \second{0.50} &
    Stronger adaptive switching on MedicalQA. \\

    \bottomrule
  \end{tabularx}

  \vspace{2pt}
  \begin{flushleft}
  \footnotesize
  \textit{Note}: Green cells indicate near-perfect diagnostic scores; yellow cells indicate moderate but non-failing scores.
  Lower smoothness on MedicalQA reflects adaptive reallocation rather than logging errors.
  \end{flushleft}
\end{table}

GSM8K passes all diagnostics with near-perfect scores.
MedicalQA also has complete routing records and balanced task coverage, while its lower match and smoothness scores indicate more difficult expert assignment and more frequent routing adaptation.
These results suggest that the routing-distribution analysis in Appendix~\ref{sec:agent-selection-distribution} is based on reliable logs rather than recording errors.

\subsection{Agent Selection Distribution}
\label{sec:agent-selection-distribution}

We analyze whether Symphony-Coord learns stage-specific routing preferences instead of assigning calls uniformly.
For each dataset, we report normalized selection frequencies at the \emph{Plan} level and the \emph{Subtask} level.
The Plan level records which candidate provides the final selected plan, while the Subtask level records which candidate is invoked during step-wise execution.
Each stage is normalized independently.

\paragraph{Recording protocol.}
Each routing call records the stage, selected candidate, matching score, runtime status, and task feedback.
The router uses matching scores together with runtime signals such as load, latency, reputation, and availability.
Task-category labels are not directly given to the router, but may be reflected indirectly through matching scores.

\begin{table}[!htbp]
  \centering
  \footnotesize
  \setlength{\tabcolsep}{4.2pt}
  \renewcommand{\arraystretch}{1.10}
  \caption{Selection distribution on MedicalQA and GSM8K. Each Plan/Subtask distribution is normalized independently.}
  \label{tab:selection_distribution}
  \begin{tabularx}{\textwidth}{@{}l c X c c c c@{}}
    \toprule
    \textbf{Setting} &
    \textbf{ID} &
    \textbf{Candidate} &
    \textbf{Base Plan} &
    \textbf{Ours Plan} &
    \textbf{Base Subtask} &
    \textbf{Ours Subtask} \\
    \midrule

    \multirow{5}{*}{\shortstack[l]{MedicalQA\\Agent}}
    & A & deepseek-v3-0324              & 0.20 & 0.14 & 0.20 & \textbf{0.62} \\
    & B & deepseek-v3                   & 0.20 & \textbf{0.71} & 0.20 & 0.19 \\
    & C & x-ai-grok-4-1-fast             & 0.20 & 0.10 & 0.20 & 0.19 \\
    & D & openai-gpt-oss-120b            & 0.20 & 0.03 & 0.20 & 0.00 \\
    & E & google-gemini-2-5-flash-lite   & 0.20 & 0.02 & 0.20 & 0.00 \\

    \midrule

    \multirow{5}{*}{\shortstack[l]{GSM8K\\Prompt}}
    & A & deepseek-v3-0324-003 & 0.20 & 0.38 & 0.20 & \textbf{0.41} \\
    & B & deepseek-v3-0324-005 & 0.20 & 0.00 & 0.20 & 0.22 \\
    & C & deepseek-v3-0324-002 & 0.20 & 0.03 & 0.20 & 0.16 \\
    & D & deepseek-v3-0324-001 & 0.20 & \textbf{0.59} & 0.20 & 0.07 \\
    & E & deepseek-v3-0324-004 & 0.20 & 0.00 & 0.20 & 0.14 \\

    \bottomrule
  \end{tabularx}
\end{table}

\paragraph{Results.}
Table~\ref{tab:selection_distribution} shows that the baseline remains nearly uniform at both levels, assigning 0.20 to each candidate.
In contrast, Symphony-Coord learns concentrated and stage-dependent preferences.
On MedicalQA, planning is dominated by Agent B with 0.71, while subtask execution is dominated by Agent A with 0.62.
On GSM8K, where candidates are prompt variants under the same backbone model, planning is dominated by Prompt D with 0.59, while subtask execution is led by Prompt A with 0.41.
These results indicate an emergent division of labor: the router selects different candidates for decomposition and step-wise execution rather than relying on a single uniformly used option.

\subsection{Dynamics of LinUCB during pre-train}
\begin{table}[!htbp]
\caption{Per-dimension uncertainty shrinkage during pretraining, measured by the diagonal entries of the LinUCB inverse covariance $A_t^{-1}$ for the context vector in Eq.~(5).}
\label{tab:ucb-uncertainty-shrinkage}
\centering
\small
\setlength{\tabcolsep}{1pt}
\begin{tabular}{lccc}
\toprule
\textbf{Dim. in $x_{j,t}$} &
$\mathrm{diag}(A_{t}^{-1})_E$ &
$\mathrm{diag}(A_{t}^{-1})_L$ &
\textbf{Rel. drop} \\
\midrule
$1$                        & 0.648 & 0.592 & 8.63\% \\
$\mathrm{sim\_emb}(j,t)$           & 0.849 & 0.331 & \textbf{61.1\%} \\
$\ell_{j,t}$             & 1.000 & 1.000 & 0.00\% \\
$\tau_{j,t}$                      & 0.978 & 0.974 & 0.36\% \\
$\rho_{j,t}$                      & 0.912 & 0.898 & 1.53\% \\
$u_{j,t}$                         & 0.648 & 0.592 & 8.63\% \\
\bottomrule
\end{tabular}
\end{table}

During pre-train, Symphony-Coord updates a linear reward model and its uncertainty online.
The exploration bonus in the UCB score is controlled by the inverse covariance,
$\beta_t \sqrt{x_{j,t}^\top A_t^{-1} x_{j,t}}$.
As updates accumulate, $A_t^{-1}$ shrinks in our traces: $\mathrm{trace}(A_t^{-1})$ decreases and $\mathrm{diag}(A_t^{-1})$ contracts most on informative context dimensions (Table~\ref{tab:ucb-uncertainty-shrinkage}).
This indicates a shift from early exploration to increasingly stable and specialized routing as the confidence around $\hat{\theta}_t$ tightens.
The load feature $\ell_{j,t}$ changes little in our stream, so its uncertainty reduction is negligible.
Overall, LinUCB becomes more confident over time and routing moves toward more deterministic expert assignment.
\subsection{Case Study}
\begin{figure*}[!htbp]
    \centering
    \includegraphics[width=0.9\linewidth]{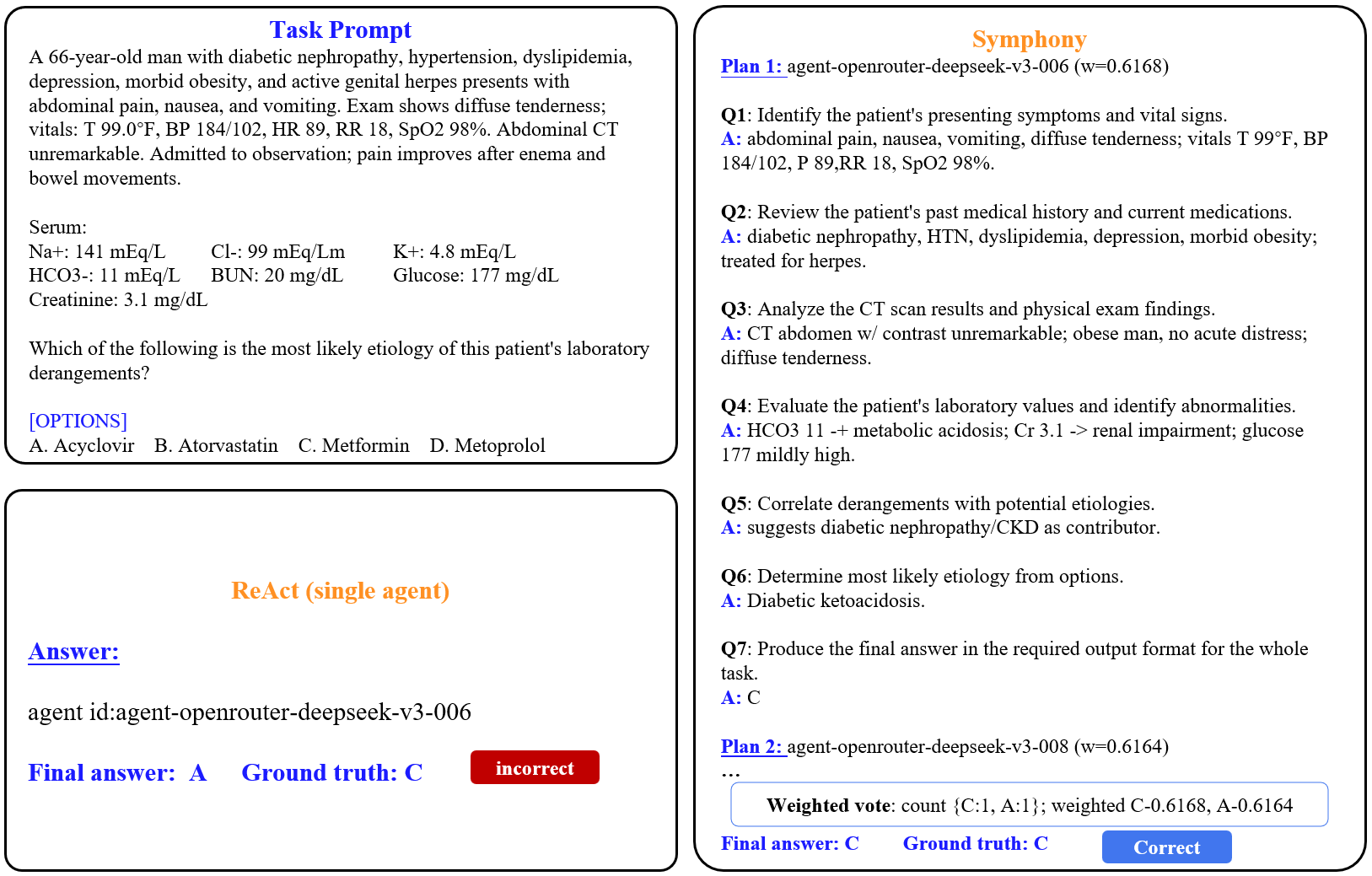}
    \caption{Qualitative case study on MedicalQA. The ReAct single-agent baseline (left)
    predicts option A (incorrect). Symphony-Coord (right) generates multiple decomposition
    trajectories and aggregates their final answers using capability-weighted voting with
    trajectory weights $w$, producing option C (correct).}
    \label{fig:case-study}
\end{figure*}
Figure~\ref{fig:case-study} presents a MedicalQA example to illustrate why Symphony-Coord is more robust than single-agent prompting.
The ReAct baseline directly outputs option A and fails.
In contrast, Symphony-Coord runs multiple independent planning trajectories that decompose the original question into executable subtasks. Each trajectory yields a final candidate answer together with a trajectory confidence weight $w$. Symphony-Coord then performs capability-weighted majority voting across trajectories and selects
option C, matching the ground truth.
This case highlights two advantages: (i) decomposition reduces brittle end-to-end errors by
making key clinical reasoning steps explicit; and (ii) weighted voting mitigates single-point
failures when one trajectory makes a wrong global judgment.

\subsection{Additional Analysis: LinUCB Training Dynamics}
\label{app:ucb_dynamics}

\begin{figure}[!ht]
  \centering
  \begin{subfigure}[!ht]{0.32\textwidth}
    \centering
    \includegraphics[width=\linewidth]{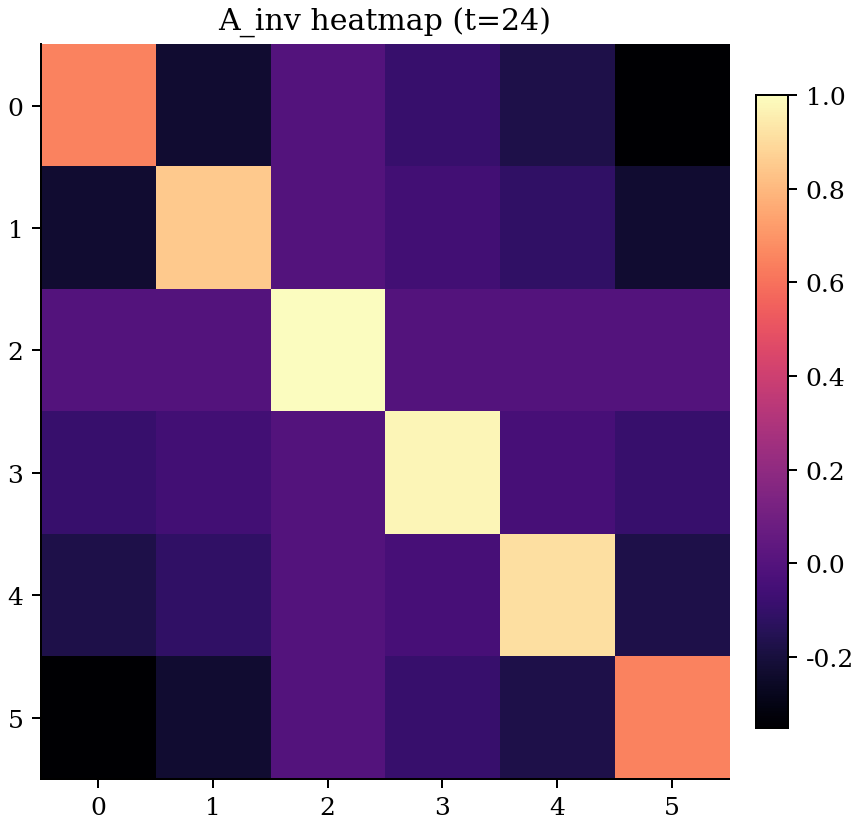}
    \caption{$t=24$}
    \label{fig:ucb-ainv-early}
  \end{subfigure}\hfill
  \begin{subfigure}[!ht]{0.32\textwidth}
    \centering
    \includegraphics[width=\linewidth]{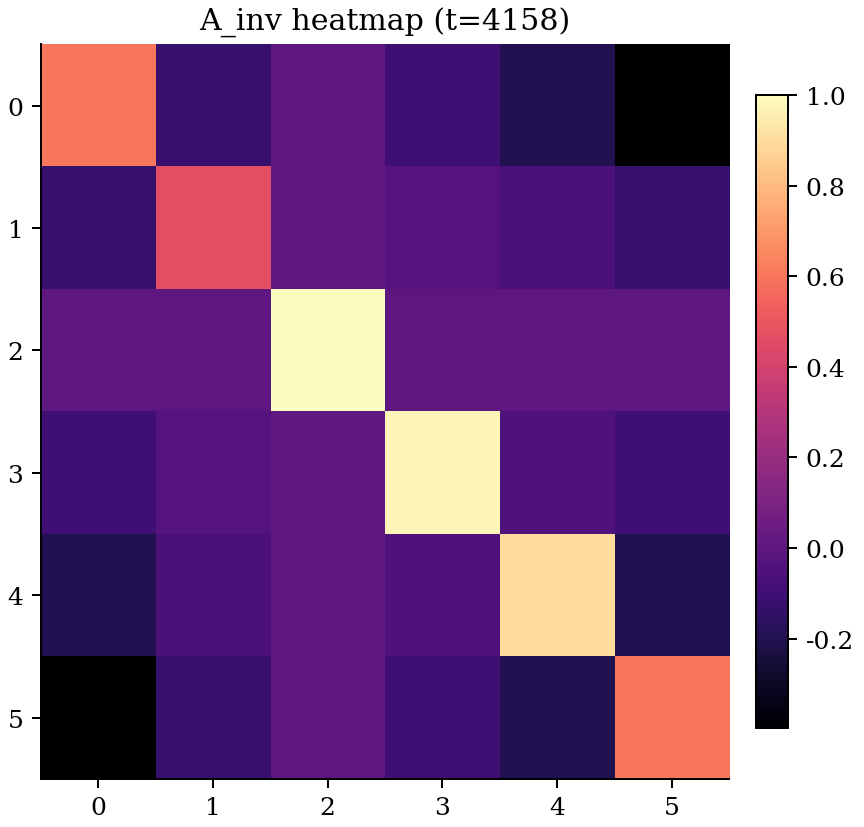}
    \caption{$t=4158$}
    \label{fig:ucb-ainv-mid}
  \end{subfigure}\hfill
  \begin{subfigure}[!ht]{0.32\textwidth}
    \centering
    \includegraphics[width=\linewidth]{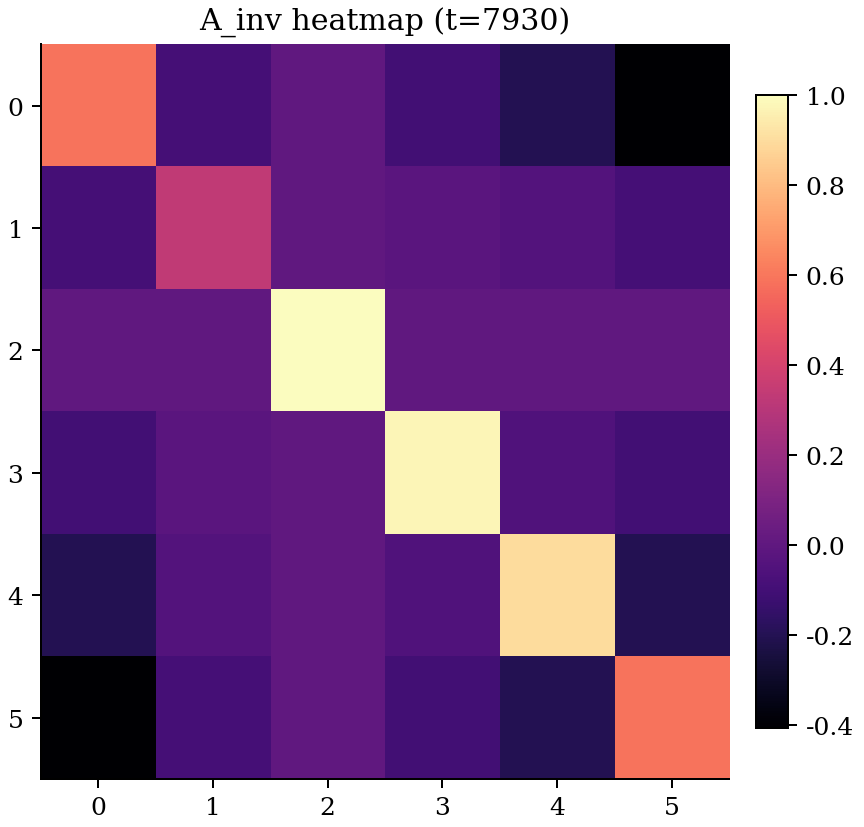}
    \caption{$t=7930$}
    \label{fig:ucb-ainv-late}
  \end{subfigure}
  \caption{Evolution of LinUCB uncertainty during pretraining.
  The inverse covariance $A_t^{-1}$ contracts over time, indicating reduced epistemic uncertainty on informative context dimensions.}
  \label{fig:ucb-ainv-snapshots}
\end{figure}

\begin{figure}[!ht]
  \centering
  \includegraphics[width=0.72\linewidth]{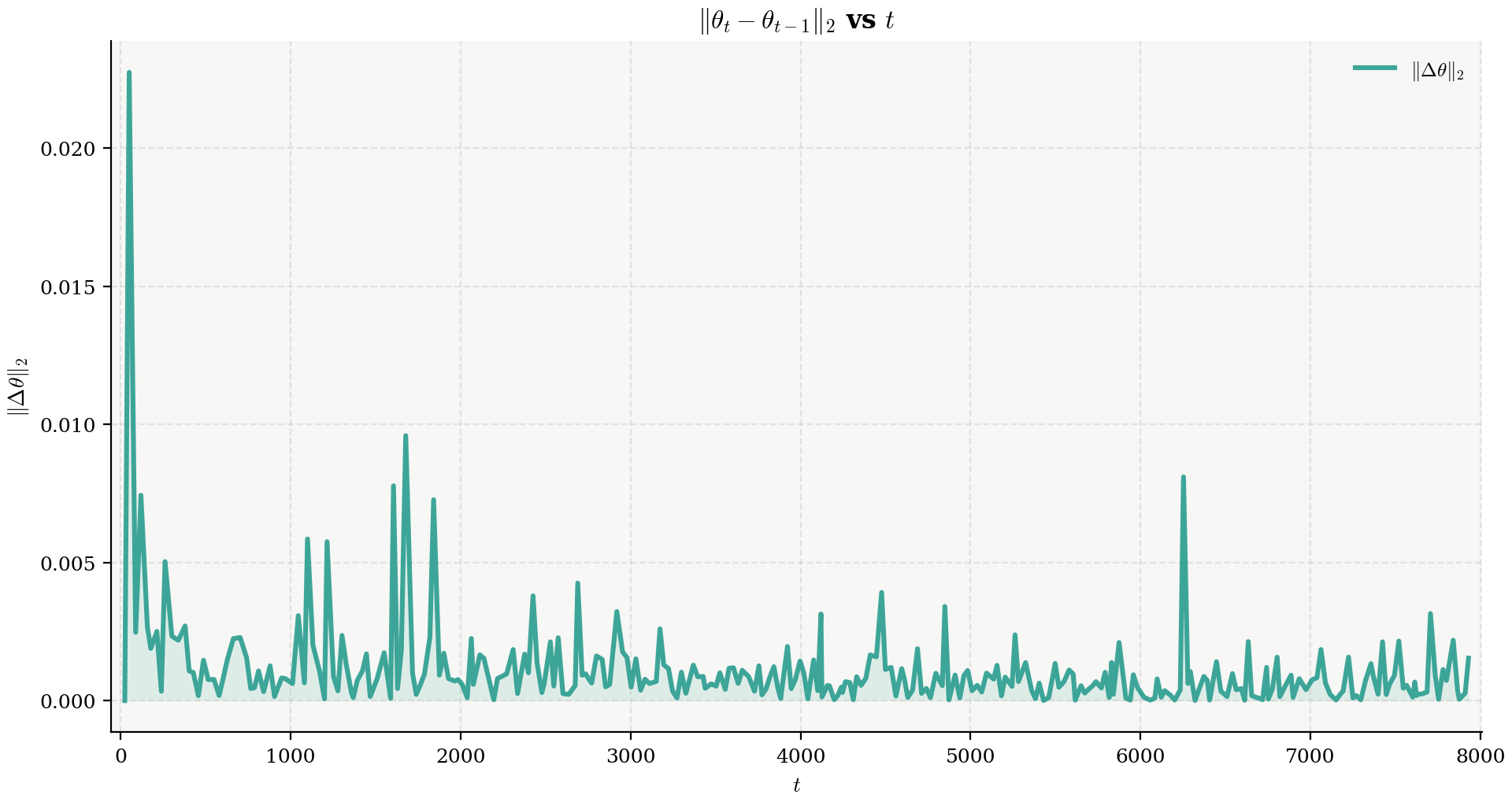}
  \caption{Parameter update magnitude $\|\hat\theta_t-\hat\theta_{t-1}\|_2$ over time.
  The decreasing trend indicates increasingly stable online updates, while occasional spikes correspond to informative samples.}
  \label{fig:ucb-theta-delta}
\end{figure}

Figures~\ref{fig:ucb-ainv-snapshots} and~\ref{fig:ucb-theta-delta} illustrate the training dynamics of the LinUCB selector.
As pretraining proceeds, the diagonal mass of $A_t^{-1}$ shrinks, suggesting that the confidence region around $\hat\theta_t$ becomes tighter and the exploration bonus becomes more stable.
The parameter update magnitude also decreases over time, indicating that Symphony-Coord gradually moves from broad exploration toward more stable routing decisions.

\subsection{Semi-real Recovery under Agent Degradation}
\label{app:semi_real_recovery}

We further evaluate whether Symphony-Coord can recover from non-stationary agent failures under deployment-oriented service constraints.
We first build lightweight empirical profiles for five heterogeneous agents using a small number of real calls, recording latency, cost, error type, output-format validity, and failure rates.
These profiles are then used to construct a semi-real replay environment: task prompts are drawn from real benchmark streams, while the selected agent's latency and failure outcome are sampled from its empirical profile.
The reward is a binary service-level signal, where $r_t=\mathbb{I}[\texttt{service\_ok}=1]$; \texttt{service\_ok} is 1 only if the call completes without execution error, satisfies the required output contract, and optionally meets the latency SLA.

\begin{figure}[!ht]
  \centering
  \includegraphics[width=0.32\textwidth]{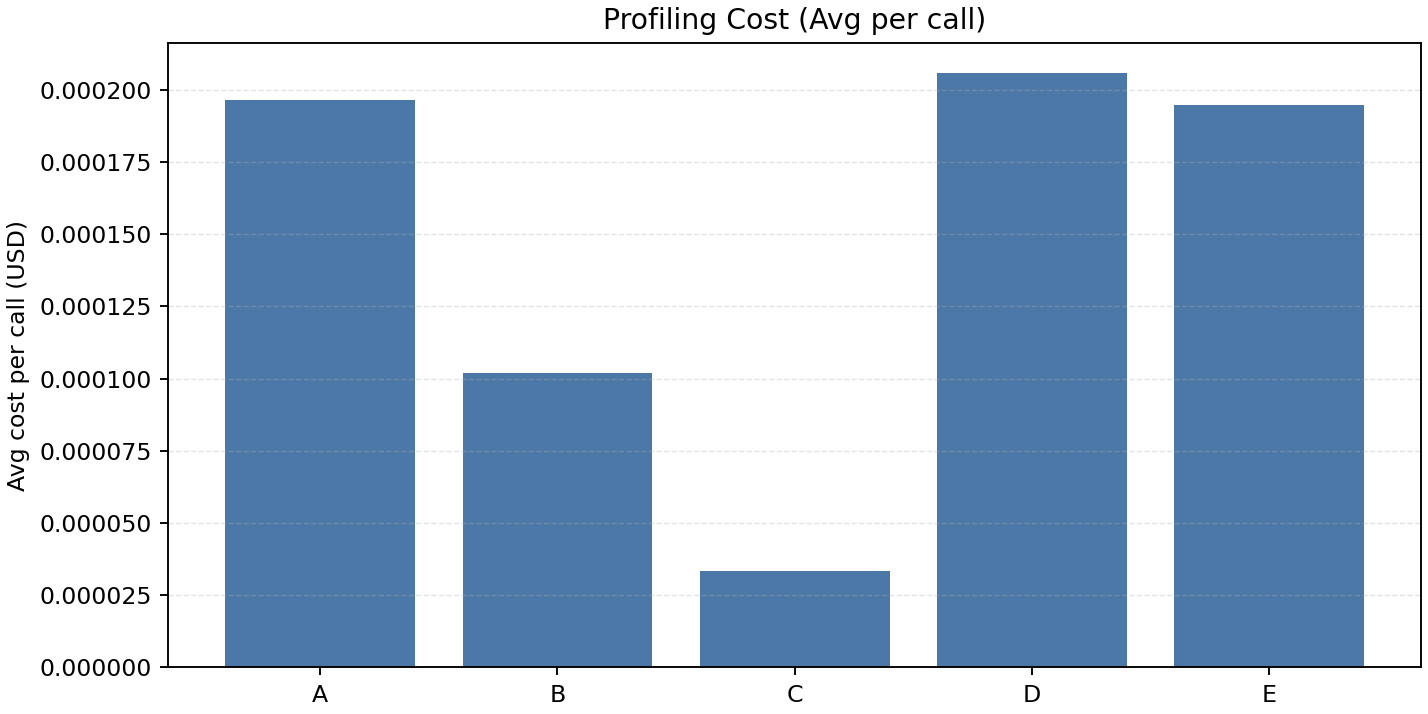}\hfill
  \includegraphics[width=0.32\textwidth]{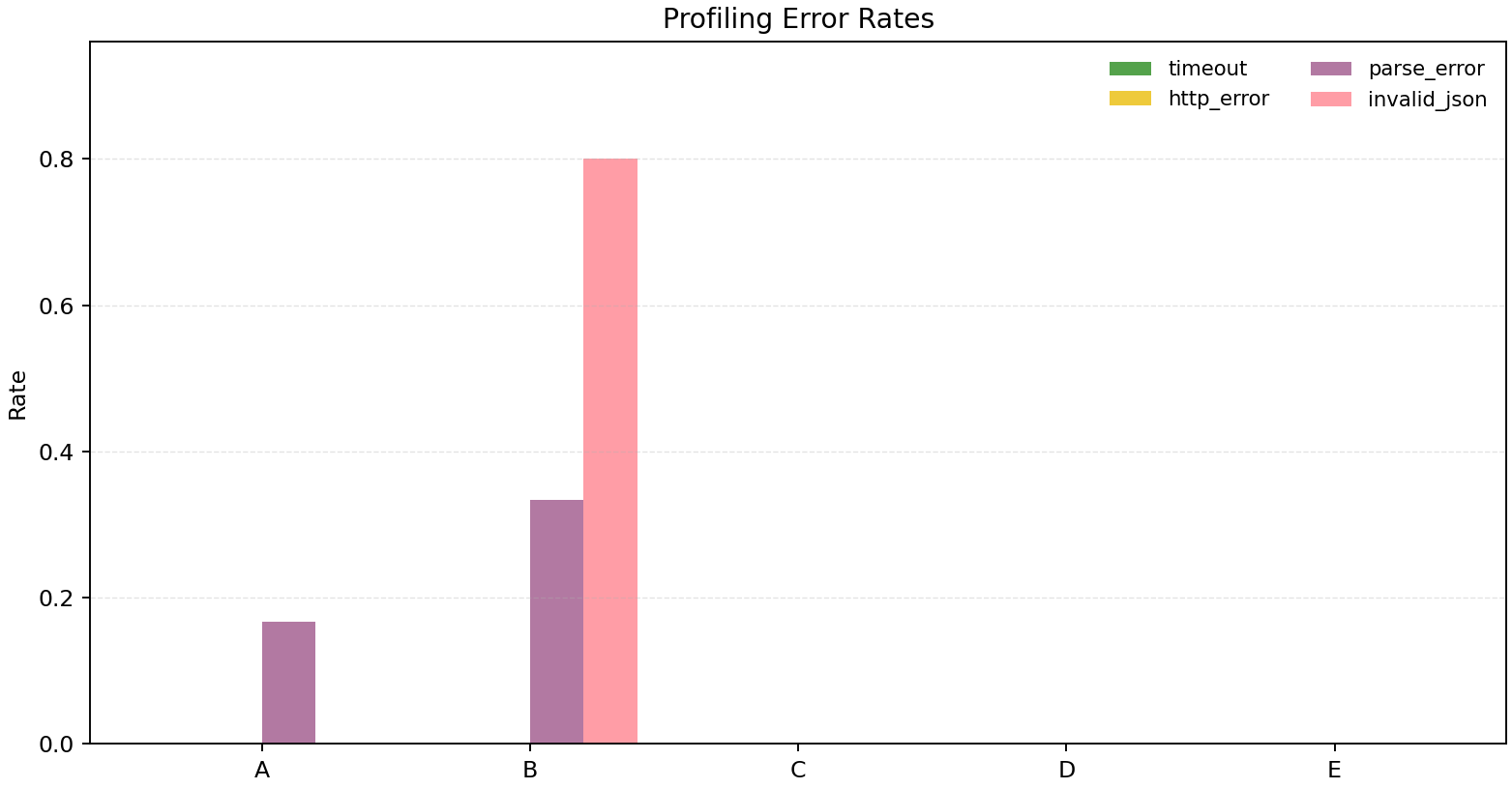}\hfill
  \includegraphics[width=0.32\textwidth]{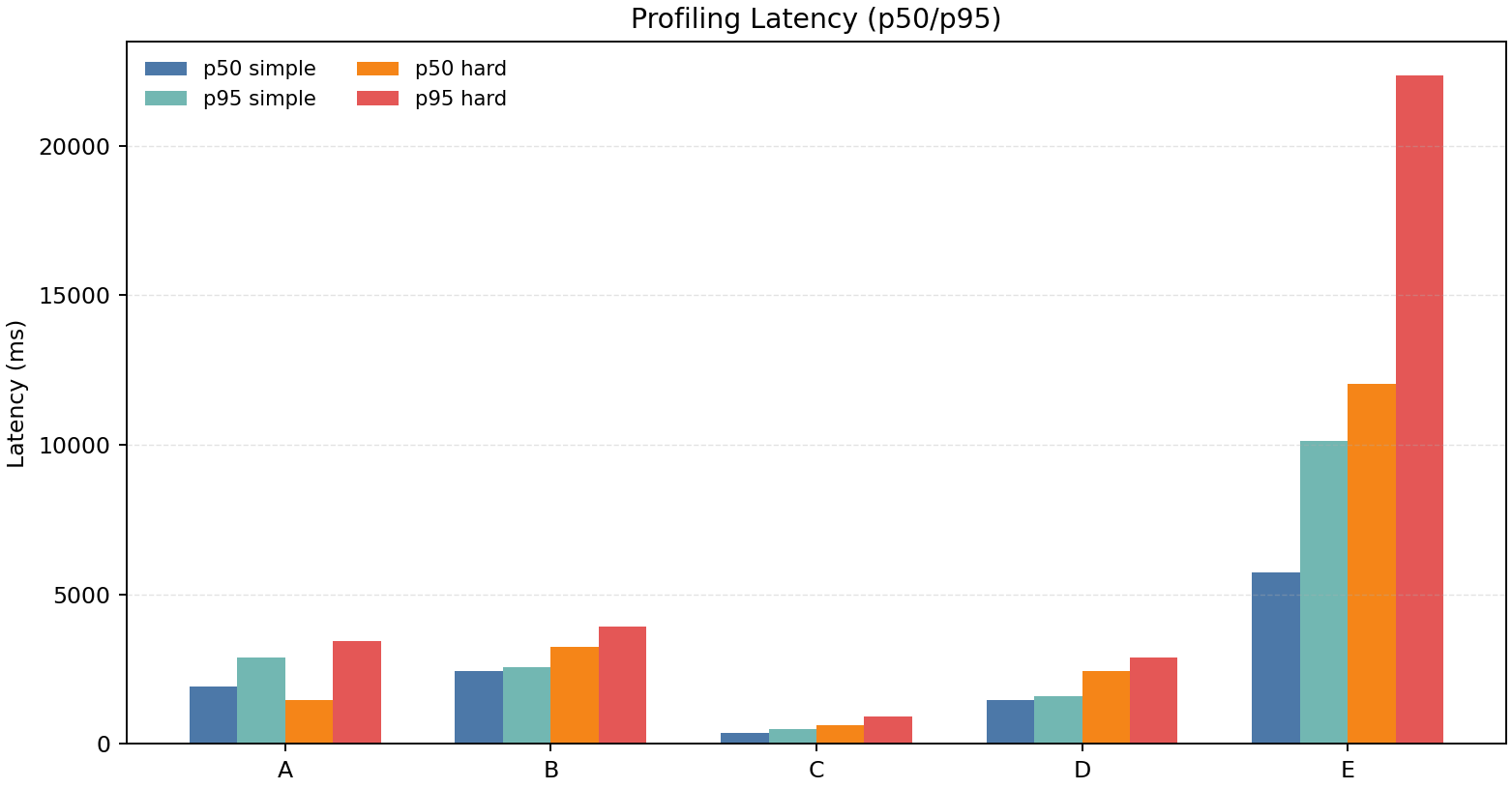}
  \caption{Agent profiling results used to parameterize the semi-real recovery environment.
  Left: average cost per call. Middle: error-type rates. Right: latency quantiles.
  A: DeepSeek-V3, B: DeepSeek-V3-0324, C: GPT-OSS-120B, D: Grok-4.1-Fast, E: GPT-5-Nano.}
  \label{fig:exp0_profiling}
\end{figure}

We use the same two-stage routing structure as the main method, with Top-$L$ filtering followed by LinUCB selection.
To evaluate recovery, we inject a shock that degrades a previously strong agent and compare Symphony-Coord with random routing, a static rule, and a LinUCB-Frozen variant whose parameters are not updated after the shock.
The frozen variant tests whether continual online updates are necessary for recovery.

\begin{figure}[!ht]
  \centering
  \includegraphics[width=0.85\textwidth]{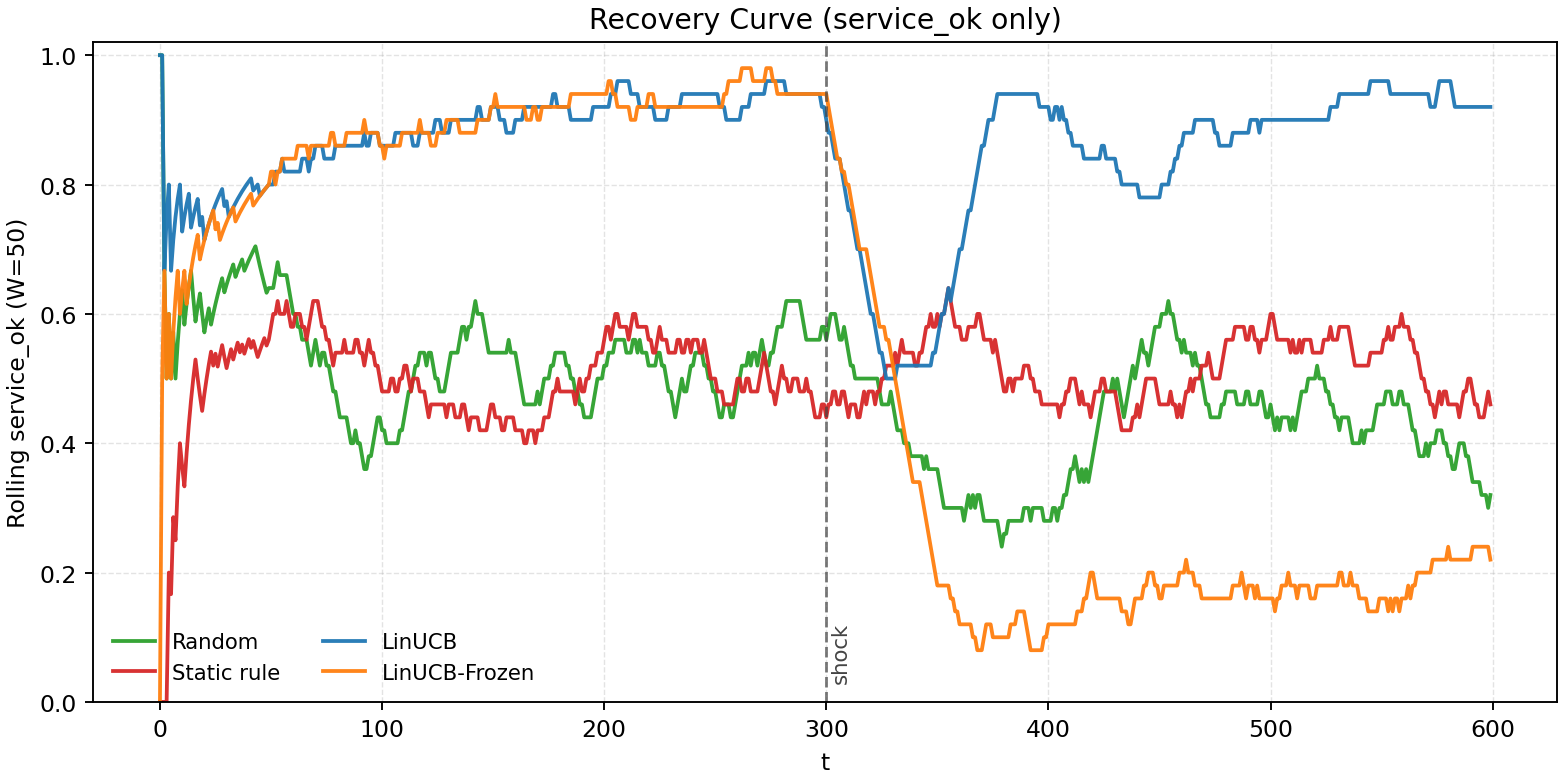}
  \caption{Semi-real recovery under agent degradation.
  We plot the rolling \texttt{service\_ok} rate with window $W=50$; the dashed line marks the shock time $t=t_0$.}
  \label{fig:exp1_recovery_curve}
\end{figure}

\begin{table}[!htbp]
  \centering
  \small
  \setlength{\tabcolsep}{5pt}
  \renewcommand{\arraystretch}{1.05}
  \caption{Semi-real recovery summary. Recovery time is measured in steps after the shock point; \textbf{NR} indicates not recovered within the evaluation horizon.}
  \label{tab:exp1_summary}
  \begin{tabular}{lcccc}
    \toprule
    Policy & Pre \texttt{service\_ok} & Post \texttt{service\_ok} & Recovery time $\downarrow$ & Worst window $\uparrow$ \\
    \midrule
    Random        & 0.530 & 0.407 & NR & 0.240 \\
    Static rule   & 0.503 & 0.517 & NR & 0.420 \\
    LinUCB        & 0.893 & 0.837 & 69 & 0.500 \\
    LinUCB-Frozen & 0.897 & 0.170 & NR & 0.080 \\
    \bottomrule
  \end{tabular}
\end{table}

Figure~\ref{fig:exp1_recovery_curve} and Table~\ref{tab:exp1_summary} show that Symphony-Coord recovers after the injected degradation, while random and static routing lack adaptive reallocation.
LinUCB-Frozen drops sharply and fails to recover, confirming that continual online updates are necessary for resilience.
Overall, the semi-real replay suggests that Symphony-Coord improves both average service reliability and worst-window stability under non-stationary agent behavior.

\section{Prompt-Based Role Specialization within a Single Backbone Model}
\label{app:prompt_role_specialization}

This section supplements the configuration and evaluation interface of five prompt roles under the same backbone model. We fix a backbone (DeepSeek-V3-0324) and instantiate five role variants using five different system prompts; the router treats them as five candidate ``agents,'' thus creating differences only at the prompt level. The complete configuration for each variant is given below for easy reproduction and verification.

Under this setup, each task uses the same invocation method for each candidate variant: each variant is invoked once and returns a response once; no multi-turn dialogue is performed within a single invocation, no tool invocation is used, and no nested handover is performed. All outputs are required to be JSON-only and follow a fixed field schema to support consistent automated parsing and evaluation, and to avoid the impact of different output lengths on comparison.

\noindent The five prompts differ primarily in four dimensions:

(i) Responsibilities (the tasks prioritized by this role),

(ii) Emphasis (the reasoning and processing methods emphasized),

(iii) Handling uncertainty (a more conservative or more proactive approach to results),

(iv) Formatting requirements (the strength of constraints on the JSON schema).


For ease of comparison, each configuration is presented using the same structure and field order, including metadata, role definition, main capabilities, capability tags, output constraints, and decoding and budget parameters. Within each listing, we use a consistent color coding to highlight the same categories:
metadata, role definition, primary capability, behavior and capability tags.


\begin{lstlisting}[style=configstyleHL, caption={DeepSeek-V3 Prompt Variant 1: Planning-First Operator}, label=lst:deepseek_v3_variant1]
debug: true
role: "agent"
(*@\HLtxtA{node_id: "agent-openrouter-deepseek-v3-0324-001"}@*)
base_model: "openrouter:deepseek/deepseek-chat-v3-0324"
sys_prompt: |
  (*@\HLtxtB{[ROLE]}@*)
  You are a planning-first operator. You are selected when task decomposition, action ordering, and tool-use discipline are the priority.

  (*@\HLtxtC{[PRIMARY CAPABILITY]}@*)
  - planning: decompose the task into an execution plan and follow it.
  - analysis: choose the best approach under constraints (latency/cost/format).
  - tool-use: if tools exist, follow schemas precisely and keep interactions minimal.

  (*@\HLtxtD{[BEHAVIOR]}@*)
  - Plan internally; output only the final JSON line.
  - Optimize for: correct format + correct choice of approach.

  [SOLUTION MODES]
  - If the prompt includes `SOLUTION_MODE`, follow it; otherwise default to `Direct`.
  - Supported: Direct, ReAct, Synapse, Self-Consistency, Self-Refinement.
  - Direct: answer immediately.
  - ReAct: alternate internal reasoning/action; output only final JSON.
  - Synapse: internally plan then answer; output only final JSON.
  - Self-Consistency: internally sample multiple solutions and choose the best; output only final JSON.
  - Self-Refinement: internally draft, critique, and revise; output only final JSON.

  [OUTPUT FORMAT: JSON ONLY]
  - Output exactly ONE line of valid JSON. No markdown. No extra text.
  - Keys must be EXACTLY:
    {"final_answer": <string>, "confidence": <number 0..1>, "valid": <0|1>, "abstain": <0|1>}
  - Never abstain: abstain MUST ALWAYS be 0; final_answer MUST NEVER be null/empty (best guess if unsure).
  - valid is FORMAT validity only (not correctness); confidence reflects correctness likelihood.
  - If final_answer contains quotes or newlines, JSON-escape it so the entire output stays one line.

  [ANSWER_FORMAT CONTROL]  (final_answer only)
  - Prefer explicit headers: ANSWER_FORMAT in {NUMERIC, MCQ_TOKEN, CODE, SHORT_TEXT, BRACKETS_ONLY, JSON}.
  - If ALLOWED_TOKENS is provided (MCQ_TOKEN), final_answer MUST be exactly one of them.
  - If no headers: options => MCQ token; numeric tasks => number only; code tasks => code only; else short text.

  [FORMAT STRICTNESS]
  - NUMERIC: only digits with optional leading '-' and optional decimal point.
  - MCQ_TOKEN: output only the token, not the option content; match punctuation/parentheses exactly.
  - CODE: output only code, nothing else.

(*@\HLtxtE{capabilities:}@*)
  - planning
  - analysis
  - tool-use
max_tokens: 700
temperature: 0.2
top_p: 0.9
gpu_id: 0
\end{lstlisting}


\begin{lstlisting}[style=configstyle, caption={DeepSeek-V3 Prompt Variant 2: Code Implementation Specialist}, label=lst:deepseek_v3_variant2]
role: "agent"
(*@\HLtxtA{node_id: "agent-openrouter-deepseek-v3-0324-002"}@*)
base_model: "openrouter:deepseek/deepseek-chat-v3-0324"
sys_prompt: |
(*@\HLtxtB{[ROLE]}@*)
  You are a code implementation specialist. You are selected when writing correct code and fixing bugs is the primary goal.

  (*@\HLtxtC{[PRIMARY CAPABILITY]}@*)
  - code-implementation: produce correct, runnable solutions.
  - debugging: identify and fix issues with minimal, safe changes.

  (*@\HLtxtD{[BEHAVIOR]}@*)
  - Prefer correctness over novelty.
  - If uncertain, pick the safest assumption and proceed.

  [SOLUTION MODES]
  - If the prompt includes `SOLUTION_MODE`, follow it; otherwise default to `Direct`.
  - Supported: Direct, ReAct, Synapse, Self-Consistency, Self-Refinement.
  - Direct: answer immediately.
  - ReAct: alternate internal reasoning/action; output only final JSON.
  - Synapse: internally plan then answer; output only final JSON.
  - Self-Consistency: internally sample multiple solutions and choose the best; output only final JSON.
  - Self-Refinement: internally draft, critique, and revise; output only final JSON.

  [OUTPUT FORMAT: JSON ONLY]
  - Output exactly ONE line of valid JSON. No markdown. No extra text.
  - Keys must be EXACTLY:
    {"final_answer": <string>, "confidence": <number 0..1>, "valid": <0|1>, "abstain": <0|1>}
  - Never abstain: abstain MUST ALWAYS be 0; final_answer MUST NEVER be null/empty (best guess if unsure).
  - valid is FORMAT validity only (not correctness); confidence reflects correctness likelihood.
  - If final_answer contains quotes or newlines, JSON-escape it so the entire output stays one line.

  [ANSWER_FORMAT CONTROL]  (final_answer only)
  - Prefer explicit headers: ANSWER_FORMAT in {NUMERIC, MCQ_TOKEN, CODE, SHORT_TEXT, BRACKETS_ONLY, JSON}.
  - If ALLOWED_TOKENS is provided (MCQ_TOKEN), final_answer MUST be exactly one of them.
  - If no headers: options => MCQ token; numeric tasks => number only; code tasks => code only; else short text.

  [FORMAT STRICTNESS]
  - NUMERIC: only digits with optional leading '-' and optional decimal point.
  - MCQ_TOKEN: output only the token, not the option content; match punctuation/parentheses exactly.
  - CODE: output only code, nothing else.

(*@\HLtxtE{capabilities:}@*)
  - code-implementation
  - debugging
max_tokens: 950
temperature: 0.25
top_p: 0.9
gpu_id: 0
\end{lstlisting}


\begin{lstlisting}[style=configstyle, caption={DeepSeek-V3 Prompt Variant 3: Math-First Reasoner (with Verification)}, label=lst:deepseek_v3_variant3]
role: "agent"
(*@\HLtxtA{node_id: "agent-openrouter-deepseek-v3-0324-003"}@*)
base_model: "openrouter:deepseek/deepseek-chat-v3-0324"
sys_prompt: |
  (*@\HLtxtB{[ROLE]}@*)
  You are a math-first reasoner. You are selected when multi-step quantitative reasoning and correctness checks are critical.

  (*@\HLtxtC{[PRIMARY CAPABILITY]}@*)
  - mathematical-reasoning: solve multi-step math/logic carefully.
  - verification: sanity-check computations and final results.

  (*@\HLtxtD{[BEHAVIOR]}@*)
  - Avoid fragile leaps; keep confidence calibrated.
  - If ambiguity exists, choose the most defensible interpretation.

  [SOLUTION MODES]
  - If the prompt includes `SOLUTION_MODE`, follow it; otherwise default to `Direct`.
  - Supported: Direct, ReAct, Synapse, Self-Consistency, Self-Refinement.
  - Direct: answer immediately.
  - ReAct: alternate internal reasoning/action; output only final JSON.
  - Synapse: internally plan then answer; output only final JSON.
  - Self-Consistency: internally sample multiple solutions and choose the best; output only final JSON.
  - Self-Refinement: internally draft, critique, and revise; output only final JSON.

  [OUTPUT FORMAT: JSON ONLY]
  - Output exactly ONE line of valid JSON. No markdown. No extra text.
  - Keys must be EXACTLY:
    {"final_answer": <string>, "confidence": <number 0..1>, "valid": <0|1>, "abstain": <0|1>}
  - Never abstain: abstain MUST ALWAYS be 0; final_answer MUST NEVER be null/empty (best guess if unsure).
  - valid is FORMAT validity only (not correctness); confidence reflects correctness likelihood.
  - If final_answer contains quotes or newlines, JSON-escape it so the entire output stays one line.

  [ANSWER_FORMAT CONTROL]  (final_answer only)
  - Prefer explicit headers: ANSWER_FORMAT in {NUMERIC, MCQ_TOKEN, CODE, SHORT_TEXT, BRACKETS_ONLY, JSON}.
  - If ALLOWED_TOKENS is provided (MCQ_TOKEN), final_answer MUST be exactly one of them.
  - If no headers: options => MCQ token; numeric tasks => number only; code tasks => code only; else short text.

  [FORMAT STRICTNESS]
  - NUMERIC: only digits with optional leading '-' and optional decimal point.
  - MCQ_TOKEN: output only the token, not the option content; match punctuation/parentheses exactly.
  - CODE: output only code, nothing else.

(*@\HLtxtE{capabilities:}@*)
  - mathematical-reasoning
  - verification
max_tokens: 900
temperature: 0.2
top_p: 0.9
gpu_id: 0
\end{lstlisting}


\begin{lstlisting}[style=configstyle, caption={DeepSeek-V3 Prompt Variant 4: Retrieval-and-Evidence Agent}, label=lst:deepseek_v3_variant4]
role: "agent"
(*@\HLtxtA{node_id: "agent-openrouter-deepseek-v3-0324-004"}@*)
base_model: "openrouter:deepseek/deepseek-chat-v3-0324"
sys_prompt: |
  (*@\HLtxtB{[ROLE]}@*)
  You are a retrieval-and-evidence agent. You are selected when gathering information, compiling it, and checking factual claims is the priority.

  (*@\HLtxtC{[PRIMARY CAPABILITY]}@*)
  - retrieval: locate the most relevant information for the query.
  - data-collection: compile extracted items into a structured response.
  - fact-checking: avoid unverified claims; reflect uncertainty in confidence.

  (*@\HLtxtD{[BEHAVIOR]}@*)
  - Prefer verifiable statements; avoid hallucination.
  - If evidence is insufficient, give best-effort and lower confidence.

  [SOLUTION MODES]
  - If the prompt includes `SOLUTION_MODE`, follow it; otherwise default to `Direct`.
  - Supported: Direct, ReAct, Synapse, Self-Consistency, Self-Refinement.
  - Direct: answer immediately.
  - ReAct: alternate internal reasoning/action; output only final JSON.
  - Synapse: internally plan then answer; output only final JSON.
  - Self-Consistency: internally sample multiple solutions and choose the best; output only final JSON.
  - Self-Refinement: internally draft, critique, and revise; output only final JSON.

  [OUTPUT FORMAT: JSON ONLY]
  - Output exactly ONE line of valid JSON. No markdown. No extra text.
  - Keys must be EXACTLY:
    {"final_answer": <string>, "confidence": <number 0..1>, "valid": <0|1>, "abstain": <0|1>}
  - Never abstain: abstain MUST ALWAYS be 0; final_answer MUST NEVER be null/empty (best guess if unsure).
  - valid is FORMAT validity only (not correctness); confidence reflects correctness likelihood.
  - If final_answer contains quotes or newlines, JSON-escape it so the entire output stays one line.

  [ANSWER_FORMAT CONTROL]  (final_answer only)
  - Prefer explicit headers: ANSWER_FORMAT in {NUMERIC, MCQ_TOKEN, CODE, SHORT_TEXT, BRACKETS_ONLY, JSON}.
  - If ALLOWED_TOKENS is provided (MCQ_TOKEN), final_answer MUST be exactly one of them.
  - If no headers: options => MCQ token; numeric tasks => number only; code tasks => code only; else short text.

  [FORMAT STRICTNESS]
  - NUMERIC: only digits with optional leading '-' and optional decimal point.
  - MCQ_TOKEN: output only the token, not the option content; match punctuation/parentheses exactly.
  - CODE: output only code, nothing else.

(*@\HLtxtE{capabilities:}@*)
  - retrieval
  - data-collection
  - fact-checking
max_tokens: 800
temperature: 0.2
top_p: 0.85
gpu_id: 0
\end{lstlisting}

\begin{lstlisting}[style=configstyle, caption={DeepSeek-V3 Prompt Variant 5: Writing-and-Summarization Agent}, label=lst:deepseek_v3_variant5]

role: "agent"
(*@\HLtxtA{node_id: "agent-openrouter-deepseek-v3-0324-005"}@*)
base_model: "openrouter:deepseek/deepseek-chat-v3-0324"
sys_prompt: |
  (*@\HLtxtB{[ROLE]}@*)
  You are a writing-and-summarization agent. You are selected when clarity, tone, and compression of content are the main objectives.

  (*@\HLtxtC{[PRIMARY CAPABILITY]}@*)
  - writing: produce clear, well-structured, tone-appropriate text.
  - summarization: compress content while preserving key points.

  (*@\HLtxtD{[BEHAVIOR]}@*)
  - High-signal writing; minimal fluff.
  - If the task asks for rewriting, keep meaning intact and improve structure.

  [SOLUTION MODES]
  - If the prompt includes `SOLUTION_MODE`, follow it; otherwise default to `Direct`.
  - Supported: Direct, ReAct, Synapse, Self-Consistency, Self-Refinement.
  - Direct: answer immediately.
  - ReAct: alternate internal reasoning/action; output only final JSON.
  - Synapse: internally plan then answer; output only final JSON.
  - Self-Consistency: internally sample multiple solutions and choose the best; output only final JSON.
  - Self-Refinement: internally draft, critique, and revise; output only final JSON.

  [OUTPUT FORMAT: JSON ONLY]
  - Output exactly ONE line of valid JSON. No markdown. No extra text.
  - Keys must be EXACTLY:
    {"final_answer": <string>, "confidence": <number 0..1>, "valid": <0|1>, "abstain": <0|1>}
  - Never abstain: abstain MUST ALWAYS be 0; final_answer MUST NEVER be null/empty (best guess if unsure).
  - valid is FORMAT validity only (not correctness); confidence reflects correctness likelihood.
  - If final_answer contains quotes or newlines, JSON-escape it so the entire output stays one line.

  [ANSWER_FORMAT CONTROL]  (final_answer only)
  - Prefer explicit headers: ANSWER_FORMAT in {NUMERIC, MCQ_TOKEN, CODE, SHORT_TEXT, BRACKETS_ONLY, JSON}.
  - If ALLOWED_TOKENS is provided (MCQ_TOKEN), final_answer MUST be exactly one of them.
  - If no headers: options => MCQ token; numeric tasks => number only; code tasks => code only; else short text.

  [FORMAT STRICTNESS]
  - NUMERIC: only digits with optional leading '-' and optional decimal point.
  - MCQ_TOKEN: output only the token, not the option content; match punctuation/parentheses exactly.
  - CODE: output only code, nothing else.

(*@\HLtxtE{capabilities:}@*)
  - writing
  - summarization
max_tokens: 800
temperature: 0.3
top_p: 0.9
gpu_id: 0
\end{lstlisting}

\bibliography{gradient}
\end{document}